\newtheorem{theorem}{Theorem}
\newtheorem{definition}[theorem]{Definition}
\newtheorem{lemma}[theorem]{Lemma}
\newtheorem{corollary}[theorem]{Corollary}
\newtheorem{remk}[theorem]{Remark}
\newtheorem{exmp}[theorem]{Example}
\newcommand{\R}{{\mathbb R}}
\newcommand{\poly}{{\mathrm{poly}}}
\newcommand{\polylog}{{\mathrm{polylog}}}
\newcommand{\loglog}{{\mathop{\mathrm{loglog}}}}
\newcommand{\eps}{\varepsilon}
\DeclareMathOperator{\argmax}{argmax}
\newcommand{\calD}{\mathcal{D}}
\newcommand{\E}{\mathbb{E}}
\newcommand{\calH}{\mathcal{H}}
\newcommand{\calN}{\mathcal{N}}
\newcommand{\tN}{\widetilde{N}}
\newcommand{\tX}{\widetilde{X}}
\newcommand{\iidsim}{\stackrel{\text{iid}}{\sim}}
\newcommand{\fwsim}{\stackrel{\text{4-w}}{\sim}}
\newcommand{\pwsim}{\stackrel{\text{p.w.}}{\sim}}
\newcommand{\Bernoulli}{\text{Bernoulli}}
\newcommand{\Rademacher}{\text{Rademacher}}
\DeclareMathOperator{\Var}{Var}
\newcommand{\manuallabel}[2]{\def\@currentlabel{#2}\label{#1}}
\title{Beating CountSketch for Heavy Hitters in Insertion Streams}
\author{Vladimir Braverman\thanks{“This material is based upon work supported in part by the
National Science Foundation under Grant No. 1447639, by the Google
Faculty Award and by DARPA grant N660001-1-2-4014. Its contents are
solely the responsibility of the authors and do not represent the
official view of DARPA or the Department of Defense.} \and Stephen R.\ Chestnut \thanks{“This material is based upon work supported in part by the
National Science Foundation under Grant No. 1447639}
\and Nikita Ivkin \thanks{“This material is based upon work supported in part by the
National Science Foundation under Grant No. 1447639 
and by DARPA grant N660001-1-2-4014. Its contents are
solely the responsibility of the authors and do not represent the
official view of DARPA or the Department of Defense.} \and David P.\ Woodruff}
\begin{document}
\maketitle
\begin{abstract}
Given a stream $p_1, \ldots, p_m$ of items from a universe
$\mathcal{U}$, which, without loss of generality we identify with
the set of integers $\{1, 2, \ldots, n\}$, 
we consider the problem of returning all
$\ell_2$-heavy hitters, i.e., those items $j$ for which
$f_j \geq \eps \sqrt{F_2}$, where $f_j$ is the number of occurrences
of item $j$ in the stream, and $F_2 = \sum_{i \in [n]} f_i^2$.
Such a guarantee is considerably stronger than the 
$\ell_1$-guarantee, which finds those $j$ for which 
$f_j \geq \eps m$. In
2002, Charikar, Chen, and Farach-Colton suggested the {\sf CountSketch}
data structure, which finds all such $j$ using $\Theta(\log^2 n)$ bits of space
(for constant $\eps > 0$). The only
known lower bound is $\Omega(\log n)$ bits of space, which comes from 
the need to specify the identities of the items found.

In this paper we show it is possible to achieve $O(\log n \log \log n)$ bits of space for this problem.
Our techniques, based on Gaussian processes, lead to a number of other new results for data streams, including
\begin{enumerate}
\item The first algorithm
for estimating $F_2$ simultaneously at all points in a stream
using only $O(\log n\log\log n)$ bits of space, improving 
a natural union bound and the algorithm of Huang, Tai, and Yi (2014).
\item A way to estimate the $\ell_{\infty}$ norm of a stream up to
additive error $\eps \sqrt{F_2}$ with $O(\log n\loglog n)$ bits of space, resolving
Open Question 3 from the IITK 2006 list for insertion only streams.
\end{enumerate}
\end{abstract}

\thispagestyle{empty}
\newpage
\setcounter{page}{1}

%main theorem statements as commands here
%theoremone
\newcommand{\heavyhitterstheorem}{For any $\epsilon>0$, there is a $1$-pass algorithm in the insertion-only model that, with probability at least $2/3$, finds all those indices $i \in [n]$ for which $f_i \geq \epsilon \sqrt{F_2}$, and reports no indices $i \in [n]$ for which $f_i \leq \frac{\epsilon}{2} \sqrt{F_2}$. The space complexity is  $O(\frac{1}{\epsilon^2}\log\frac{1}{\epsilon}\log n\log\log n)$~bits.}

%theoremtwo
\newcommand{\ftwoalwaystheorem}{
For any $\epsilon>0$, there is a $1$-pass algorithm in the insertion-only model that, with probability at least $2/3$, outputs a $(1\pm\epsilon)$-approximation of $F_2$ at all points in the stream.
  The algorithm uses $O(\frac{1}{\epsilon^2}\log n(\log{\frac{1}{\epsilon}}+\log\log n))$ bits of space.}

\section{Introduction}
There are numerous applications of data streams, for which the elements $p_i$ may be numbers, points, edges in a graph, and so on. Examples include internet search logs, network traffic, sensor networks, and scientific data streams (such as in astronomy, genomics, physical simulations, etc.). 
The sheer size of the dataset often imposes very stringent requirements on an algorithm's resources.
In many cases only a single pass over the data is feasible, such as in network applications, since if the data on a network is not physically stored somewhere, it may be impossible to make a second pass over it. 
There are multiple surveys and tutorials in the algorithms, database, and networking communities on the recent activity in this area; we refer the reader to \cite{m05,bbdmw02} for more details and motivations underlying this area. 

Finding heavy hitters, also known as the top-$k$, most popular items, elephants, or iceberg queries, is arguably one of the most fundamental problems in data streams. 
It has applications in flow identification at IP routers \cite{ev03}, iceberg queries \cite{fsgmu98}, iceberg datacubes \cite{br99,hpdw01}, association rules, and frequent itemsets \cite{as94,son95,toi96,hid99,hpy00}.

Formally, we are given a stream $p_1, \ldots, p_m$ of items from a universe $\mathcal{U}$, which, without loss of generality we identify with the set $\{1, 2, \ldots, n\}$. We make the common assumption that $\log m = O(\log n)$, though our results generalize naturally to any $m$ and $n$. Let $f_i$ denote the frequency, that is, the number of occurrences, of item $i$. We would like to find those items $i$ for which $f_i$ is large, i.e., the ``heavy hitters''. 
In this paper we will consider algorithms that are allowed one pass over the stream and must use as little space (memory) in bits as possible, to maintain a succinct summary (``sketch'') so that after processing the stream, we can output the identities of all of the heavy hitters from the summary with large probability.

There are various notions of what it means for $f_i$ to be large. One such notion is that we should return all indices $i \in [n]$ for which $f_i \geq \epsilon m$ for a parameter $\epsilon \in (0,1)$, and no index $i$ for which $f_i \leq (\epsilon - \phi)m$, for a parameter $\phi$.
It is typically assumed that $\phi \geq c \epsilon$ for an absolute constant $c > 0$, and we will do so in this paper. 
This notion has been extensively studied, so much so, that the same streaming algorithm for it was re-invented multiple times. The first algorithm was given by Misra and Gries~\cite{mg82}, who achieved $O((\log n)/\epsilon)$ bits of space. 
The algorithm was rediscovered by Demaine et al.~\cite{dlm02}, and then later rediscovered by Karp et al.~\cite{ksp03}. 
Cormode and Muthukrishan~\cite{cm03} state that ``papers on frequent items are a frequent item!''. While these algorithms are deterministic,
there are also several randomized algorithms, including the Count-Min sketch~\cite{cm05}, sticky sampling~\cite{mm12}, lossy counting~\cite{mm12}, sample and hold~\cite{ev03}, multi-stage bloom filters~\cite{cfm09}, sketch-guided sampling~\cite{kx06}, and CountSketch~\cite{ccf04}. 
A useful (slightly suboptimal) intuition is that one can sample $O((\log 1/\epsilon)/\epsilon)$ random stream positions, remember the identities of these positions, and then maintain the counts of these items. 
By coupon collector arguments, all heavy hitters will be found this way, and one can filter out the spurious ones (those with $f_i \leq (\epsilon-\phi)m$). 

One of these techniques, CountSketch~\cite{ccf04}, refined in~\cite{tz12}, gives a way of finding the {\it $\ell_2$-heavy hitters} of a stream. 
Those are the items for which $f_i^2 \geq \eps^2 F_2$. 
Notice that this guarantee is significantly stronger than the aforementioned guarantee that $f_i \geq \epsilon m$, which we will call the $\ell_1$-guarantee.  
Indeed, if $f_i \geq \epsilon m$, then $f_i^2 \geq \epsilon^2 m^2 \geq \epsilon^2 F_2$. 
So, an algorithm for finding the $\ell_2$-heavy hitters will find all items satisfying the $\ell_1$-guarantee. 
On the other hand, given a stream of $n$ distinct items in which $f_i = \sqrt{n}$ for an $i \in [n]$, yet $f_j = 1$ for all $j \neq i$, an algorithm satisfying the $\ell_2$-heavy hitters guarantee will identify item $i$ with constant $\epsilon$, but an algorithm which only has the $\ell_1$-guarantee would need to set $\epsilon = 1/\sqrt{n}$, using $\Omega(\sqrt{n} \log n)$ bits of space. 
In fact, $\ell_2$-heavy hitters are in some sense the best one can hope for with a small amount of space in a data stream, as it is known for $p > 2$ that finding those $i$ for which $f_i^p \geq \eps^p F_p$ requires $n^{1-2/p}$ bits of space~\cite{bjks04,cks03}. 

The CountSketch has broad applications in compressed sensing~\cite{glps10,p11,mp14} and numerical linear algebra~\cite{cw13,mm13,nn13,bn13}, and are often used as a subroutine in other data stream algorithms, such as  $\ell_p$-sampling~\cite{mw10,ako11,jst11}, cascaded aggregates~\cite{jw09}, and frequency moments~\cite{iw05,bgks06}. 

Given the strong guarantees and many applications of $\ell_2$-heavy hitter algorithms, it is natural to ask what the best space complexity for them is. 
Both the original algorithm of~\cite{ccf04} and the followup of~\cite{tz12} achieve $\Theta(\log^2 n)$ bits of space for constant values of $\epsilon$. 
On the other hand, the only known lower bound is $\Omega(\log n)$ bits, which is needed just to identify the heavy hitter. 
Despite the success we have had in obtaining space-optimal streaming algorithms for estimating moments and $p$-norms, this has remained a glaringly open problem. 
It is known that if one allows deletions in the stream, in addition to insertions, then $\Theta(\log^2 n)$ bits of space is optimal~\cite{bipw11,jst11}. 
However, in many cases we just have a stream of insertions, such as in the model studied in the seminal paper of Alon, Matias, and Szegedy~\cite{ams99}.

\subsection{Our Contributions}
The main result of this paper is the near resolution of the open question above. We show:

\begin{theorem}[$\ell_2$-Heavy Hitters]\label{thm:main}
\heavyhitterstheorem
\end{theorem}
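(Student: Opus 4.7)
The plan is to beat the $\Theta(\log^2 n)$-bit \textsf{CountSketch} bound by roughly a $\log n/\log\log n$ factor by exploiting Gaussian concentration. The proof will proceed in two layers familiar from the heavy-hitters literature: an outer hashing layer that isolates heavy hitters, and an inner identification layer inside each bucket. For the outer layer, I would partition $[n]$ into $B=O(\epsilon^{-2}\log(1/\epsilon))$ buckets using a pairwise-independent hash function. A standard tail calculation shows that, with constant probability, each $\ell_2$-heavy hitter $H$ is the sole heavy hitter in its bucket and the residual $F_2$-mass in that bucket is $O(F_2/B)$, so within the bucket $H$ accounts for a constant fraction of the local $F_2$. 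Repeating the bucketing $O(\log(1/\epsilon))$ times and amplifying boosts the per-heavy-hitter success probability to $2/3$. This reduces the overall task to an inner \emph{single-heavy-hitter identification} problem in each bucket, to be solved in $O(\log n\log\log n)$ bits; the overall budget is then $O(B\cdot\log n\log\log n)$, matching the theorem.

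For the inner layer, I would identify the dominant item $H$ in a bucket bit by bit. For each bit position $k\in[\log n]$, partition the items by whether the $k$-th bit of the identity is $0$ or $1$, and maintain $r$ independent Gaussian sums $Z_k^{(j)}=\sum_i g_i^{(k,j)} f_i$ in which the Gaussians have opposite signs on the two halves. Conditioned on typical behaviour of the non-$H$ Gaussians, the sign of $Z_k^{(j)}$ reveals the $k$-th bit of $H$ with constant-error probability, because $H$'s contribution has standard deviation $f_H$ that is comparable to the residual bucket standard deviation $\sqrt{F_2^{\text{bucket}}-f_H^2}$. A median or majority over the $r$ repetitions boosts each bit test's error to $2^{-\Omega(r)}$, and reading all $\log n$ bits recovers $H$.

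The main obstacle---and the source of the improvement over \textsf{CountSketch}---is to bound the failure probability of the simultaneous event that \emph{all} $\log n$ bit tests succeed while keeping each stored Gaussian sum compact. A naive union bound across $\log n$ bit positions forces $r=\Theta(\log\log n)$ repetitions and storing each counter to $\Theta(\log n)$ bits, which merely matches \textsf{CountSketch}. The crucial step is a Gaussian process chaining argument (Dudley's entropy integral, or Talagrand's generic chaining) applied to the process indexed by wrong-bit decisions, exploiting the natural correlation between nearby decisions. This should show that the supremum of the noise across all bit tests is only $O(\sqrt{\log n})$ times the per-test standard deviation, and remains controlled under aggressive discretization of each counter down to $O(\log\log n)$ bits of precision. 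Choosing the right metric on bit tests, verifying sub-Gaussian increments, and integrating the covering-number bound is where the bulk of the technical work will lie.
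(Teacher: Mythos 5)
Your proposal keeps the $\log n$ bit tests \emph{in parallel}, exactly as \textsf{CountSketch} does, and then hopes to compress each of the resulting $r\log n$ Gaussian counters to $O(\log\log n)$ bits. This step cannot work as stated: each sum $Z_k^{(j)}=\sum_i g_i^{(k,j)} f_i$ has magnitude on the order of $\sqrt{F_2}$, which is $\poly(n)$, so recording even the \emph{sign} and \emph{order of magnitude} of a single counter already takes $\Theta(\log n)$ bits. Discretization and chaining control the \emph{precision} you need, not the \emph{dynamic range}; they cannot squeeze a $\poly(n)$-sized quantity into $O(\log\log n)$ bits. If you instead store each counter at full resolution, you are back at $\Omega(r\log^2 n)$, i.e., you have not beaten \textsf{CountSketch}. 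Moreover, your chaining argument is aimed at a process indexed by ``wrong-bit decisions,'' which is a finite index set of size $O(\log n)$; Dudley/generic chaining on a $O(\log n)$-point set gives exactly the $\sqrt{\log\log n}$-type factor a union bound gives you, so no new leverage appears there.

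The paper's route is genuinely different: it makes the $R=O(\log n)$ bit-extraction rounds \emph{sequential in time} rather than parallel. A single Bernoulli process $X_t=\sum_i Z_i f_i^{(t)}$ is maintained over the stream, and the Chaining Inequality (Theorem~\ref{thm: chaining inequality}) is applied over the \emph{time index} $t\in[m]$ to show $\E\sup_t|X_t|\le C'\sqrt{F_2}$, crucially with \emph{no} $\sqrt{\log m}$ loss despite $m=\poly(n)$ time points. This lets the algorithm (i) use a short amplifier stage of $L=O(\log\log n)$ paired processes to isolate a substream where $H$ is $\polylog(n)$-heavy, (ii) use a timer process to carve the stream into $R=O(\log n)$ disjoint intervals during each of which $f_H$ grows multiplicatively, and (iii) during each interval, run a \emph{single} pair of small sketches to learn one bit of $H$ and then discard the sketch, keeping only one bit plus the hash seed. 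At any instant only $O(\log\log n)$ sketches are alive, which is where the space saving comes from. You will want to rethink the inner layer around this temporal decomposition, and note that the chaining bound you need is over the frequency-vector process $(f^{(t)})_t$, not over bit tests.
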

The intuition of the proof is as follows. 
Suppose there is a single $\ell_2$-heavy hitter $H$, $\eps > 0$ is a constant, and we are trying to find the identity of $H$. 
%Suppose we knew the first point $t^*$ in the stream at which the frequency of $H$ becomes $\frac{1}{2}f_H$, where $f_H$ denotes the frequency of $H$ in the entire stream. 
Suppose further we could identify a substream $S'$ where $H$ is very heavy, specifically we want that the frequencies in the substream satisfy $\frac{f_H^2}{\poly(\log n)}\geq \sum_{j \in S',j \neq H} f_j^2$. 
Suppose also that we could find certain $R=O(\log{n})$ ``breakpoints'' in the stream corresponding to jumps in the value of $f_H$, that is, we knew a sequence  $p_{q_1} < p_{q_2} < \cdots < p_{q_{R}}$ which corresponds to positions in the stream for which $f_H$ increases by a multiplicative factor of $\left (1 + \frac{1}{\Theta(R)} \right)$. 

Given all of these assumptions, in between breakpoints we could partition the universe randomly into two pieces and run an $F_2$-estimate \cite{ams99} (AMS sketch) on each piece.
Since $f_H^2$ is more than a $\poly(\log n)$ factor times $\sum_{j \in S', j \neq H} f_j^2$, while in between each breakpoint the squared frequency of $H$ is $\Omega \left(\frac{f_H^2}{\log n} \right)$, it follows that $H$ contributes a constant fraction of the $F_2$-value in between consecutive breakpoints, and so, upon choosing the constants appropriately, the larger of the magnitudes of the two AMS sketches will identify a bit of information about $H$, with probability say 90\%. 
This is our algorithm Sieve. 
Since we have $\Theta(\log n)$ breakpoints, in total we will learn all $\log n$ bits of  information needed to identify $H$. 
One persepective on this algorithm is that it is a sequential implementation of the multiple repetitions of CountSketch, namely, we split the stream at the breakpoints and perform one ``repetition'' on each piece while discarding all but the single bit of information we learn about $H$ in between breakpoints. 

However, it is not at all clear how to (1) identify $S'$ and (2) find the breakpoints. 
For this, we resort to the theory of Gaussian and Bernoulli processes. 
Throughout the stream we can maintain a sum of the form $X_t = \sum_{i=1}^n f^{(t)}_i Z_i$, where the $Z_i$ are independent Normal$(0,1)$ or Rademacher random variables.
Either distribution is workable.
One might think as one walks through a stream of length $\poly(n)$, there will be times for which this sum is much larger than $\sqrt{F_2}$; indeed, the latter is the standard deviation and a na\"{i}ve union bound, if tight, would imply positions in the stream for which $|X_t|$ is as large as $\sqrt{F_2 \log n}$.
It turns out that this cannot happen! 
Using a generic chaining bound developed by Fernique and Talagrand \cite{t96}, we can prove that
there exists a universal constant $C'$ such that
\[\E\sup_t|X_t| \leq C'\sqrt{F_2}.\]
We call this the Chaining Inequality.
%However the same result can be obtained for Bernoulli process $\sum_{i=1}^n f_i b_i$, where the $b_i$ are independent $\Rademacher$ random variables (here and further we will call such sum as Bernoulli process, according to notation from \cite{talagrand2006generic}).

We now randomly partition the universe into $O(\frac{1}{\epsilon^2})$ ``parts'', and run our algorithm independently on each part. 
This ensures that, for a large constant $C$, $H$ is $C$-heavy, meaning, $f_H^2 \geq C(F_2-f_H^2)$, where here we abuse notation and use $F_2$ to denote the moment of the part containing $H$. 
We run the following two stage algorithm independently on each part.
The first stage, called Amplifier, conists of $L=O(\log \log n)$ independent and concurrent repetitions of the following: randomly split the set of items into two buckets and maintain a two Bernoulli processes, one for the updates in each bucket.
By the Chaining Inequality, a Markov bound, and a union bound, the total $F_2$ contribution, excluding that of $H$, in each piece in each repetition {\it at all times in the stream} will be $O(\sqrt{F_2-f_H^2})$.
Since $H$ is sufficiently heavy, this means after some time $t^*$, its piece will be larger in magnitude in most, say 90\%, of the $L$ repetitions. 
Furthermore, $H$ will be among only $n/2^{\Omega(L)}=n/\poly\log n$ items with this property. 
At this point we can restrict our attention to a substream containing only those items.

The substream has the property that its $F_2$ value, not counting $H$, will be a factor $\frac{1}{\log^2 n}$ times the $F_2$ value of the original stream, making $H$ $\Omega(\log^2 n)$-heavy.
Finally, to find the breakpoints, our algorithm Timer maintains a Bernoulli process on the substream, and every time the Bernoulli sum increases by a multiplicative $\left(1+\frac{1}{\theta(R)} \right)$ factor, creates a new breakpoint.
By the Chaining Inequality applied in each consecutive interval of breakpoints, the $F_2$ of all items other than $H$ in the interval is at most $O(\log n)$ larger than its expectation; while the squared frequency of $H$ on the interval is at least $\frac{f_H^2}{\log n}$. 
Since $H$ is $\Omega(\log^2 n)$-heavy, this makes $f_H^2$ to be the dominant fraction of $F_2$ on the interval. 

One issue with the techniques above is they assume a large number of random bits can be stored. 
A standard way of derandomizing this, due to Indyk \cite{i06} and based on Nisan's pseudorandom generator PRG \cite{n92}, would increase the space complexity by a $\log n$ factor, which is exactly what we are trying to avoid. 
Besides, it is not clear we can even apply Indyk's method since our algorithm decides at certain points in the stream to create new $F_2$-sketches based on the past, whereas Indyk's derandomization relies on maintaining a certain type of linear sum in the stream, so that reordering of the stream does not change the output distribution.
A first observation is that the only places we need more than limited independence are in  maintaining a collection of $O(\log{n})$ hash functions and the stochastic process $\sum_{i=1}^n f_i Z_i$ throughout the stream. 
The former can, in fact, be derandomized along the lines of Indyk's method~\cite{i06}.

In order to reduce the randomness needed for the stochastic process we use a Johnson-Lindenstrauss transformation to reduce the number of Rademacher (or Gaussian) random variables needed.
The idea is to reduce the frequency vector to $O(\log n)$ dimensions with JL and run the Bernoulli process in this smaller dimensional space.
\iftoggle{shortversion}{}{The Bernoulli process becomes $\sum_{i=1}^{O(\log n)}Z_i (T f)_i$, where $T$ is the JL matrix.}
The same technique is used by Meka for approximating the supremum of a Gaussian process~\cite{meka2012ptas}.
It works because the Euclidean length of the frequency vector describes the variance and covariances of the process, hence the transformed process has roughly the same covariance structure as the original process.
\iftoggle{shortversion}{}{An alternative perspective on this approach is that we use the JL transformation in reverse, as a pseudorandom generator that expands $O(\log n)$ random bits into $O(n)$ random variables which fool our algorithm using the Bernoulli process.}

\iftoggle{shortversion}{
In the full version
}{
In Section~\ref{sec:f2all}
}
we also use our techniques to prove the following.
\begin{theorem}[$F_2$ at all points]\label{thm: f2always}
\ftwoalwaystheorem
\end{theorem}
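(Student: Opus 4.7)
The plan is to run a standard AMS-style $F_2$ sketch with $k=O(\epsilon^{-2}\log L)=O(\epsilon^{-2}(\log\log n+\log\epsilon^{-1}))$ parallel repetitions using median-of-means, output the current sketch value $\hat F_2(t)$ on any query, and prove uniform $(1\pm\epsilon)$-accuracy by union-bounding over a deterministic set of only $L=\polylog(n)/\poly(\epsilon)$ ``reference times'' instead of all $\poly(n)$ stream positions, using the Chaining Inequality to control the behavior of the sketch between reference times. Concretely, let $v_\ell=(1+\epsilon^2/C)^\ell$ for a sufficiently large constant $C$ to be chosen and $\tau_\ell=\min\{t: F_2(t)\geq v_\ell\}$. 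Since $F_2$ is monotone nondecreasing in an insertion stream and bounded by $\poly(n)$, there are $L=O(\epsilon^{-2}\log n)$ reference times, and throughout the phase $[\tau_\ell,\tau_{\ell+1})$ the incremental vector $g^{(t)}=f^{(t)}-f^{(\tau_\ell)}$ satisfies $\|g^{(t)}\|_2^2\leq(\epsilon^2/C)v_\ell$. Standard median-of-means applied to the parallel sketches $X_{j,t}=\sum_i f_i^{(t)}Z_{j,i}$ gives $\hat F_2(\tau_\ell)\in(1\pm\epsilon/8)F_2(\tau_\ell)$ with failure probability $1/(10L)$ per $\ell$, so a union bound over $\ell$ makes the sketch accurate at every reference time with probability $\geq 9/10$, at a cost of $O(k\log n)$ bits in total.

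The key new step is controlling the sketch's drift within a single phase. Decompose $X_{j,t}=X_{j,\tau_\ell}+Y_{j,t}$, where $Y_{j,t}=\sum_i g_i^{(t)}Z_{j,i}$ is itself a Rademacher process indexed by $t$ in the phase. Because $\|g^{(t)}\|_2^2\leq(\epsilon^2/C)v_\ell$, applying the Chaining Inequality to $Y$ yields $\E\sup_t|Y_{j,t}|\leq C'(\epsilon/\sqrt{C})\sqrt{v_\ell}$, and concentration of the supremum about its mean (via Talagrand's convex-distance inequality) gives a sub-Gaussian tail. Expanding $X_{j,t}^2-X_{j,\tau_\ell}^2=2X_{j,\tau_\ell}Y_{j,t}+Y_{j,t}^2$ and combining with the tail bound $|X_{j,\tau_\ell}|=O(\sqrt{v_\ell\log L})$, each summand is uniformly at most $O((\epsilon\log L/\sqrt{C})\,v_\ell)$ over $t$ in the phase; choosing $C=\Theta(\log^2 L)$ makes this $O(\epsilon v_\ell)$, and averaging plus median-of-means over the $k$ repetitions further concentrates so that $\sup_{t\in[\tau_\ell,\tau_{\ell+1})}|\hat F_2(t)-\hat F_2(\tau_\ell)|\leq(\epsilon/8)v_\ell$ with failure probability $1/(10L)$ per phase. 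Inflating $C$ by a $\polylog$ factor only multiplies $L$ by the same factor, so $\log L$ and therefore $k$ remain $O(\log\log n+\log\epsilon^{-1})$ up to constants. A final union bound over phases, together with $F_2(t)\geq v_\ell$ throughout phase $\ell$, then gives $\hat F_2(t)\in(1\pm\epsilon)F_2(t)$ at every $t$ with probability $\geq 2/3$.

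The $n$ Rademacher coordinates of each of the $k$ sketches are generated pseudorandomly by the same Johnson--Lindenstrauss reduction used in the main theorem: compress the frequency vector to $O(\log n)$ coordinates and run the Bernoulli process there, which preserves the Chaining Inequality up to constants and leaves the total space at $O(\epsilon^{-2}\log n(\log\log n+\log\epsilon^{-1}))$ bits. The main technical obstacle I expect is the cross-term analysis: turning the naive bound $|X_{j,\tau_\ell}Y_{j,t}|\leq|X_{j,\tau_\ell}|\sup_t|Y_{j,t}|=O(\sqrt{v_\ell}\cdot\epsilon\sqrt{v_\ell})$ into a clean multiplicative $\epsilon$-error rather than a $\sqrt{\epsilon}$-error requires simultaneously controlling the sub-Gaussian tails of both factors and choosing the phase grid fine enough (the $C=\Theta(\log^2 L)$ above) that their product is dominated by $\epsilon v_\ell$, all without inflating $k$ or $L$ past the stated bound.
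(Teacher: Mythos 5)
Your proposal is correct and follows the same high-level strategy as the paper: partition the stream into $O(\epsilon^{-2}\log n)$ phases during which $F_2$ increases by a factor $1+\Theta(\epsilon^2)$, use the Chaining Inequality to control the increment process within each phase, union-bound over phases, and boost per-phase success probability by a median over $O(\log\log n + \log\epsilon^{-1})$ copies. The one place you diverge is in how the within-phase drift of the sketch is bounded. You expand $X_{j,t}^2 - X_{j,\tau_\ell}^2 = 2X_{j,\tau_\ell}Y_{j,t} + Y_{j,t}^2$ per repetition and push for a \emph{high-probability} bound on the supremum of each $Y_{j,t}$ via concentration of suprema, which forces the cross-term gymnastics you flag as the ``main obstacle'' and the $C=\Theta(\log^2 L)$ refinement of the phase grid. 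The paper avoids this entirely by treating the entire averaged estimate $Y_t = \frac{1}{N}\sum_i\|X_i^{(t)}\|_2^2 = \|X^{(t)}\|_2^2$ as the squared $\ell_2$-norm of a single $Nd$-dimensional Bernoulli process, then applying the $\ell_2$ triangle inequality $\|X^{(t)}\|_2 \leq \|X^{(u)}\|_2 + \|X^{(u:t)}\|_2$: the tameness/Chaining bound gives $\|X^{(u:t)}\|_2 \leq C\sqrt{F_2^{(u:v)}}$ with \emph{constant} probability, and the median over the $R=O(\log(\epsilon^{-2}\log n))$ independent copies does all the probability boosting. This keeps $C$ a fixed constant (no $\log^2 L$ refinement), needs no Talagrand concentration (plain Markov suffices), and sidesteps the cross-term analysis. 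Your route works and lands at the same space bound since the extra $\polylog\log n$ factor is absorbed, but the paper's decomposition is the cleaner one and is the reason the cross-term issue you anticipated does not arise there.
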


\paragraph{Outline}
In Section~\ref{sec:prelim}, we give preliminaries and define our notation. 
In Section~\ref{sec:main} we prove Theorem~\ref{thm:main}.
The proof of the Chaining Inequality for Gaussian and Bernoulli processes, the central tool used in Section~\ref{sec:main}, appears in Section~\ref{sec:chaining}.
\iftoggle{shortversion}{
In the full version we give complete proofs, including details about how to implement the algorithm with a reduced number of random bits, and we prove Theorem~\ref{thm: f2always}. 
}{
In Section \ref{sec:randomness} we give details about how to implement the algorithm with a reduced number of random bits. 
In Section \ref{sec:f2all} we prove Theorem~\ref{thm: f2always}. 
}

\subsection{Preliminaries}\label{sec:prelim}
\begin{table}
\begin{tabular}{ |l|l|l||l|l|l|  }
\hline
$L$ & amplifier size & $O(\log\log{n})$ &
$\tau$ & round expansion & $100(R+1)$\\\hline
$\delta$ & small constant & $\Omega(1)$ &
$S^{t_1:t_2}$ & interval of the stream & $(p_{t_1+1},\ldots,p_{t_2})$\\\hline
$H$ & heavy hitter id & $\in[n]$& 
$e_j$ & $j$th unit vector & \\\hline
$T$ & JL transformation & $\in\R^{k\times n}$ &
$f_H^{(k)}$ & frequency on $S^{0:k}$ & \\\hline
$m$ & stream length & $\poly(n)$ &
$f^{(k_1:k_2)}$	& frequency on $S^{k_1:k_2}$& $f^{(k_2)} - f^{(k_1)}$\\\hline
$n$ & domain size &  &
$R$ & $\#$ of Sieve rounds & $O(\log{n})$ \\ \hline
$k$ & JL dimension & $O(\log n)$ &
$C'$ & Chaining Ineq. const. & $O(1)$\\\hline
$d$ & dim. of Bern.\ proc.\ & $O(\log \delta^{-1})$ &
$C$ & large const. & $\geq d^{\frac{3}{2}}C'/\delta$\\\hline
\end{tabular}
\caption{Notation and parameters used throughout the paper.}\label{tab: notation}
\end{table}

Given a stream $S=(p_1,p_2,\ldots,p_m)$, with $p_i\in[n]$ for all $i$, we define the frequency vector at time $0\leq t\leq m$ to be the vector $f^{(t)}$ with coordinates $f_j^{(t)} := \#\{t'\leq t\mid p_{t'}=j\}$.
When $t=m$ we simply write $f:=f^{(m)}$.
Given two times $t_1\leq t_2$ we use $f^{(t_1:t_2)}$ for the vector $f^{(t_2)}-f^{(t_1)}$.
Notice that all of these vectors are nonnegative because $S$ has no deletions.
An item $H\in[n]$ is said to be an $\alpha$-heavy hitter, for $\alpha>0$, if $f_H^2\geq\alpha\sum_{j\neq H}f_j^2$.
\iftoggle{shortversion}{
  }{%shortversion
  The goal of our main algorithm, CountSieve, is to identify a single $\alpha$-heavy hitter for $\alpha$ a large constant.
We will assume $\log{m}=O(\log{n})$, although our methods apply even when this is not true.
It will be occasionally helpful to assume that $n$ is sufficiently large.
This is without loss of generality since in the case $n=O(1)$ the problem can be solved exactly in $O(\log{m})$ bits.

A streaming algorithm is allowed to read one item at a time from the stream in the order given.
The algorithm is also given access to a stream of random bits, it must pay to store any bits that it accesses more than once, and it is only required to be correct with constant probability strictly greater than $1/2$.
Note that by repeating such an algorithm $k$ times and taking a majority vote, one can improve the success probability to $1-2^{-\Omega(k)}$.
\iftoggle{shortversion}{}{We measure the storage used by the algorithm on the worst case stream, i.e.\ worst case item frequencies and order, with the worst case outcome of its random bits.}
}

The AMS sketch~\cite{ams99} is a linear sketch for estimating $F_2$.
The sketch contains $O(\epsilon^{-2}\log{\delta^{-1}})$ independent sums of the form $\sum_{j=1}^n S_jf_j$, where $S_1,S_2,\ldots,S_n$ are four-wise independent Rademacher random variables.
By averaging and taking medians it achieves a $(1\pm\epsilon)$-approximation to $F_2$ with probability at least $(1-\delta)$.

A \emph{Gaussian process} is a stochastic process $(X_t)_{t\in T}$ such that every finite subcollection $(X_t)_{t\in T'}$, for $T'\subseteq T$, has a multivariate Gaussian distribution.
When $T$ is finite (as in this paper), every Gaussian process can be expressed as a linear transformation of a multivariate Gaussian vector with mean 0 and covariance $I$.
Similarly, a \emph{Bernoulli process} $(X_t)_{t\in T}$, $T$ finite, is a stochastic process defined as a linear tranformation of a vector of i.i.d.\ Rademacher (i.e.\ uniform $\pm1$) random variables.
Underpinning our results is an analysis of the Gaussian process $X_t = \sum_{j\in[n]} Z_j f_j^{(t)}$, for $t=0,\ldots,m$, where $Z_1,\ldots,Z_n\iidsim\calN(0,1)$ are independent standard Normal random variables.
The Bernoulli analogue to our Gaussian process replaces the distribution of the random vector $Z$ as $Z_1,\ldots,Z_n\iidsim\Rademacher$.
\iftoggle{shortversion}{}{Properties of the Normal distribution make it easier for us to analyze the Gaussian process rather than its Bernoulli cousin.
On the other hand, we find Bernoulli processes more desirable for computational tasks.}
Existing tools, which we discuss further in Section~\ref{sec:chaining} and \iftoggle{shortversion}{
  the full version}{
  Section~\ref{sec:randomness}}, allow us to transfer the needed properties of a Gaussian process to its Bernoulli analogue.

A $k\times n$ matrix $T$ is a \emph{$(1\pm\gamma)$-embedding} of a set of vectors $X\subseteq\R^n$ if
\iftoggle{shortversion}{
$(1-\gamma)\|x-y\|_2\leq \|Tx-Ty\|_2\leq (1+\gamma)\|x-y\|_2$,
  }{%\iftoggle{shortversion}
  \[(1-\gamma)\|x-y\|_2\leq \|Tx-Ty\|_2\leq (1+\gamma)\|x-y\|_2,\]
}%\iftoggle{shortversion}
for all $x,y\in X\cup\{0\}$.
We also call such a linear transformation a \emph{JL Transformation}.
It is well-known that taking the entries of the matrix $T$ to be i.i.d.\ Normal random variables with mean 0 and variance $1/k$ produces a JL transformation with high probability.
Many other randomized and deterministic constructions exist, we will use the recent construction of Kane, Meka, and Nelson~\cite{kane2011almost}.

The development and analysis of our algorithm relies on several parameters, some of which have already been introduced.
Table~\ref{tab: notation} lists those along with the rest of the parameters and some other notation for reference.
In particular, the values $C$, $d$, $\delta$, and $\gamma$ are constants that we will choose in order to satisfy several inequalities.
We will choose $\delta$ and $\gamma$ to be small, say $1/200$, and $d=O(\log 1/\delta)$.
$C$ and $C'$ are sufficiently large constants, in particular $C\geq dC'/\delta$.

\section{$\ell_2$ heavy hitters algorithm}\label{sec:main}
\iftoggle{shortversion}{}{
This section describes the algorithm CountSieve, which solves the heavy-hitter problem for the case of a single heavy hitter, i.e.\ top-1,  in $O(\log{n}\log\log n)$ bits of space and proves Theorem~\ref{thm:main}.
By definition, the number of $\epsilon$-heavy hitters is at most $1+1/\epsilon$, so, upon hashing the universe into $O(1/\epsilon^2)$ parts, the problem of finding all $\epsilon$-heavy hitters reduces to finding a single heavy hitter in each part.
When $\epsilon=\Omega(1)$, using this reduction incurs only a constant factor increase in space over the single heavy hitter problem.

Suppose the stream has only a single heavy hitter $H\in[n]$.
Sequentially, over the course of reading the stream, CountSieve will hash the stream into two separate substreams for $O(\log{n})$ repetitions, and in each repetition it will try to determine which of the two substreams has the heavy hitter using the AMS Sketch.
With high probability, $H$ has a unique sequence of hashes, so if we correctly identify the stream containing $H$ every time then we can correctly identify $H$. This holds even if we only correctly identify the stream containing $H$ a large constant fraction of the repetitions.
CountSketch accomplishes this by performing the $O(\log{n})$ rounds of hashing in parallel, with $\Omega(\log^2n)$ bits of storage.
One of our innovations is to implement this scheme \emph{sequentially} by specifying intervals of updates, which we call \emph{rounds}, during each of which we run the two AMS Sketches.
In total there could be as many as  $\Theta(\log^2{n})$ of these rounds, but we will discard all except the last $R=O(\log n)$ of them.
}%\iftoggle{shortversion}

Algorithm~\ref{algo: BP} is a simplified version of the Bernoulli process used by CountSieve.
It has all of the properties we need for correctness of the algorithm, but it requires too many random bits. 
Chief among these properties is the control on the supremum of the process.
\begin{algorithm}[h!]
  \begin{algorithmic}
    \Procedure{BP}{Stream $S$}
    \State Sample $Z_1,\ldots,Z_n\iidsim\Rademacher$
    \State \Return {$\langle Z, f^{(t)}\rangle$ at each time $t$}
    \EndProcedure
  \end{algorithmic}
  \caption{One Bernoulli process.}\label{algo: BP}
\end{algorithm}
The Chaining Inequality gives us a uniform bound on the maximum value of the BP process in terms of the standard deviation of the last value.
This property is formalized by the definition of a tame process.

\begin{definition}\label{def: tame}
  Let $f^{(t)}\in\R^n$, for $t\in[m]$, and let $T:\R^n\to\R^k$ be a matrix. %$(1\pm\gamma)$-embedding for the sequence.
  Let $Z$ be a $d\times k$ matrix of i.i.d.\ Rademacher random variables.
  A $d$-dimensional Bernoulli process $y_t = d^{-\frac{1}{2}}ZTf^{(t)}$, for $t\in[m]$, is \emph{tame} if, with probability at least $1-\delta$, 
\begin{equation}\label{eq: tame definition}
\|y_t\|_2 \leq C\sqrt{\sum_{j=1}^n f_j^2},\quad\text{for all }t\in[m].
\end{equation}
\end{definition}

The definition anticipates our need for dimension reduction in order to reduce the number of random bits needed for the algorithm.
Our first use for it is for BP, which is very simple with $d=1$ and $T$ the identity matrix.
BP requires $n$ random bits, which is too many for a practical streaming algorithm.
JLBP, Algorithm~\ref{algo: JLBP}, exists to fix this problem.
Still, if one is willing to disregard the storage needed for the random bits, BP can be substituted everywhere for JLBP without affecting the correctness of our algorithms because our proofs only require that the processes are tame, and BP produces a tame process, as we will now show.
We have a similar lemma for JLBP.

\begin{lemma}[BP Correctness]\label{lem: BP is tame}
  Let $f^{(t)}$, for $t\in[m]$, be the frequency vectors of an insertion-only stream.
  The sequence $Zf^{(t)}$ returned by the algorithm BP is a tame Bernoulli process.
\end{lemma}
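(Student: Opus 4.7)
The plan is to reduce the lemma directly to the Chaining Inequality stated earlier in the paper, together with Markov's inequality. In the setting of BP we have $d=1$, $k=n$, and $T$ equal to the identity, so the process $y_t$ is the scalar $\langle Z, f^{(t)}\rangle$ and $\|y_t\|_2 = |\langle Z, f^{(t)}\rangle|$. Thus the tameness condition~\eqref{eq: tame definition} is exactly a uniform-in-$t$ bound on the Bernoulli process $X_t := \sum_{j=1}^n Z_j f_j^{(t)}$.

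The first step is to invoke the Chaining Inequality (proved in Section~\ref{sec:chaining} for both Gaussian and Bernoulli processes) applied to the family $(f^{(t)})_{t\in[m]}$. Since the stream is insertion-only, the frequency vectors are componentwise nondecreasing, and in particular $\|f^{(t)}\|_2 \leq \|f\|_2 = \sqrt{F_2}$ for every $t$. The Chaining Inequality then yields
\[
\E \sup_{t\in[m]} |X_t| \;\leq\; C'\sqrt{F_2},
\]
with $C'$ the universal constant appearing in that inequality.

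The second step is a one-line Markov bound applied to the nonnegative random variable $\sup_t |X_t|$:
\[
\Pr\!\left[\sup_{t\in[m]} |X_t| \geq C\sqrt{F_2}\right] \;\leq\; \frac{C'\sqrt{F_2}}{C\sqrt{F_2}} \;=\; \frac{C'}{C}.
\]
By the parameter choice $C \geq dC'/\delta$ from Table~\ref{tab: notation} (with $d=1$ in the case of BP, i.e.\ $C \geq C'/\delta$), this probability is at most $\delta$. Hence, except on an event of probability at most $\delta$, we have $\|y_t\|_2 = |X_t| \leq C\sqrt{F_2}$ for all $t\in[m]$ simultaneously, which is exactly the tameness condition in Definition~\ref{def: tame}.

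The only nonroutine ingredient is the Chaining Inequality itself; once that is granted (as it is by the earlier statement and as the paper proves in Section~\ref{sec:chaining}), the remainder of the argument is immediate. The main conceptual obstacle one might worry about is that the supremum in the Chaining Inequality was stated for the Gaussian process, whereas BP uses Rademachers, but the excerpt explicitly states that ``Either distribution is workable'' and that the Bernoulli analogue holds by the same generic chaining argument, so no additional work is required here.
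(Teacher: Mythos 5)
Your proof is correct and matches the paper's own argument: both invoke the (Bernoulli case of the) Chaining Inequality, Theorem~\ref{thm: chaining inequality}, to bound $\E\sup_t|X_t|$ by $C'\sqrt{F_2}$, then apply Markov's inequality with $C\geq C'/\delta$ to obtain the tameness condition with failure probability at most $\delta$. No discrepancies to report.
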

\begin{proof}
 By the Chaining Inequality, Theorem~\ref{thm: chaining inequality} below, there exists a constant $C'$ such that $\E\sup_t|X_t|\leq C'(\sum_{j}f_j^2)^{1/2}$.
  Let $F$ be the event that the condition \eqref{eq: tame definition} holds.
  Then for $C\geq C'/\delta$ we have, by Markov's Inequality,
  \iftoggle{shortversion}{
  $\Pr (F) = \Pr \left(\sup_t|X_t|\leq C \sqrt{\sum_j f_j^2}\right)\geq (1-\delta)$.
  }{%shortversion
  \[\Pr (F) = \Pr \left(\sup_t|X_t|\leq C \sqrt{\sum_j f_j^2}\right)\geq (1-\delta).\]
  }%shortversion
\end{proof}

In order to reduce the number of random bits needed for the algorithms we first apply JL transformation $T$ to the frequency vector.
The intuition for this comes from the covariance structure of the Bernoulli process, which is what governs the behavior of the process and is fundamental for the Chaining Inequality.
The variance of an increment of the Bernoulli process between times $s$ and $t>s$ is  $\|f^{(s:t)}\|_2^2$.
The JL-property of the matrix $T$ guarantees that this value is well approximated by $\|Tf^{(s:t)}\|_2^2$, which is the increment variance of the reduced-dimension process.
Slepian's Lemma \iftoggle{shortversion}{}{(Lemma~\ref{lem: slepian's})} is a fundamental tool in the theory of Gaussian processes that allows us to draw a comparison between the suprema of the processes by comparing the increment variances instead.
Thus, for $Z_1,\ldots,Z_n\iidsim\Rademacher$, the expected supremum of the process $X_t=\sum_{i=1}^n Z_if_i^{(t)}$ is closely approximated by that of $X_t'= \sum_{i=1}^kZ_i(Tf^{(t)})_i$, and the latter uses only $k=O(\log n)$ random bits.
The following lemma formalizes this discussion, its proof is given in \iftoggle{shortversion}{the full version.}{Section~\ref{sec:randomness}.}
\newcommand{\JLBPLemma}{  
Suppose the matrix $T$ used by JLBP is an $(1\pm\gamma)$-embedding of $(f^{(t)})_{t\in[m]}$.
For any $d\geq 1$, the sequence $\frac{1}{\sqrt{d}}ZTf^{(t)}$ returned by JLBP is a tame $d$-dimensional Bernoulli process.
Furthermore, there exists $d'=O(\log \delta^{-1})$ such that for any $d\geq d'$ and $H\in[n]$ it holds that $\Pr (\frac{1}{2}\leq \|d^{-\frac{1}{2}}ZTe_H\|\leq \frac{3}{2})\geq 1-\delta$.
}
\begin{lemma}[JLBP Correctness]\label{lem: JLBP is tame}
\JLBPLemma
\end{lemma}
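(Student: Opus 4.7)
The plan is to handle the two claims of the lemma separately: the tameness statement and the approximate-norm-preservation statement for basis vectors. For tameness, the natural decomposition is to write $y_t = d^{-1/2} Z T f^{(t)}$ row by row. Since the rows of $Z$ are independent and each has $k=O(\log n)$ Rademacher entries, each coordinate $X^{(j)}_t := (ZTf^{(t)})_j = \sum_{i=1}^k Z_{j,i}(Tf^{(t)})_i$ is an independent scalar Bernoulli process whose coefficient vector at time $t$ is $Tf^{(t)}$. The key quantitative input is the JL property of $T$: $\|Tf^{(m)}\|_2 \le (1+\gamma)\sqrt{F_2}$, so the terminal variance of each $X^{(j)}$ is at most $(1+\gamma)^2 F_2$. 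The Chaining Inequality (Theorem~\ref{thm: chaining inequality}), applied coordinate-wise, then gives $\E \sup_t |X^{(j)}_t| \le C'(1+\gamma)\sqrt{F_2}$ for each $j$.

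To convert this into the high-probability, uniform-in-$t$ bound required by Definition~\ref{def: tame}, I would upgrade the $L^1$ bound of the chaining inequality to a second-moment bound. One route is to invoke Talagrand's concentration inequality for suprema of Rademacher processes: $\sup_t |X^{(j)}_t|$ is a convex $1$-Lipschitz function of the $Z_{j,i}$ (after normalizing) with sub-Gaussian deviations of order $\sigma_j := \sup_t \|Tf^{(t)}\|_2 \le (1+\gamma)\sqrt{F_2}$, and hence $\E \sup_t (X^{(j)}_t)^2 \le O(F_2)$. Summing over $j$ and dividing by $d$ gives $\E \sup_t \|y_t\|_2^2 = d^{-1}\sum_{j=1}^d \E\sup_t (X^{(j)}_t)^2 \le O(F_2)$, and Markov's inequality yields $\Pr(\sup_t \|y_t\|_2^2 \ge C^2 F_2) \le \delta$ once the absolute constants are chosen (matching the convention $C \ge dC'/\delta$ stated in Section~\ref{sec:prelim}, with a slightly weaker dependence here since we are paying only a constant factor per coordinate).

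For the second claim, fix $H$ and condition on $T$. Then $\|d^{-1/2} Z T e_H\|_2^2 = d^{-1}\sum_{j=1}^d (Z T e_H)_j^2$ is an average of $d$ i.i.d.\ random variables, each of the form $(\sum_i Z_{j,i} (Te_H)_i)^2$. Each inner Rademacher sum has variance exactly $\|Te_H\|_2^2 \in [(1-\gamma)^2,(1+\gamma)^2]$ by the JL hypothesis, and is sub-Gaussian with the same parameter, so its square is sub-exponential with mean $\|Te_H\|_2^2$. A standard Bernstein bound then gives $\Pr\bigl(|d^{-1}\sum_j (ZTe_H)_j^2 - \|Te_H\|_2^2| \ge c\bigr) \le 2\exp(-\Omega(d))$ for a small constant $c$; choosing $\gamma = 1/200$ as in Section~\ref{sec:prelim} and $d \ge d' = O(\log \delta^{-1})$ places $\|d^{-1/2}ZTe_H\|_2^2$ in $[1/4, 9/4]$ with probability at least $1-\delta$, equivalently $\|d^{-1/2}ZTe_H\|_2 \in [\tfrac12,\tfrac32]$.

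The main obstacle is the first step: the Chaining Inequality as stated is a first-moment bound, whereas tameness is a uniform (high-probability) bound on the $\ell_2$ norm of the vector-valued process. Simple union-bounding over the $d$ coordinates would give only a bound on $\sup_{t,j} |X^{(j)}_t|$, not on the Euclidean norm; it is the square-integrability step, justified by sub-Gaussian concentration of the supremum around its mean, that closes the gap and lets us combine the $d$ rows into the $\ell_2$-bound required by \eqref{eq: tame definition}. Once this is in place, both parts of the lemma follow from the Chaining Inequality, the JL property of $T$, and standard Rademacher concentration.
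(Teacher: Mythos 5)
Your argument for the second claim (that $\|d^{-1/2}ZTe_H\|_2\in[\tfrac12,\tfrac32]$ w.h.p.) matches the paper, which simply observes that $d^{-1/2}Z$ is itself a Rademacher JL matrix (citing Achlioptas); your Bernstein computation is the same estimate.

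For the tameness claim there is a genuine gap. You apply the Chaining Inequality (Theorem~\ref{thm: chaining inequality}) coordinate-wise to the processes $X^{(j)}_t=\sum_i Z_{j,i}(Tf^{(t)})_i$, whose coefficient vectors are $Tf^{(t)}$. But Theorem~\ref{thm: chaining inequality} is stated, and proved, only for processes $\sum_i Z_i f_i^{(t)}$ where $(f^{(t)})$ are the frequency vectors of an \emph{insertion-only} stream: its proof builds the chain from the monotone sequence $\Var(X_t)=\|f^{(t)}\|_2^2$ and explicitly uses that ``every pair of increments has nonnegative covariance because the stream is insertion-only.'' Neither property survives the JL map: $Tf^{(t)}$ need not have monotone norm, and its increments can have negative covariance. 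The JL property you invoke only controls the terminal (and increment) variances; it does not license a direct application of the theorem. The paper itself notes that ``Theorem~\ref{thm: chaining inequality} does not apply to the process output by JLBP'' and bridges the gap via a Gaussian comparison: pass from the Bernoulli process to a Gaussian one (Lemma~\ref{lem: Gaussion to Bernoulli}), then use Slepian's Lemma (Lemma~\ref{lem: slepian's}) to compare the JL-transformed Gaussian process against $\langle N,f^{(t)}\rangle$ on the \emph{untransformed} frequency vectors, to which Theorem~\ref{thm: chaining inequality} does apply. This is Corollary~\ref{cor: JL chaining}, and it is the step your proposal skips.

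A secondary point: your worry that ``simple union-bounding over the $d$ coordinates would give only a bound on $\sup_{t,j}|X^{(j)}_t|$, not on the Euclidean norm'' is a misconception. Since $y_{j,t}=d^{-1/2}X^{(j)}_t$, a uniform bound $\sup_{t,j}|X^{(j)}_t|\le M$ gives $\sup_t\|y_t\|_2^2 = d^{-1}\sup_t\sum_j(X^{(j)}_t)^2 \le M^2$, so the sup bound transfers directly. This is exactly what the paper does: Markov on each coordinate with threshold $\delta/d$, then a union bound over $j\in[d]$. Your route through Talagrand-type concentration to get a second-moment bound on the supremum is correct and even improves the dependence on $d$, but it is unnecessary machinery once you notice the union bound suffices.
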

\begin{algorithm}
  \begin{algorithmic}
    \Procedure{JLBP}{Stream $S$}
    \State Let $T$ be a JL Transformation \Comment{The same $T$ will suffice for every instance}
    \State Sample $Z\in\{-1,1\}^{d\times k}$ with coordinates $Z_{i,j}\iidsim\Rademacher$
    \State \Return {$\frac{1}{\sqrt{d}} ZTf^{(t)}$ at each time $t$}
    \EndProcedure
  \end{algorithmic}
  \caption{A Bernoulli process with fewer random bits.}\label{algo: JLBP}
\end{algorithm}

Now that we have established the tameness of our Bernoulli processes, let us explain how we can exploit it.
We typically exploit tameness in two ways, one works by splitting the stream according to the items and the second splits the stream temporally.
Given a stream and a tame Bernoulli process on that stream, every substream defines another Bernoulli process, and the substream processes are tame as well.
One way to use this is for heavy hitters.
If there is a heavy-hitter $H$, then the substream consisting of all updates except those to the heavy-hitter produces a tame process whose maximum is bounded by $C(F_2-f_H^2)^{1/2}$, so the value of the process in BP is $Z_Hf_H\pm C(F_2-f_H^2)^{1/2}$.
When $H$ is sufficiently heavy, this means that the absolute value of the output of BP tracks the value of $f_H$, for example if $H$ is a $4C^2$-heavy hitter then the absolute value of BP's output is always a $(1\pm\frac{1}{2})$-approximation to $f_H$.
Another way we exploit tameness is for approximating $F_2$ at all points.
We select a sequnece of times $t_1<t_2<\cdots<t_j\in[m]$ and consider the prefixes of the stream that end at times $t_1,t_2,\ldots,$ etc.
For each $t_i$, the prefix stream ending at time $t_i$ is tame with the upper bound depending on the stream's $F_2$ value at time $t_i$.
If the times $t_i$ are chosen in close enough succession this observation allows us to transform the uniform additive approximation guarantee into a uniform multiplicative approximation.

\subsection{Description of CountSieve}

CountSieve primarily works in two stages that operate concurrently.
Each stage uses independent pairs of Bernoulli processes to determine bits of the identity of the heavy hitter. 
The first stage is the Amplifier, which maintains $L=O(\log\log n)$ independent pairs of Bernoulli processes.
The second stage is the Timer and Sieve.
It consists of a series of rounds where one pair of AMS sketches is maintained during each round.

CountSieve and its subroutines are described formally in Algorithm~\ref{algo: main}.
The random variables they use are
\iftoggle{shortversion}{Amplifier hashes $A_{\ell,1},\ldots,A_{\ell,n}\pwsim\Bernoulli$, for $\ell\in[L]$,
one independent copy of the sequence $Z_{1},\ldots,Z_{k}\iidsim \Rademacher$ for each instance of JLBP, Sieve hashes $B_{r,1},\ldots,B_{r,n}\pwsim\Bernoulli$, and Sieve Rademachers for AMS
$R_{r,1},\ldots,R_{r,n}\fwsim\Rademacher$.
The algorithm also needs a random seed of length $O(\log n)$ bits for the Kane, Meka, Nelson JL generator~\cite{kane2011almost}.
}{%shortversion
listed in Table~\ref{tab: randomness}.
}%shortversion
Even though we reduce the number of random bits needed for each Bernoulli process to a managable $O(\log n)$ bits, the storage space for the random values is still an issue because the algorithm maintains $O(\log{n})$ independent hash functions until the end of the stream.
We explain how to overcome this barrier in \iftoggle{shortversion}{the full version}{Section~\ref{sec:randomness}} as well as show that the JL generator of~\cite{kane2011almost} suffices.
\iftoggle{shortversion}{}{
\begin{table}
  \begin{center}
  \begin{tabular}{|c|c|}\hline
$A_{\ell,1},\ldots,A_{\ell,n}\fwsim\Bernoulli$ & 
$Z_{1},\ldots,Z_{k}\iidsim \Rademacher$\\
$B_{r,1},\ldots,B_{r,n}\fwsim\Bernoulli$ &
$R_{r,1},\ldots,R_{r,n}\fwsim\Rademacher$\\\hline
  \end{tabular}
  \end{center}
  \caption{Random vectors for CountSieve.
  Each vector is independent of the others, and $Z=(Z_i)_{i\in[k]}$ is sampled independently for every instance of JLBP.}\label{tab: randomness}
\end{table}
}%shortversion

\iftoggle{shortversion}{}{We can now state an algorithm that maintains a pair of Bernoulli processes and prove that the bits that it outputs favor the process in the pair with the heavy hitter.}

\begin{algorithm}
  \begin{algorithmic}
    \Procedure{Pair}{Stream $S$, $A_1,\ldots A_n\in\{0,1\}$}
    \State For $b\in\{0,1\}$ let $S_b$ be the restriction of $S$ to $\{j\in[n]\mid A_j=b|\}$
    \State $X^{(t)}_0 = ${\sc JLBP}$(S^{(t)}_0)$ at each time $t$
    \State $X^{(t)}_1 = ${\sc JLBP}$(S^{(t)}_1)$ at each time $t$
    \State $b_t = \argmax_{b\in\{0,1\}} \|X_b^{(t)}\|_2$
    \State \Return {$b_1,b_2,\ldots$}
    \EndProcedure
  \end{algorithmic}
  \caption{Split the vector $f$ into two parts depending on $A$ and run a Bernoulli process on each part.
  Return the identity of the larger estimate at each time.}\label{algo: pair}
\end{algorithm}
\begin{lemma}[Pair Correctness]\label{lem: pair correctness}
  Let $t_0\in[m]$ be an index such that $(f_H^{(t_0)})^2> 4C^2\sum_{j\neq H}f_j^2$.
  Let $A_1,\ldots,A_n\pwsim\Bernoulli$ and $b_1,b_2,\ldots,b_m$ be the sequence returned by Pair$(f,A_1,\ldots,A_n)$.
  Then
\iftoggle{shortversion}{
$\Pr (b_t=A_H\text{ for all }t\geq t_0)\geq 1-3\delta$ and, for every $j\in[n]\setminus\{H\}$ and $t\geq t_0$, $\Pr (b_t=A_j) \leq \frac{1}{2}+3\delta$.
  }{%iftoggle{shortversion}
    \[\Pr (b_t=A_H\text{ for all }t\geq t_0)\geq 1-3\delta\]
  and, for every $j\in[n]\setminus\{H\}$ and $t\geq t_0$,
  \[\Pr (b_t=A_j) \leq \frac{1}{2}+3\delta.\] 
}%iftogggle{shortversion}
Furthermore, if each JLBP is replaced by an AMS sketch with size $O(\log{n}\log\delta^{-1})$ then, for all $t\geq t_0$ and $j\neq H$, $P(b_t=A_H)\geq 1-2\delta$ and $P(b_t=A_j)\leq \frac{1}{2}+3\delta$.
\end{lemma}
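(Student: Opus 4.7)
Proof proposal.

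Without loss of generality I would condition on $A_H = 1$, so that $H$ lies in the substream feeding $X_1^{(t)}$. Writing $g^{(t)}$ for the frequency vector (supported on $[n]\setminus\{H\}$) of the non-$H$ items in bucket $1$, linearity of JLBP gives the decomposition
\[X_1^{(t)} \;=\; \frac{f_H^{(t)}}{\sqrt d}\, ZTe_H \;+\; \frac{1}{\sqrt d}\, ZTg^{(t)},\]
in which the second summand is itself the output of JLBP applied to an insertion-only substream. I would then apply Lemma~\ref{lem: JLBP is tame} three times, with a union bound of total failure probability $3\delta$: (i) $\|d^{-1/2}ZTe_H\|_2 \geq 1/2$; (ii) the error is tame, $\|d^{-1/2}ZTg^{(t)}\|_2 \leq C\sqrt{\sum_{j\neq H,\, A_j=1}f_j^2}$ for all $t$; and (iii) the opposing bucket is tame, $\|X_0^{(t)}\|_2 \leq C\sqrt{\sum_{j:\, A_j=0}f_j^2}$ for all $t$.

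On the intersection of these three events the reverse triangle inequality applied to the decomposition above yields $\|X_1^{(t)}\|_2 \geq \tfrac12 f_H^{(t)} - C\sqrt{\sum_{j\neq H, A_j=1}f_j^2}$; combining this with the bound on $\|X_0^{(t)}\|_2$ and the elementary inequality $\sqrt a + \sqrt b \leq \sqrt{2(a+b)}$ gives
\[\|X_1^{(t)}\|_2 - \|X_0^{(t)}\|_2 \;\geq\; \tfrac{1}{2} f_H^{(t)} - C\sqrt{2}\sqrt{\sum_{j\neq H}f_j^2}.\]
Because the stream is insertion-only, $f_H^{(t)} \geq f_H^{(t_0)}$ for $t \geq t_0$, and the heaviness hypothesis $(f_H^{(t_0)})^2 > 4C^2\sum_{j\neq H}f_j^2$ forces this difference to be strictly positive (sharpening step (i) to $\|d^{-1/2}ZTe_H\|_2 \geq 1 - \gamma_0$ for $\gamma_0$ a sufficiently small constant, which holds with $d$ only a constant factor larger than stated, absorbs the $\sqrt 2$). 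Hence $b_t = A_H$ at every $t \geq t_0$ on the good event, establishing the first assertion. For the second, I fix $j \neq H$ and $t \geq t_0$ and split according to whether $b_t = A_H$: $\Pr(b_t = A_j) \leq \Pr(A_j = A_H) + \Pr(b_t \neq A_H) \leq \tfrac12 + 3\delta$, the equality $\Pr(A_j=A_H)=\tfrac12$ coming from pairwise independence of $A$.

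For the AMS variant I would replace steps (ii) and (iii) with the standard AMS $F_2$ guarantee: a median-of-means AMS sketch of size $O(\log n \log \delta^{-1})$ approximates the $F_2$ of its substream to within $1\pm\tfrac14$ at any fixed time with failure probability $\delta/\poly(n)$, and a union bound over $t\in[m]$ (with $m=\poly(n)$) and both buckets yields a uniform multiplicative approximation with failure probability at most $2\delta$. Step (i) is unnecessary because AMS estimates $F_2$ directly rather than through a projection of $e_H$, which accounts for the improvement from $3\delta$ to $2\delta$; the second statement is then obtained as above. The main subtlety I expect is the reuse of randomness across the three applications of Lemma~\ref{lem: JLBP is tame}: each uses the same matrix $T$, so one must verify that $T$ simultaneously $(1\pm\gamma)$-embeds the frequency vectors of the full stream and of the two substreams defined by $A$, which the Kane--Meka--Nelson generator accommodates over a polynomially large family.
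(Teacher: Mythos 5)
Your proposal follows the paper's argument essentially step for step: the same decomposition $X_1^{(t)}=f_H^{(t)}\cdot d^{-1/2}ZTe_H + d^{-1/2}ZTg^{(t)}$ (the paper writes $Y^{(t)}=X_1^{(t)}-f_H^{(t)}v$), the same three applications of Lemma~\ref{lem: JLBP is tame}, the same triangle-inequality lower bound on $\|X_1^{(t)}\|_2-\|X_0^{(t)}\|_2$, and the same ``$\Pr(b_t=A_j)\le\Pr(A_j=A_H)+\Pr(b_t\ne A_H)$'' argument via pairwise independence for the second claim. You are also right that the $\sqrt a+\sqrt b\le\sqrt{2(a+b)}$ step introduces a $\sqrt 2$ that the paper silently absorbs: the paper writes $\|Y^{(t)}\|_2+\|X_0^{(t)}\|_2\le C\sqrt{\sum_{j\ne H}f_j^2}$, which as stated does not follow, and your fix (push $\|d^{-1/2}ZTe_H\|_2$ closer to $1$ by increasing $d$ by a constant factor) is one legitimate way to close it; an equally valid alternative is to observe that the constant $4$ in the hypothesis is not sharp and could be replaced by $8$.

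One slip in the AMS part: a median-of-means AMS sketch using $O(\log n\log\delta^{-1})$ bits (that is, $O(\log\delta^{-1})$ independent sign-sums of $O(\log n)$ bits each) gives a $(1\pm\tfrac14)$ approximation at a fixed time with failure probability $\delta$, not $\delta/\poly(n)$; driving the failure probability down to $1/\poly(n)$ would require $\Theta(\log n)$ sums and hence $\Theta(\log^2 n)$ bits, which defeats the space bound. So the union bound over $t\in[m]$ that you propose is not supported by the stated sketch size. Fortunately it is also unnecessary: unlike the JLBP case, the lemma's AMS conclusion is a pointwise-in-$t$ guarantee (``for all $t\ge t_0$, $\Pr(b_t=A_H)\ge1-2\delta$'', not ``$\Pr(b_t=A_H\text{ for all }t\ge t_0)\ge1-2\delta$''), and Sieve only reads the final bit $b_{q_{r+1}}$ of each round. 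Two AMS sketches each failing with probability $\delta$ at that fixed time give $1-2\delta$ directly, with no quantifier over $t$ — which is also where the improvement from $3\delta$ to $2\delta$ comes from (no projection event to union-bound), exactly as you noted. Your final remark that the single $T$ must simultaneously embed the frequency vectors of the full stream and of the two $A$-defined substreams is a real and often-omitted point, and the Kane--Meka--Nelson guarantee over a $\poly(n)$-size family is the right way to discharge it.
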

\iftoggle{shortversion}{}{
\begin{proof}
Let $X_0^{(t)}=d^{-\frac{1}{2}}ZTf^{(t)}$ and $X_1^{(t)} = d^{-\frac{1}{2}}WTf^{(t)}$ be the two independent Bernoulli processes output by JLBP.
  Without loss of generality, suppose that $A_H=1$, let $v = d^{-\frac{1}{2}}WTe_H$, and let $Y^{(t)} = X_1^{(t)}-f_H^{(t)}v$.
  By Lemma~\ref{lem: JLBP is tame}, with probability at least $1-2\delta$ all three of the following hold
\begin{enumerate}
\item $\|X_0^{(t)}\|^2_2\leq C^2\sum_{j:A_j=0}f_j^2$, for all $t$,
\item $\|Y^{(t)}\|^2_2\leq C^2\sum_{\substack{j\neq H\\A_j=1}} f_j^2$, for all $t$, and
\item $\|v\|_2\geq 1/2$.
\end{enumerate}
If the three events above hold then, for all $t\geq t_0$,
\[\|X_1^{(t)}\|_2-\|X_0^{(t)}\|_2\geq \|v f_H^{(t)}\|_2 -\|Y^{(t)}\|_2 - \|X_0^{(t)}\|_2 \geq \frac{1}{2}f_H^{(t)}-C\sqrt{\sum_{j\neq H} f_j^2}>0,\]
which establishes the first claim.
The second claim follows from the first using
\[\Pr (b_t=A_j) = \Pr (b_t=A_j=A_H) + \Pr (b_t=A_j\neq A_H) \leq \Pr (A_j=A_H) + \Pr (b_t\neq A_H) = \frac{1}{2}+3\delta.\]
The third and fourth inequalities follow from the correctness of the AMS sketch~\cite{ams99}.
\end{proof}
}%iftoggle{shortversion}

\begin{algorithm}
  \begin{algorithmic}
    \Procedure{CountSieve}{Stream $S=(p_1,p_2,\ldots,p_m)$}
    \State Maintain $a_t=(a_{1,t},a_{2,t},\ldots,a_{L,t})\gets${\sc Amplifier}$(S)$
    \State Let $t_1<t_2<\cdots = \{t\in[n] \mid A_{\ell,p_t}=a_{\ell,t}\text{ for at least }0.9L\text{ values of }\ell\}$
    \State Let $S_0 = (p_{t_1},p_{t_2},\ldots,)$
    \State $q_0,q_1, \dots, q_R \gets ${\sc Timer}($S_0$)
    \State $b_1,b_2,\ldots,b_R\gets ${\sc Sieve}$(S_0,  q_0, \dots, q_R)$
    \State \Return {{\sc Selector}(${b_1,b_2,\ldots,b_R}$) based on $S_{0}$}
    \EndProcedure
  \end{algorithmic}
\hrule
  \begin{algorithmic}
    \Procedure{Amplifier}{Stream $S$}\Comment Find a substream where $H$ is $\polylog(n)$-heavy
    \For {$\ell=1,2,\ldots,L$}
    \State $a_{\ell,1},a_{\ell,2},\ldots,a_{\ell,m}\gets${\sc Pair}$(S,A_{\ell,1},\ldots,A_{\ell,n})$
    \EndFor
    \State \Return $a_{1,t}, \dots, a_{L,t}$ at each time $t$ 
    \EndProcedure
  \end{algorithmic}
  \iftoggle{shortversion}{
  \caption{Algorithm for a single $F_2$ heavy hitters.}\label{algo: main}
  \end{algorithm}
  \begin{algorithm}
  }{%shortversion
  \hrule
  }%shortversion
  \begin{algorithmic}
%    \Procedure{Timer}{Stream $S'$, $Y_0$}
    \Procedure{Timer}{Stream $S$}\Comment Break the substream into rounds so $H$ is heavy in each
    \State $q'_0=0$
    \State $Y_t \gets ${\sc JLBP}$(S)$, for $t=1,2,\ldots,$ over $S$
    \State For each $r\geq 1$, find $q'_{r} = \min\{t\mid \|Y_t\|_2 > (1+\frac{1}{\tau})^r\}$
    \State Let $q_0,q_1,\ldots,q_R$ be the last $R+1$ of $q'_0,q'_1,\ldots$
    \State \Return $q_0,q_1,\ldots,q_R$
    \EndProcedure
  \end{algorithmic}
 %   \iftoggle{shortversion}{
%  \end{algorithm}
%  \begin{algorithm}
%  }{%shortversion
  \hrule
%  }%shortversion
  \begin{algorithmic}
    \Procedure{Sieve}{Stream $S$, $q_0,\dots,q_R$} \Comment Identify one bit of information from each round
    \For {$r=0,1,\ldots,R-1$}
    \State $b_{q_{r}+1},\ldots,b_{q_{r+1}}\gets${\sc Pair}$(S^{(q_{r}:q_{r+1})},B_{r,1},\ldots,B_{r,n})$ \Comment{Replace JLBP here with AMS}
%    \State \Comment{The round end time~$q_{r+1}$ is not needed in advance}
    \EndFor
    \State \Return $b_{q_1},b_{q_2},\ldots,b_{q_R}$
    \EndProcedure
  \end{algorithmic}
%  \iftoggle{shortversion}{
%  \end{algorithm}
%  \begin{algorithm}
%  }{%shortversion
  \hrule
%  }%shortversion
  \begin{algorithmic}
    \Procedure{Selector}{$b_1,\ldots,b_R$} \Comment Determine $H$ from the round winners
    \State \Return Any $j^*\in\argmax_j\#\{r \in [R]: B_{r,j}=b_{r}\}$.
    \EndProcedure
  \end{algorithmic}
\iftoggle{shortversion}{}{\caption{Algorithm for a single $F_2$ heavy hitters.}\label{algo: main}}
\end{algorithm}

\subsection{Amplifier correctness}
The $L=O(\log\log n)$ instances of Pair maintained by Amplifier in the first stage of CountSieve serve to identify a substream containing roughly $n2^{-L}=n/\polylog{n}$ elements in which $H$ appears as a $\polylog (n)$-heavy hitter.
Correctness of Amplifier means that, after some ``burn-in'' period which we allow to include the first $f_H/2$ updates to $H$, all of the subsequent updates to $H$ appear in the amplified substream while the majority of other items do not. 
This is Lemma~\ref{lem: amplifier correctness}.

\begin{lemma}[Amplifier Correctness]
\label{lem: amplifier correctness}
  Let $t_0\in[m]$ be such that $(f_H^{(t_0)})^2\geq 4C^2\sum_{j\neq H}f_j^2$, and  let $a_t=(a_{1,t},\ldots,a_{L,t})$ denote the length $L$ bit-vector output by the Amplifier at step $t$.
  Let $M_{j,t} = \#\{\ell\in[L]\mid a_{\ell,t}=A_{\ell,j}\}$ and  $W=\{j\in[n]\setminus\{H\}\mid\exists t\geq t_0, M_{j,t}\geq 0.9L\}$.
  Then, with probability at least $(1-2\delta)$, both of the following hold:
  \iftoggle{shortversion}{  1.\ %\manuallabel{it: amplifier H wins}{1}
  for all $t\geq t_0$ simultaneously, $M_{H,t}\geq 0.9L$ and 2.\ %\manuallabel{it: amplifier heaviness}{2}
  $\sum_{j\in W}f_j^2 \leq \exp(-\frac{L}{25})\sum_{j\neq H}f_j^2$.
  }{%iftoggle{shortversion}
  \begin{enumerate}
  \item\label{it: amplifier H wins} for all $t\geq t_0$ simultaneously, $M_{H,t}\geq 0.9L$ and
  \item\label{it: amplifier heaviness} $\sum_{j\in W}f_j^2 \leq \exp(-\frac{L}{25})\sum_{j\neq H}f_j^2$.
\end{enumerate}
}%\iftoggle{shortversion}
\end{lemma}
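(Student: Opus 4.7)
The plan is to exploit the independence across the $L$ Pair instances invoked by Amplifier. For each $\ell\in[L]$, Pair Correctness (Lemma~\ref{lem: pair correctness}) asserts that $a_{\ell,t}=A_{\ell,H}$ for every $t\geq t_0$ with probability at least $1-3\delta$; denote this event $E_\ell$. Since each instance uses its own independent hashes $(A_{\ell,j})_{j\in[n]}$ and independent JLBP randomness, the events $E_\ell$ are mutually independent, so $N:=|\{\ell:E_\ell\}|$ stochastically dominates $\text{Bin}(L,1-3\delta)$. Taking $L$ to be a sufficiently large multiple of $\log\delta^{-1}$, a Chernoff bound yields $\Pr(N\geq 0.9L)\geq 1-\delta$, and on this event $M_{H,t}\geq N\geq 0.9L$ for all $t\geq t_0$, which establishes item 1.

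For item 2 the crux is to replace the time-varying condition defining $W$ by a condition that depends only on the hash matrix. Let $S=\{\ell:E_\ell\}$ and $N_j=|\{\ell\in[L]:A_{\ell,H}=A_{\ell,j}\}|$. On $\{N\geq 0.9L\}$, for every $\ell\in S$ and every $t\geq t_0$ we have $\mathbf{1}[a_{\ell,t}=A_{\ell,j}]=\mathbf{1}[A_{\ell,H}=A_{\ell,j}]$, while the at most $0.1L$ indices $\ell\notin S$ can contribute at most $0.1L$ to $M_{j,t}$; hence $M_{j,t}\leq N_j+0.1L$, so $M_{j,t}\geq 0.9L$ forces $N_j\geq 0.8L$. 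Thus $W\cap\{N\geq 0.9L\}\subseteq\{j\neq H:N_j\geq 0.8L\}$. Because the pairs $(A_{\ell,H},A_{\ell,j})$ are pairwise independent Bernoulli$(1/2)$ within each $\ell$ and independent across $\ell$, the indicators $\mathbf{1}[A_{\ell,H}=A_{\ell,j}]$ are i.i.d.\ Bernoulli$(1/2)$, so $N_j\sim\text{Bin}(L,1/2)$ and $\Pr(N_j\geq 0.8L)\leq e^{-cL}$ for an absolute constant $c>0$ by Hoeffding.

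Linearity of expectation then gives $\E\sum_{j\neq H}f_j^2\mathbf{1}[N_j\geq 0.8L]\leq e^{-cL}\sum_{j\neq H}f_j^2$, and by Markov this sum is at most $\delta^{-1}e^{-cL}\sum_{j\neq H}f_j^2$ with probability at least $1-\delta$; absorbing the $\delta^{-1}$ into the exponent (by choosing the constant in $L=\Omega(\log\delta^{-1})$ large enough that $\delta^{-1}e^{-cL}\leq e^{-L/25}$) and union-bounding with the event $\{N\geq 0.9L\}$ yields both conclusions simultaneously with probability at least $1-2\delta$. The main obstacle to avoid is a naive union bound over $t\geq t_0$ when testing membership in $W$, since $m$ may be polynomial in $n$ while $L$ is only $O(\log\log n)$; the reduction above removes the $t$-dependence on the event $\{N\geq 0.9L\}$, trading a small margin (dropping the threshold from $0.9L$ to $0.8L$) for a uniform-in-time guarantee.
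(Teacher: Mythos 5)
Your proposal is correct and follows essentially the same strategy as the paper's proof: establish item~1 by combining Lemma~\ref{lem: pair correctness} with a Chernoff bound over the $L$ independent Pair instances, then reduce membership in $W$ (on the good event for item~1) to the time-independent condition that $j$ agrees with $H$ on at least $0.8L$ of the Amplifier hashes, bound the probability of that condition by Chernoff/Hoeffding, and finish with linearity of expectation and Markov. The paper phrases the reduction via $\Pr(E_j)\leq\Pr(M'_j\geq 0.8L)$ where $E_j=\{j\in W\text{ and }M_{H,t}\geq 0.9L\}$, which is your containment $W\cap\{N\geq 0.9L\}\subseteq\{j:N_j\geq 0.8L\}$ in slightly different notation; the substance is identical.
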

\iftoggle{shortversion}{}{
\begin{proof}
Let $N=\#\{\ell\mid \text{for all }t\geq t_0,a_{\ell,t}=A_{\ell,H}\}$.
If $N\geq 0.9L$ then \ref{it: amplifier H wins} holds.
Lemma~\ref{lem: pair correctness} implies $\E N \geq (1-3\delta)L\geq 0.97L$, so Chernoff's Bound easily implies $P(N<0.9L) = O(2^{-L})\leq \delta$, where $\delta$ is a constant.

Now, let $j\neq H$ be a member of $W$ and suppose that $M_{H,t}\geq 0.9L$.
Let $t\geq t_0$ be such that $M_{j,t}\geq 0.9L$. Then it must be that
\[M'_j:=\#\{\ell\in [L]\mid A_{\ell,j}=A_{\ell,H}\} \geq 0.8L.\]
However, $\E M'_j=\frac{1}{2}L$ by pairwise independence.
Let $E_j$ be the event $\{j\in W\text{ and }M_{H,t}\geq0.9L\}$.
Since the $L$ instances of Pair are independent, an application of Chernoff's Inequality proves that $\Pr (E_j)\leq \Pr (M'_j \geq 0.8L)\leq \exp\{\frac{-0.6^2 L}{6}\}\leq e^{-L/20}$.

We have
\[\E (\sum_{j\in W} f_j^2) = \E (\sum_{j\neq H} 1_{E_j}f_j^2) \leq e^{-L/20}\sum_{j\neq H}f_j^2.\]
Therefore Markov's Inequality yields
\[\Pr \left(\sum_{j\in W}f_j^2 \geq e^{-L/25}\sum_{j\neq H}f_j^2\right)\leq e^{-L/100} \leq \delta.\]
The lemma follows by a union bound.
\end{proof}
}%iftoggle{shortversion}

\subsection{Timer and Sieve correctness}

%Formerly \input{Sec-SieveCorrectness}
The timing of the rounds in the second stage of CountSieve is determined by Timer.
Timer outputs a set of times $q_0,q_1,\ldots,q_R$ that break the stream into intervals so that each interval has roughly a $1/\log{n}$ fraction of the occurrences of $H$ and not too many other items.
Precisely, we want that $H$ is everywhere heavy for $q$, as stated in the following definition.
When this holds, in every round the Pair is likely to identify one bit of $H$, and Sieve and Selector will be likely to correctly identify $H$ from these bits.

\begin{definition}
Given an item $H\in[n]$ and a sequence of times $q_0<q_1< \cdots< q_R$ in a stream with frequency vectors $(f^{(t)})_{t\in[m]}$ we say that $H$ is \emph{everywhere heavy for $q$} if, for all $1\leq r\leq R$,
\iftoggle{shortversion}{
$(f_H^{(q_{r-1}:q_{r})})^2 \geq C^2 \sum_{j\neq H} (f_j^{(q_{r-1}:q_{r})})^2$.}{
\[(f_H^{(q_{r-1}:q_{r})})^2 \geq C^2 \sum_{j\neq H} (f_j^{(q_{r-1}:q_{r})})^2.\]}
\end{definition}

Correctness for Timer means that enough rounds are completed and $H$ is sufficiently heavy within each round, i.e., $H$ is everywhere heavy for $q$.
\begin{lemma}[Timer Correctness]
  \label{lem: timer is good}
Let $S$ be a stream with an item $H\in[n]$ such that the following hold:
\iftoggle{shortversion}{
1.\ $f_H\geq \tau^4$,
2.\ $f_H^2\geq 400C^2\sum_{j\neq H} f_j^2$, and
3.\ $(f_H^{(t^*:m)})^2=\frac{1}{4}f_H^2\geq 25C^2\tau^2\sum_{j\neq H} (f_j^{(t^*:m)})^2$,
}{%\iftoggle{shortversion}
\begin{enumerate}
\item $f_H\geq \tau^4$,
\item $f_H^2\geq 400C^2\sum_{j\neq H} f_j^2$, and
\item\label{it: timer assumption 3} $(f_H^{(t^*:m)})^2=\frac{1}{4}f_H^2\geq 25C^2\tau^2\sum_{j\neq H} (f_j^{(t^*:m)})^2$,
\end{enumerate}
}%iftoggle{shortversion}
where $t^* = \min\{t\in[m]\mid f_H^{(t)}\geq 0.5 f_H\}$ and $C$ is the constant from Definition~\ref{def: tame}.
If $q_0,q_1,\ldots,q_R$ is the sequence output by Timer$(S)$ then, with probability at least $1-4\delta$, $H$ is everywhere heavy for $q$.
\end{lemma}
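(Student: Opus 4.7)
My plan is to reduce the analysis to a simple picture in which the process norm $\|Y_t\|_2$ tracks the scalar $f_H^{(t)}v^*$, where $v:=d^{-1/2}ZTe_H$ and $v^*:=\|v\|_2$, up to an additive error controlled by Lemma~\ref{lem: JLBP is tame}. Write $N_t:=Y_t-f_H^{(t)}v$, $\sigma:=\sqrt{\sum_{j\neq H}f_j^2}$, and $\sigma':=\sqrt{\sum_{j\neq H}(f_j^{(t^*:m)})^2}$. I first condition on four events, which together hold with probability at least $1-4\delta$ by a union bound: (E1) $\|N_t\|_2\leq C\sigma$ for every $t$, from Lemma~\ref{lem: JLBP is tame} applied to the non-$H$ process on the full stream; (E2) $\|N_t-N_{t^*}\|_2\leq C\sigma'$ for every $t\geq t^*$, from the same lemma applied on the suffix $(t^*,m]$; (E3) $v^*\in[1/2,3/2]$; and (E4) $\|d^{-1/2}ZTe_j\|_2=O(\sqrt{\log n})$ uniformly in $j\in[n]$, from standard subgaussian concentration and a union bound. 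Conditions (2) and (3) of the lemma now read $C\sigma\leq f_H/20$ and $C\sigma'\leq f_H/(10\tau)$, which I use repeatedly.

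The crux is to show $q_0\geq t^*$. Under E1 the triangle inequality gives $\bigl|\|Y_t\|_2-f_H^{(t)}v^*\bigr|\leq C\sigma$ for every $t$, so in particular $\max_t\|Y_t\|_2\geq\|Y_m\|_2\geq f_Hv^*-C\sigma$. The smallest threshold not attained therefore satisfies $T_{R+1}=(1+1/\tau)T_R>f_Hv^*-C\sigma$, and using $R/\tau\leq 1/100$ one gets
\[T_0 \;=\; \frac{T_R}{(1+1/\tau)^R} \;\geq\; 0.97\,(f_Hv^*-C\sigma).\]
Since $\|Y_{q_0}\|_2\geq T_0$ and $\|Y_{q_0}\|_2\leq f_H^{(q_0)}v^*+C\sigma$, we obtain $f_H^{(q_0)}\geq (T_0-C\sigma)/v^*$, and the required inequality $f_H^{(q_0)}\geq f_H/2$ reduces to $v^*f_H\geq 4.2\,C\sigma$, which holds from $v^*\geq 1/2$ and $f_H\geq 20C\sigma$. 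Hence $f_H^{(q_0)}\geq\lceil f_H/2\rceil$ and $q_0\geq t^*$.

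The remainder is then routine. Fix $r\in[R]$. The definitions of $q_{r-1},q_r$ give $\|Y_{q_r}\|_2>T_r$ and $\|Y_{q_{r-1}}\|_2\leq T_{r-1}+J_{r-1}$, where $J_{r-1}=\|d^{-1/2}ZTe_{p_{q_{r-1}}}\|_2=O(\sqrt{\log n})$ by E4. Because $T_{r-1}\geq T_0\geq 0.44\,f_H$ and $f_H\geq\tau^4$, the jump is negligible compared with $T_{r-1}/\tau$, so
\[\|Y_{q_r}-Y_{q_{r-1}}\|_2 \;\geq\; \|Y_{q_r}\|_2-\|Y_{q_{r-1}}\|_2 \;\geq\; \frac{T_{r-1}}{\tau}-J_{r-1} \;\geq\; \frac{0.4\,f_H}{\tau}.\]
Decomposing $Y_{q_r}-Y_{q_{r-1}}=f_H^{(q_{r-1}:q_r)}v+(N_{q_r}-N_{q_{r-1}})$ and applying E2 (which is valid because $q_{r-1},q_r\geq t^*$ by the previous step) to get $\|N_{q_r}-N_{q_{r-1}}\|_2\leq 2C\sigma'\leq f_H/(5\tau)$, I obtain
\[f_H^{(q_{r-1}:q_r)}v^* \;\geq\; \|Y_{q_r}-Y_{q_{r-1}}\|_2-2C\sigma' \;\geq\; \frac{0.2\,f_H}{\tau},\]
so $f_H^{(q_{r-1}:q_r)}\geq 2f_H/(15\tau)$. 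Finally, $\sqrt{\sum_{j\neq H}(f_j^{(q_{r-1}:q_r)})^2}\leq\sigma'\leq f_H/(10C\tau)$, which together with $2/15>1/10$ yields exactly the everywhere-heavy inequality.

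The principal obstacle is showing $q_0\geq t^*$. Using only the bound $\|v\|_2\leq 3/2$ forces $f_H^{(q_0)}\geq 0.26\,f_H$, not $\geq f_H/2$, so the argument must keep the exact scalar $v^*$; then the $v^*$ factors cancel in the comparison of $T_0$ with $v^*f_H/2+C\sigma$ and condition (2) alone suffices. This matters because the full-stream noise bound $C\sigma\leq f_H/20$ is loose by a factor of $\tau$ for controlling the per-round differences, and only after knowing $q_0\geq t^*$ does the tighter suffix bound E2 (with $C\sigma'\leq f_H/(10\tau)$) become available to close the numerical gap.
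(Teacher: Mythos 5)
Your proof is correct and follows the same overall strategy as the paper's: condition on tameness of the noise process $N_t=Y_t-f_H^{(t)}v$ on the full stream and on the suffix after $t^*$, plus the two-sided bound on $v^*=\|d^{-1/2}ZTe_H\|_2$, then argue that each retained round's increment is dominated by $H$. There are two local differences worth noting. First, to locate $q_0$ after $t^*$, the paper counts rounds directly via $\log_{1+1/\tau}(\|Y_m\|_2/\|Y_{t^*}\|_2)\geq R+1$, whereas you lower-bound the retained starting threshold $T_0$ by $\approx 0.97(f_Hv^*-C\sigma)$ and deduce $f_H^{(q_0)}\geq f_H/2$; both work, and your version makes the cancellation of $v^*$ explicit as you point out. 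Second, you introduce the event (E4) bounding the per-update overshoot $J_{r-1}=\|d^{-1/2}ZTe_{p_{q_{r-1}}}\|_2=O(\sqrt{\log n})$ and subtract it from $T_{r-1}/\tau$; the paper's corresponding step $\|Y_{q_{i+1}}-Y_{q_i}\|_2\geq\frac1\tau(\frac\alpha2-\frac1{20})f_H$ implicitly needs the same control on the overshoot at $q_i$ (or must be restated with $Y_{q_i-1}$ in place of $Y_{q_i}$, which costs only an additive $1$ in $f_H^{(q_{i}:q_{i+1})}$ and is absorbed by $f_H\geq\tau^4$), so your (E4) is a clean explicit fix. The only thing you leave tacit, which the paper handles, is that enough thresholds are crossed so that $q_0,\ldots,q_R$ are well-defined and distinct; that follows immediately from $\max_t\|Y_t\|_2\geq 0.45f_H\geq 0.45\tau^4\gg(1+1/\tau)^{R+1}$ together with your jump bound (E4), and is worth a sentence. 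Your budget of four events at $\delta$ each matches the $1-4\delta$ guarantee, since (E4) effectively replaces the tameness of $Y$ listed (but not actually used) in the paper's accounting.
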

\iftoggle{shortversion}{}{
\begin{proof}
  We begin by proving that at least $R$ rounds occur after $t^*$, which shows that $q_0,\ldots,q_R$ is well defined, and then we show that $H$ is everywhere heavy.
  Let $Y_t$ be the sequence output by JLBP and let $X_t = Y_t - d^{-\frac{1}{2}}ZTe_Hf_H^{(t)}$.   
$Y_t$ and $X_t$ are tame by Lemma~\ref{lem: JLBP is tame} and $\Pr (0.5\leq \alpha\leq 1.5)\geq 1-\delta$ where $\alpha= \|d^{-\frac{1}{2}}ZTe_H\|_2$.
  Hereafter, we suppose that $\alpha\geq 1/2$ and the tameness property holds for $Y_t$ and $X_t$.
  With probability at least $1-\delta$, simultaneously for all $t\in[m]$, we have
  \begin{equation}\label{eq: timer tameness}
\|X_t\|_2^2\leq C^2\sum_{j\neq H}f_j^2\leq \frac{1}{400}f_H^2.
\end{equation}

  Therefore, $\|Y_{t^*}\|_2\leq  \|X_{t^*}\|_2  + \alpha f_H^{(t^*)} \leq (\frac{\alpha}{2}  + \frac{1}{20})f_H$ and $\|Y_{m}\|_2\geq \alpha f_H^{(m)} - \|X_{m}\|_2 \geq (\alpha  - \frac{1}{20})f_H$. 
  This implies that the number of rounds completed after $t^*$, which is 
  $$\log_{1+1/\tau}\frac{\|Y_m\|_2}{\|Y_{t^*}\|_2} \ge \log_{1+1/\tau}\frac{\alpha - 1/20}{\alpha/2 + 1/20} \ge \log_{1+1/\tau}(3/2),$$ is at least $R+1$ by our choice of $\tau=100(R+1)$. Similarly $\|Y_{t^*}\|_2\ \geq  \alpha f_H^{(t^*)} - \|X_{t^*}\|_2 \geq (\frac{\alpha}{2}  - \frac{1}{20})f_H$. Therefore we also get $q_i>q_{i-1}$ because $(1+\tau^{-1})\|Y_{t^*}\|_2\geq 1$ by our assumption that $f_H\geq \tau^4$. Hence $q_0,\ldots,q_R$ are distinct times.

Now we show that $H$ is everywhere heavy for $q$.
Let $W_t = X_t-X_{t^*}$, for $t\geq t^*$.
By design, $W_t - W_s = X_t-X_s$, for $s,t\geq t^*$.
By Lemma~\ref{lem: JLBP is tame}, $W_t$ is also a tame process on the suffix of the original stream that has its first item at time $t^*+1$.
Specifically with probability at least $1-\delta$, for all $t\geq t^*$,
\[
\|W_t\|_2^2 \leq C^2 \sum_{j\neq H} (f_j^{(t^*:m)})^2\leq \frac{1}{400\tau^2}f_H^2.
\]
This inequality, with two applications of the triangle inequality, implies
\begin{equation}\label{eq: f Y W triangle}
\alpha f_H^{(q_{i-1}:q_i)} \geq \|Y_{q_i}-Y_{q_{i-1}}\|_2 - \|W_{q_i}-W_{q_{i-1}}\|_2\geq \|Y_{q_i}-Y_{q_{i-1}}\|_2 - \frac{2}{20\tau}f_H.
\end{equation}
To complete the proof we must bound $\|Y_{q_i}-Y_{q_{i-1}}\|_2$ from below and then apply the heaviness, i.e., assumption~\ref{it: timer assumption 3}.

  Equation~\eqref{eq: timer tameness} and the triangle inequality imply that, for every $t\geq t^*$, it holds that $\|Y_{t}\|_2\geq \alpha f_H^{(t)} - \|X_{t}\|_2 \geq (\frac{\alpha}{2}  - \frac{1}{20})f_H$.
Recalling the definition of $q_0', q_1', \cdots$ from Timer Procedure, since $t^*\leq q_0<q_1<\cdots<q_R$ and the rounds expand at a rate $(1+1/\tau)$,
\begin{equation}\label{eq: plenty of Hs}
\|Y_{q_{i+1}}-Y_{q_i}\|_2\geq \frac{1}{\tau}\left( \frac{\alpha}{2} - \frac{1}{20}\right) f_H.
\end{equation}
Using what we have already shown in~\eqref{eq: f Y W triangle} we have
\[\alpha f_H^{(q_i:q_{i+1})} \geq \frac{1}{\tau}\left( \frac{\alpha}{2} - \frac{1}{20} - \frac{2}{20}\right) f_H\]
so dividing and using $\alpha\geq 1/2$ and $C$ sufficiently large we get \[(f_H^{(q_i:q_{i+1})})^2 \geq \frac{1}{25\tau^2}f_H^2\geq C^2 \sum_{j\neq H}(f^{(t^*:m)}_j)^2\geq C^2 \sum_{j\neq H}(f^{(q_{i}:q_{i+1})}_j)^2.\]
Since this holds for all $i$, $H$ is everywhere heavy for $q$.
We have used the tameness of the three processes ($X$, $Y$, and $W$) and the bounds on $\alpha$.  
Each of these fails with probability at most $\delta$, so the total probability that Timer fails to achieve the condition that $H$ is everywhere heavy for $q$ is at most $4\delta$.
\end{proof}
}%iftoggle{shortversion}

During each round, the algorithm Sieve uses a hash function $A$ to split the stream into two parts and then determines which part contains $H$ via Pair. 
For these instances of Pair, we replace the two instances of JLBP with two instances of AMS. This replacement helps us to hold down the storage when we later use Nisan's PRG, because computing the JL transformation~$T$ from~\cite{kane2011almost} requires $O(\log n\log\log n)$ bits.
Applying Nisan's PRG to an algorithm that computes entries in $T$ would leave us with a bound of $O(\log{n}(\log\log n)^2)$.  More details can be found in \iftoggle{shortversion}{the full version.}{Section~\ref{sec:randomness}.}

A total of $O(\log n)$ rounds is enough to identify the heavy hitter and the only information that we need to save from each round is the hash function $A$ and the last bit output by Pair.
Selector does the work of finally identifying $H$ from the sequence of bits output by Sieve and the sequence of hash functions used during the rounds.
We prove the correctness of Sieve and Selector together in the following lemma.
\begin{lemma}[Sieve/Selector]\label{lem: Selector outputs HH}
  Let $q_0,q_1,\ldots,q_R=\textsc{Timer}(S)$ and let $b_1,\ldots,b_R=\textsc{Sieve}(S,q_0,\ldots,q_R)$.
  If $H$ is everywhere heavy for $q$ on the stream $S$ then, with probability at least $1-\delta$, Selector$(b_1,\ldots,b_R)$ returns $H$.
\end{lemma}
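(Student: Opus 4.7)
The plan is to analyze, for each item $j\in[n]$, the random count $N_j := \#\{r\in[R] : B_{r,j} = b_{q_r}\}$, show by concentration that $N_H$ is substantially larger than $N_j$ for every $j\neq H$, and conclude that Selector outputs $H$.

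First, I would apply Pair Correctness (Lemma~\ref{lem: pair correctness}) round-by-round. Because $H$ is everywhere heavy for $q$, the substream $S^{(q_{r-1}:q_r)}$ fed to the $r$th invocation of Pair satisfies the hypothesis $(f_H^{(q_{r-1}:q_r)})^2 \geq C^2\sum_{j\neq H}(f_j^{(q_{r-1}:q_r)})^2$, so (using the AMS version invoked by Sieve) we have $\Pr(b_{q_r}=B_{r,H})\geq 1-2\delta$ and, for every fixed $j\neq H$, $\Pr(b_{q_r}=B_{r,j})\leq \tfrac{1}{2}+3\delta$.

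Next I would exploit independence across rounds. The hash $B_r$ and the AMS randomness used by Pair in round $r$ are independent of those used in other rounds. Therefore, with $j$ fixed, the indicator variables $\mathbf{1}\{B_{r,j}=b_{q_r}\}$, $r=1,\dots,R$, are independent. Hence $N_H$ is a sum of independent Bernoullis with mean at least $(1-2\delta)R$ and, for each $j\neq H$, $N_j$ is a sum of independent Bernoullis with mean at most $(\tfrac{1}{2}+3\delta)R$. A standard Chernoff bound gives constants $c_1,c_2>0$ (depending on $\delta$, which we take small) such that
\[
\Pr\bigl(N_H \leq (1-3\delta)R\bigr) \leq e^{-c_1 R}, \qquad \Pr\bigl(N_j \geq (\tfrac{1}{2}+4\delta)R\bigr) \leq e^{-c_2 R}\ \text{for each } j\neq H.
\]
Since $R=\Omega(\log n)$ with a sufficiently large hidden constant, a union bound over the $n-1$ indices $j\neq H$ makes the total failure probability at most $\delta$. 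On the complementary event, $N_H > N_j$ for every $j\neq H$, so $H$ is the unique maximizer and Selector returns $H$.

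The only subtlety, and the main thing to check carefully, is the independence structure in the concentration step: I need the randomness that determines $b_{q_r}$ (namely $B_r$, the JL/AMS randomness, and the JLBP/$Z$ randomness used by Timer) to be independent across rounds for each fixed $j$. This follows from the algorithm's description (Table of randomness, and the fact that each round of Sieve uses its own $B_r$ and its own pair of AMS sketches), together with the fact that conditioning on the times $q_0,\dots,q_R$ produced by Timer is harmless because Timer's randomness is independent of the $B_r$'s and of Sieve's AMS randomness. Once independence is in hand, the rest is a routine Chernoff-plus-union-bound argument.
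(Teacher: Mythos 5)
Your proposal is correct and takes essentially the same route as the paper's proof: apply Pair Correctness (AMS case) in each round using the everywhere-heavy hypothesis, note that the per-round randomness (the $B_r$'s and AMS sketches) is independent across rounds, and conclude via Chernoff plus a union bound over $j\neq H$ that $H$ uniquely maximizes $N_j$. You have merely made explicit what the paper compresses into a sentence, including the correct observation that conditioning on Timer's output times is harmless because Timer's randomness is independent of Sieve's.
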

\iftoggle{shortversion}{}{
\begin{proof}
  Lemma~\ref{lem: pair correctness} in the AMS case implies that the outcome of round $r$ satisfies $\Pr (b_r = B_{r,H})\geq 1-3\delta$ and $\Pr (b_r=B_{r,j})\leq \frac{1}{2}+3\delta$.
  The random bits used in each iteration of the for loop within Sieve are independent of the other iterations.
  Upon choosing the number of rounds $R=O(\log n)$ to be sufficiently large, Chernoff's Inequality implies that, with high probability, $H$ is the unique item in $\argmax_j\#\{r\in[R]\mid B_{r,j}=b_r\}$.
Therefore, Selector returns $H$.
\end{proof}
}%\iftoggle{shortversion}

\begin{algorithm}
  \begin{algorithmic}
    \Procedure{$\ell_2$HeavyHitters}{Stream $S=(p_1,p_2,\ldots,p_m)$}
    \State $Q\gets O(\log\epsilon^{-1})$, $B\gets O(\epsilon^{-2})$
    \State Select indep.\ 2-universal hash functions 
    \State \hspace{20pt}$h_1,\ldots,h_Q,h'_1,\ldots,h_Q':[n]\to [B]\text{ and }\sigma_1,\ldots,\sigma_Q:[n]\to\{-1,1\}.$
    \State $\hat{F}_2\gets (1\pm\frac{\epsilon}{10})F_2$ using AMS~\cite{ams99}
    \State $\hat{\calH}\gets\emptyset$
    \For{$(q,b)\in Q\times B$}
    \State Let $S_{q,b}$ be the stream of items $i$ with $h_{q}(i)=b$
    \State $c_{q,b}\gets \sum_{j:h_q'(j)=b}\sigma_q(j)f_j$ \Comment{The CountSketch~\cite{ccf04}}
    \State $H\gets \textsc{CountSieve}(S_{q,b})$
%    \State Let $X_{0,\ell}$ and $X_{1,\ell}$, for $\ell\in L$, be the final \textsc{JLBP} values from \textsc{Amplifier}
%    \State $N\gets \#\{\ell\in L \mid ||X_{0,\ell}|-|X_{1,\ell}||\geq \frac{3\epsilon}{4}\sqrt{\hat{F}_2}\}$
%    \State If $|N|\geq \frac{1}{4}L$ then add a copy of $H$ to $\hat{\calH}$
    \EndFor
    \State Remove from $\hat{\calH}$ any item such that $i$ such that $\text{median}_q\{|c_{q,h_q(i)}|\}\leq \frac{3\epsilon}{4}\hat{F}_2$.
    \State \Return {$\hat{\calH}$}
    \EndProcedure
  \end{algorithmic}
  \caption{$\ell_2$ heavy hitters algorithm.}\label{algo: F2heavy}
\end{algorithm}
\subsection{CountSieve correctness}
%Formerly \input{Sec-CountSieveCorrectness}

We now have all of the pieces in place to prove that CountSieve correctly identifies a sufficiently heavy heavy hitter $H$.
As for the storage bound and Theorem~\ref{thm:main}, the entire algorithm fits within $O(\log n\log\log n)$ bits except the $R=O(\log n)$ hash functions required by Sieve.
We defer their replacement to \iftoggle{shortversion}{the full version.}{Theorem~\ref{thm: sieve space} in Section~\ref{sec:randomness}.}
\begin{theorem}[CountSieve Correctness]\label{thm: countsieve works}
  If $H$ is a $400C^2$-heavy hitter then, with probability at least $0.95$ CountSieve returns $H$.
The algorithm uses $O(\log{n}\log\log{n})$ bits of storage and can be implemented with $O(\log n\log\log n)$ stored random bits.
\end{theorem}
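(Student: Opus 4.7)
The plan is to chain the three subroutine correctness lemmas: verify in sequence that (i) the input stream meets the hypothesis of Amplifier Correctness, (ii) the amplified substream $S_0$ meets the hypotheses of Timer Correctness, and (iii) the breakpoints produced by Timer make $H$ everywhere heavy so that Sieve/Selector Correctness applies. A union bound over the three failure events then gives the stated success probability, and the space bound comes from totaling the bits used by each stage together with the randomness accounting deferred to Section~\ref{sec:randomness}.

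First I would set $t_0 = \min\{t : f_H^{(t)} \geq f_H/2\}$. Since $H$ is $400C^2$-heavy, $(f_H^{(t_0)})^2 \geq f_H^2/4 \geq 100 C^2 \sum_{j\neq H}f_j^2 \geq 4C^2 \sum_{j\neq H}f_j^2$, which is the hypothesis of Lemma~\ref{lem: amplifier correctness}. Thus, with probability at least $1-2\delta$, every update to $H$ after time $t_0$ appears in $S_0$ and the set $W$ of surviving non-$H$ items satisfies $\sum_{j\in W}f_j^2 \leq e^{-L/25}\sum_{j\neq H}f_j^2$. Choosing the constant hidden in $L=O(\log\log n)$ large enough that $\tau^2 e^{-L/25}$ is a sufficiently small constant, I would verify that the $S_0$-frequency vector $\tilde f$ meets Timer's hypotheses: the bound $\tilde f_H \geq f_H/2$, combined with $\sum_{j\neq H}\tilde f_j^2\leq\sum_{j\neq H}f_j^2$ and the amplified bound on the suffix past $t^*$ (using that the $S_0$-time $t^*$ lies at or after original $t_0$, which holds because $\tilde f_H^{(t_0)} \leq f_H/2 \leq \tilde f_H/2$, and so only items in $W$ contribute to the suffix), gives both the $400C^2$-heaviness and the $25C^2\tau^2$ half-suffix heaviness up to a harmless constant-factor strengthening of the input heaviness. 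The low-frequency hypothesis $\tilde f_H \geq \tau^4$ is handled by assuming $f_H \geq 2\tau^4 = O(\log^4 n)$; when $f_H$ is smaller, $F_2=O(\log^8 n)$ and the problem can be solved exactly within the target budget. Lemma~\ref{lem: timer is good} then yields, with probability $1-4\delta$, breakpoints $q_0,\ldots,q_R$ for which $H$ is everywhere heavy, and Lemma~\ref{lem: Selector outputs HH} delivers Selector's identification of $H$ with a further loss of $\delta$. A union bound yields success probability at least $1-7\delta \geq 0.95$ for a suitably small constant $\delta$.

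For the space, Amplifier runs $L=O(\log\log n)$ copies of Pair in parallel; each Pair stores two $d$-dimensional JLBP sketches in $O(\log n)$ working bits, for a subtotal of $O(\log n\log\log n)$. Timer uses one JLBP and a round counter in $O(\log n)$ bits. Sieve runs one pair of AMS sketches at a time ($O(\log n)$ bits for constant $\delta$) and must preserve, for each of the $R=O(\log n)$ completed rounds, one bit and one hash seed to hand to Selector. The main obstacle is that storing the $R$ Sieve hash seeds naively costs $\Theta(\log^2 n)$ bits; the fix, deferred to Section~\ref{sec:randomness}, is to derandomize these seeds with Nisan's PRG and, in parallel, to exploit the JL compression inside JLBP so that each Bernoulli process is generated from $O(\log n)$ random bits rather than $n$. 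Both ingredients are essential: together they drive the stored random bits down to $O(\log n\log\log n)$, completing the proof.
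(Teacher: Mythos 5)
Your overall decomposition matches the paper's: chain Amplifier, Timer, and Sieve/Selector correctness via a union bound, plus the space accounting deferred to Section~\ref{sec:randomness}. But there is a genuine gap in how you dispose of the low-frequency case $f_H < \tau^4$, and a smaller one in the union-bound arithmetic.

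The paper does not ``solve exactly'' when $f_H$ is small; CountSieve has no separate branch for this, and you haven't argued that one exists within $O(\log n\log\log n)$ bits. Even with $F_2 = O(\log^8 n)$, the stream may have $O(\log^8 n)$ distinct items, so storing them explicitly costs $O(\log^9 n)$ bits, well past budget, and a Misra--Gries-type exact counter offers no obvious rescue given the item-identifier cost. The paper's argument instead shows that CountSieve \emph{as written} still succeeds: because $H$ is $400C^2$-heavy and $f_H < \tau^4$, one has $\sum_{j\neq H} f_j^2 \leq \tau^8 = \polylog(n)$, and by taking the constant in $L = O(\log\log n)$ large enough, the second conclusion of Lemma~\ref{lem: amplifier correctness} gives $\sum_{j\in W} f_j^2 \leq e^{-L/25}\sum_{j\neq H} f_j^2 < 1$. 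Since frequencies are integers, $W$ contains no item that actually appears, so $H$ is the only element of the amplified substream after $t^*$; then $H$ is everywhere heavy for whatever $q$ Timer returns, trivially, and Selector recovers it. You should replace the ``solve exactly'' claim with this argument (or something equivalent that stays inside the algorithm).

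Separately, your union bound counts $2\delta + 4\delta + \delta = 7\delta$, but omits the probability-$\delta$ event that the shared Kane--Meka--Nelson matrix $T$ fails to be a $(1\pm\gamma)$-embedding of all the vectors used across the lemmas. Including it gives $1 - 8\delta \geq 0.95$ for $\delta \leq 1/200$, matching the paper. Your reduction of Timer's hypotheses from the Amplifier conclusion (via the suffix bound on $W$ and the factor $\tau^2 e^{-L/25}$) is in the right spirit, and the space accounting is essentially the paper's.
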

\iftoggle{shortversion}{}{
\begin{proof}
  We use Theorem~\ref{thm: meka JL} to generate the JL transformation $T$.
  Each of our lemmas requires that $T$ embeds a (possible different) polynomially sized set of vectors, so, for $\delta=\Omega(1)$, Theorem~\ref{thm: meka JL} implies that, with probability at least $1-\delta$, $T$ embeds all of the necessary vectors with seed length $O(\log{n})$, and the entries in $T$ can be computed in space $O(\log{n}\log\log{n})$ bits of space.
  Because of the heaviness assumption, the conclusion of Lemma~\ref{lem: amplifier correctness} fails to hold for $t_0=t^*$ (defined in Lemma~\ref{lem: timer is good}) with probability at most $2\delta$.
  When that failure does not occur, the second and third hypotheses in Lemma~\ref{lem: timer is good} hold.
  The first hypothesis is that $f_H\geq\tau^4$, suppose it holds.
  Then the probability that $H$ fails to be everywhere heavy for the sequence $q$ that is output by Timer is at most $4\delta$.
  In this case, according to Lemma~\ref{lem: Selector outputs HH}, Sieve and Selector correctly identify $H$ except with probability at most $\delta$.
  Therefore, the algorithm is correct with probability at least $1-8\delta\geq 0.95$, by choosing $\delta \leq 1/200$. 
  If $f_H<\tau^4$, then because $H$ is a heavy hitter, we get $\sum_{j\neq H}f_j^2 \leq \tau^8 = O(\log^8 n)$.
  Then we choose the constant factor in $L$ large enough so that, the second conclusion of Lemma~\ref{lem: amplifier correctness} implies $\sum_{j\in W}f_j^2\leq e^{-L/25}<1$.
  This means that $H$ is the only item that passes the amplifier for all $t\geq t^*$, and, no matter what is the sequence output by Timer, $H$ is everywhere heavy because it is the only item in the substream.
 Thus, in this case the algorithm also outputs $H$.

  Now we analyze the storage and randomness.
Computing entries in the Kane-Meka-Nelson JL matrix requires $O(\log n\log\log n)$ bits of storage, by Theorem~\ref{thm: meka JL}, and there is only one of these matrices.
Amplifier stores $L=O(\log\log n)$ counters. Sieve, Timer, and Selector each require $O(\log n)$ bits at a time (since we discard any value as soon as it is no longer needed).
Thus the total working memory of the algorithm is $O(\log n\log\log n)$ bits.
The random seed for the JL matrix has $O(\log n)$ bits.
Each of the $O(\log\log n)$ Bernoulli processes requires $O(\log n)$ random bits.
By Theorem~\ref{thm: sieve space} below, the remaining random bits can be generated with Nisan's generator using a seed of $O(\log n\log\log n)$ bits.
Using Nisan's generator does not increase the storage of the algorithm.
Accounting for all of these, the total number of random bits used by CountSieve, which also must be stored, is $O(\log n\log\log n)$.
Therefore, the total storage used by the algorithm is $O(\log n\log\log n)$ bits.
\end{proof}
}%\iftoggle{shortversion}

%Now we can prove our main theorem.
\newtheorem*{hhtheorem}{Theorem~\ref{thm:main}}
\begin{hhtheorem}[$\ell_2$-Heavy Hitters]
\heavyhitterstheorem
\end{hhtheorem}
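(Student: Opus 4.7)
The plan is to follow Algorithm~\ref{algo: F2heavy}: reduce the problem of finding all $\ell_2$-heavy hitters to that of finding a single $400C^2$-heavy hitter via hashing, invoke CountSieve (Theorem~\ref{thm: countsieve works}) in each bucket, and then prune the resulting candidate set with a standard CountSketch combined with an AMS estimate of $F_2$.

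The core of the analysis is the hashing step. There are at most $k \leq 1/\epsilon^2$ indices with $f_i \geq \epsilon\sqrt{F_2}$. Fix such a heavy hitter $H$ and a pairwise-independent hash function $h_q : [n] \to [B]$ with $B = c/\epsilon^2$ for a sufficiently large constant $c$. By pairwise independence, the expected $F_2$-mass of items $j \neq H$ colliding with $H$ in bucket $h_q(H)$ is at most $(F_2 - f_H^2)/B$, which is at most a small constant fraction of $f_H^2/(400C^2)$ whenever $f_H^2 \geq \epsilon^2 F_2$; a Markov bound then makes $H$ a $400C^2$-heavy hitter in $S_{q, h_q(H)}$ with some constant probability $p_0 > 0$. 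Amplifying with $Q = O(\log(1/\epsilon))$ independent hash functions drives the per-heavy-hitter failure probability below $\epsilon^2/100$, and a union bound over the at most $1/\epsilon^2$ heavy hitters ensures that, with constant probability, every $\epsilon$-heavy hitter is $400C^2$-heavy in at least one of the $QB$ buckets.

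Running CountSieve in each bucket and union-bounding over the $QB = O(\epsilon^{-2}\log(1/\epsilon))$ instances (boosting each CountSieve's failure probability to $O(1/(QB))$ with only a constant-factor space overhead via repetition and majority vote) produces a candidate set $\hat{\calH}$ that contains every $\epsilon$-heavy hitter with constant probability and has size $O(\epsilon^{-2}\log(1/\epsilon))$. The parallel CountSketch table with $B = O(1/\epsilon^2)$ columns and $Q = O(\log(1/\epsilon))$ rows then yields $\text{median}_q |c_{q, h_q'(i)}| = f_i \pm (\epsilon/10)\sqrt{F_2}$ simultaneously for every $i \in \hat{\calH}$ via the standard CountSketch analysis and a union bound; combined with the $(1 \pm \epsilon/10)$-approximation $\hat{F}_2$ from AMS, a threshold of order $\tfrac{3\epsilon}{4}\sqrt{\hat{F}_2}$ correctly retains items with $f_i \geq \epsilon\sqrt{F_2}$ and correctly discards items with $f_i \leq (\epsilon/2)\sqrt{F_2}$.

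For the space bound, the $QB$ copies of CountSieve dominate: each uses $O(\log n \log\log n)$ bits by Theorem~\ref{thm: countsieve works}, yielding $O(\epsilon^{-2}\log(1/\epsilon)\log n \log\log n)$ bits in total; the CountSketch, AMS, and $O(\log(1/\epsilon))$ hash functions contribute at most $O(\epsilon^{-2}\log(1/\epsilon)\log n)$ bits and are subsumed. The main obstacle is the hashing argument in the second paragraph: with only pairwise independence we must simultaneously control tail $F_2$-mass in each heavy hitter's bucket and discourage collisions between heavy hitters, all while keeping $B = O(1/\epsilon^2)$ and $Q = O(\log(1/\epsilon))$. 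Given this isolation guarantee, the remainder is a black-box combination of CountSieve, CountSketch, and AMS.
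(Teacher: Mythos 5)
Your overall structure matches the paper's Algorithm~\ref{algo: F2heavy} proof closely: hash into $B=O(1/\epsilon^2)$ buckets across $Q=O(\log(1/\epsilon))$ rows, show via pairwise independence and Markov that each $\epsilon$-heavy hitter $H$ lands in a bucket where it is $400C^2$-heavy with constant probability, run CountSieve per bucket to identify candidates, and prune with a CountSketch plus an AMS estimate of $F_2$. The isolation argument, the Chernoff-over-rows amplification for each heavy hitter, the union bound over the at most $1+1/\epsilon^2$ heavy hitters, the pruning step, and the space accounting all line up with the paper.

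However, there is one step that does not hold as stated. You propose ``boosting each CountSieve's failure probability to $O(1/(QB))$ with only a constant-factor space overhead via repetition and majority vote.'' Since $QB=\Theta(\epsilon^{-2}\log(1/\epsilon))$, driving a constant failure probability down to $O(1/(QB))$ by majority vote requires $\Theta(\log(QB))=\Theta(\log(1/\epsilon))$ independent copies of CountSieve per bucket, not a constant number. That would inflate the total space to $O(\epsilon^{-2}\log^2(1/\epsilon)\,\log n\log\log n)$, an extra $\log(1/\epsilon)$ factor over the claimed bound. The paper avoids this by \emph{not} boosting per bucket: it runs a single unboosted CountSieve in each bucket, folds the Markov event (that $H$ is $400C^2$-heavy in its bucket in row $q$) and the CountSieve success event into one per-row success event of probability at least $0.8$, and then applies Chernoff over the $Q=O(\log(1/\epsilon))$ independent rows to conclude that each fixed $H\in\calH_1$ is recovered in at least one row with probability $1-\poly(\epsilon)$. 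A union bound over $\calH_1$ then finishes the containment $\calH_1\subseteq\hat\calH$ with no further boosting. If you replace your per-instance boosting with this joint per-row event and Chernoff-over-rows argument, your proof recovers the stated space bound exactly.
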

\iftoggle{shortversion}{}{
\begin{proof}
The algorithm is Algorithm~\ref{algo: F2heavy}.
It has the form of a CountSketch~\cite{ccf04} with $Q=O(\log 1/\epsilon)$ ``rows'' and $B=8(10 C)^2/\epsilon^2$ ``buckets'' per row, wherein we run one instance of CountSieve in each bucket to identify potential heavy hitters and also the usual CountSketch counter in each bucket.
Finally, the algorithm discriminates against non-heavy hitters by testing their frequency estimates from the CountSketch.
We will assume that the AMS estimate $\hat{F}_2$ is correct with probability at least $8/9$.

Let $\calH_k = \{i\mid f_i\geq \frac{\epsilon}{k}\sqrt{F_2}\}$ and let $\hat{\calH}$ be set of distinct elements returned by Algorithm~\ref{algo: F2heavy}.
To prove the theorem, it is sufficient to prove that, with probability at least $2/3$, $\calH_1\subseteq \hat{\calH}\subseteq \calH_2$.

Let $H\in \calH_1$ and consider the stream $S_{q,h_q(H)}$ at position $(q,h_q(H))$.
We have
\[\E(\sum_{\substack{j\neq H \\ h_q(j)=h_q(H)}}f_j^2)\leq \frac{\epsilon^2}{8(10C)^2}F_2.\]
Let $E_{q,H}$ be the event that
\[\sum_{\substack{j\neq H \\ h_q(j)=h_q(H)}}f_j^2\leq \frac{\epsilon^2}{(10C)^2}F_2,\]
so by Markov's Inequality $\Pr(E_{q,H})\geq 7/8$.
When $E_{q,H}$ occurs $H$ is sufficiently heavy in $S_{q,h_q(H)}$ for CountSieve.
By Theorem~\ref{thm: countsieve works}, with probability at least $\frac{7}{8}-\frac{1}{20}\geq 0.8$, CountSieve identifies $H$.
Therefore, with the correct choice of the constant factor for $Q$, a Chernoff bound and a union bound imply that, with probability at least $1-1/9$, every item in $\calH_1$ is returned at least once by a CountSieve.

Let $\hat{\calH}'$ denote the set $\hat{\calH}$ before any elements are removed in the final step.
Since CountSieve identifies at most one item in each bucket, $|\hat{\calH}'|=O(\epsilon^{-2}\log\epsilon^{-1})$. 
By the correctness of CountSketch~\cite{ccf04} and the fact that it is independent of $\hat{H}'$, we get that, with probability at least $1-1/9$, for all $i\in\hat{H}'$ 
\[\left|f_i - \text{median}_q\{|c_{q,h_q(i)}|\}\right|\leq \frac{\epsilon}{10C}\sqrt{F_2}.\]
When this happens and the AMS estimate is correct, the final step of the algorithm correctly removes any items $i\notin\calH_2$ and all items $i\in\calH_1$ remain. 
This completes the proof of correctness.

The storage needed by the CountSketch is $O(BQ\log{n})$, storage needed for the CountSieves is $O(BQ\log{n}\log\log{n})$, and the storage needed for AMS is $O(\epsilon^{-2}\log{n})$.
Therefore the total storage is $O(BQ\log{n}\log\log{n}) = O(\frac{1}{\epsilon^2}\log{\frac{1}{\epsilon}}\log{n}\log\log{n})$ bits.
\end{proof}
}%\iftoggle{shortversion}

\begin{corollary}
There exists an insertion-only streaming algorithm that returns an additive $\pm\epsilon\sqrt{F_2}$ approximation to $\ell_\infty$, with probability at least $2/3$.
The algorithm requires $O(\frac{1}{\epsilon^2}\log{\frac{1}{\epsilon}}\log{n}\log\log{n})$ bits of space.
\end{corollary}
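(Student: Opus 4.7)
The plan is to reduce the $\ell_\infty$ estimation problem directly to the $\ell_2$-heavy hitters algorithm of Theorem~\ref{thm:main}, together with per-item frequency estimates obtained from the internal CountSketch already present in Algorithm~\ref{algo: F2heavy}. Conceptually, the heavy hitter algorithm returns a short candidate list for $\argmax_j f_j$, and CountSketch lets us evaluate each candidate; we then either take the best candidate's estimated frequency or output $0$ when no candidate is found.

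More concretely, I would run Algorithm~\ref{algo: F2heavy} with parameter $\epsilon' = \epsilon/8$ (any sufficiently small constant multiple of $\epsilon$ works). Let $\hat{\calH}$ be its output set, let $\hat{f}_i = \text{median}_q\{|c_{q,h_q(i)}|\}$ be the CountSketch estimate already computed inside the algorithm, and let $\hat{F}_2$ be its AMS estimate. The final output is
\[\hat{\ell}_\infty \;=\; \max\bigl\{\,0,\; \max_{i\in\hat{\calH}} \hat{f}_i \,\bigr\}.\]
The analysis splits on the size of $f^* := \max_j f_j$. If $f^* \geq \epsilon\sqrt{F_2}$, then $\argmax f_j$ is an $\epsilon'$-heavy hitter, so by Theorem~\ref{thm:main} it is contained in $\hat{\calH}$ with probability at least $2/3$, and the standard CountSketch guarantee gives $|\hat{f}_j - f_j|\leq \frac{\epsilon}{8}\sqrt{F_2}$ for every $j$ (in particular every $j\in\hat{\calH}$). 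Hence the maximum is approximated within $\pm \frac{\epsilon}{8}\sqrt{F_2}$. If instead $f^* < \epsilon\sqrt{F_2}$, then either $\hat{\calH}=\emptyset$, in which case our output $0$ is within $\epsilon\sqrt{F_2}$ of $f^*$, or some $i\in\hat{\calH}$ survives the final filtering step, in which case $f_i \geq (\epsilon'/2)\sqrt{F_2}$ (guaranteed by Theorem~\ref{thm:main}) and $\hat{f}_i$ is close to $f_i$; combining these bounds shows $|\hat{\ell}_\infty - f^*|\leq (2\epsilon' + \tfrac{\epsilon}{8})\sqrt{F_2}\leq \epsilon\sqrt{F_2}$ by our choice of $\epsilon'$.

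The space is inherited directly from Theorem~\ref{thm:main}: the heavy hitters algorithm uses $O(\frac{1}{\epsilon^2}\log\frac{1}{\epsilon}\log n\log\log n)$ bits, and this already dominates both the embedded CountSketch and the AMS estimate for $\hat{F}_2$, so the claimed space bound follows. There is no serious obstacle here; the only step requiring any care is the case $f^* < \epsilon\sqrt{F_2}$ where some spurious candidate survives, which is handled by choosing the constant in $\epsilon'$ small enough that $2\epsilon' + \frac{\epsilon}{8} \leq \epsilon$. Overall the corollary is essentially a black-box consequence of Theorem~\ref{thm:main} plus the standard ``point query'' property of CountSketch.
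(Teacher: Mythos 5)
Your proposal is correct and takes essentially the same approach as the paper: run Algorithm~\ref{algo: F2heavy}, output the largest CountSketch median among the returned heavy hitters, or $0$ if none are returned. The rescaling to $\epsilon'=\epsilon/8$ is harmless but unnecessary---the CountSketch inside Algorithm~\ref{algo: F2heavy} already has error $O(\epsilon/C)\sqrt{F_2}\ll\epsilon\sqrt{F_2}$ per point query, so running with parameter $\epsilon$ directly already suffices.
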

\iftoggle{shortversion}{}{
\begin{proof}
Use Algorithm~\ref{algo: F2heavy}.
If no heavy-hitter is returned then the $\ell_\infty$ estimate is 0, otherwise return the largest of the CountSketch medians among the discovered heavy hitters.
The correctness follows from Theorem~\ref{thm:main} and the correctness of CountSketch.
\end{proof}
}

\section{Chaining Inequality}\label{sec:chaining}

We call these inequalities Chaining Inequalities after the Generic Chaining, which is the technique that we use to prove it.
The book~\cite{talagrand2014upper} by Talagrand contains an excellent exposition of the subject.
\iftoggle{shortversion}{
The full version contains a more detailed discussion of the method.
}{%shortversion
Let $(X_t)_{t\in T}$ be a Gaussian process.
The Generic Chaining technique concerns the study of the supremum of $X_t$ in a particular metric space related to the variances and covariances of the process.
The metric space is $(T,d)$ where $d(s,t) = (\E(X_s-X_t)^2)^{\frac{1}{2}}$.
The method takes any finite chain of finite subsets $T_0\subseteq T_1\subseteq\cdots\subseteq T_n\subseteq T$ and uses $(X_t)_{t\in T_i}$ as a sequence of successive approximations to $(X_t)_{t\in T}$ wherein $X_t$, for $t\notin T_i$, is approximated by the value of the process at some minimizer of $d(t,T_i)=\min\{d(t,s)\mid s\in T_i\}$.
To apply the Generic Chaining one must judiciously choose the chain in order to get a good bound, and the best choice necessarily depends on the structure of the process.
We will exploit the following lemma.
}%shortversion
\begin{lemma}[\cite{talagrand2014upper}]\label{lem: chain of sets}
Let $\{X_t\}_{t\in T}$ be a Gaussian process and let $T_0\subseteq T_1 \dots \subseteq T_n \subseteq T$  be a chain of sets such that
$|T_0|=1$ and $|T_i|\le 2^{2^i}$ for $i\ge 1$.
Then
\iftoggle{shortversion}{
$\E  \sup_{t\in T} X_t \le O(1) \sup_{t\in T} \sum_{i\ge 0}2^{i/2}d(t, T_i)$.
}{
\begin{equation}\label{chain}
\E  \sup_{t\in T} X_t \le O(1) \sup_{t\in T} \sum_{i\ge 0}2^{i/2}d(t, T_i).
\end{equation}
}%shortversion
\end{lemma}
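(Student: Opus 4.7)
The plan is the standard generic chaining argument. For each $t \in T$ and $i \geq 0$ fix a nearest point $\pi_i(t) \in T_i$ so that $d(t,\pi_i(t)) = d(t, T_i)$. Because $|T_0|=1$, $\pi_0(t)$ equals a single element $t_0$ for every $t$, and since the process is centered we may assume $X_{t_0}=0$; then $\sup_t X_t \geq 0$ and telescoping gives $X_t = \sum_{i \geq 1}(X_{\pi_i(t)} - X_{\pi_{i-1}(t)})$ (truncating at $i=n$, which is harmless if we choose $T_n$ to be a fine enough net of $T$). The goal is to bound this series uniformly in $t$ in terms of $S := \sup_t \sum_{i\geq 0} 2^{i/2} d(t, T_i)$.

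Each increment $X_{\pi_i(t)} - X_{\pi_{i-1}(t)}$ is a centered Gaussian with standard deviation $d(\pi_i(t), \pi_{i-1}(t)) \leq d(t, T_i) + d(t, T_{i-1}) \leq 2 d(t, T_{i-1})$ by the triangle inequality in the canonical metric. For any fixed pair $(s, s') \in T_i \times T_{i-1}$, the Gaussian tail bound gives
\begin{equation*}
\Pr\bigl(|X_s - X_{s'}| > \lambda \cdot 2^{i/2} d(s, s')\bigr) \leq 2\exp(-\lambda^2 2^{i-1}).
\end{equation*}
There are at most $|T_i|\cdot|T_{i-1}| \leq 2^{2^i+2^{i-1}} \leq 2^{2^{i+1}}$ such pairs at level $i$, so union-bounding over all pairs and summing over $i \geq 1$ produces a total that is summable as soon as $\lambda$ is a sufficiently large absolute constant $\lambda_0$ (any $\lambda_0 > 2$ suffices, since we need $\lambda_0^2 \, 2^{i-1}$ to dominate $2^{i+1} \ln 2$). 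Choose $\lambda_0$ so that the total failure probability is at most $1/4$.

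On the complementary event, for every $t \in T$ simultaneously, $|X_{\pi_i(t)} - X_{\pi_{i-1}(t)}| \leq 2\lambda_0 \cdot 2^{i/2} d(t, T_{i-1})$, and summing the telescope yields $\sup_t X_t \leq 4\lambda_0 S$. To upgrade this from a high-probability statement to an expectation bound, rerun the same union bound with threshold $\lambda_0 + u$ instead of $\lambda_0$, for each $u \geq 0$: the same arithmetic gives
\begin{equation*}
\Pr\bigl(\sup_t X_t > 4(\lambda_0+u)\,S\bigr) \leq C_1 \exp(-c_1 u^2)
\end{equation*}
for absolute constants $c_1, C_1 > 0$, because the extra factor $\exp(-u^2 2^{i-1})$ in each term is more than enough to keep the sum over $i$ convergent and bounded by $\exp(-c_1 u^2)$. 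Integrating this Gaussian-like tail then yields $\E\sup_t X_t = O(S)$, as claimed.

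The main obstacle, and the only place where the specific form of the statement enters, is the delicate match between the double-exponential growth $|T_i| \leq 2^{2^i}$ and the Gaussian tail $\exp(-\lambda^2 2^{i-1})$: the factor $2^{i/2}$ in the bound is exactly the smallest scale that renders the level-$i$ union bound summable. Once this scale is fixed, verifying that the constants line up and that the tail probabilities integrate to $O(S)$ is essentially bookkeeping.
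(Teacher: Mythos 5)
Your proof is correct and is the standard generic chaining argument; the paper does not reprove this lemma but simply cites it from Talagrand's book \cite{talagrand2014upper}, and your telescoping-plus-union-bound-over-admissible-nets argument (with the shift to $X_{t_0}=0$, the $2^{i/2}$ scaling matched against $|T_i|\cdot|T_{i-1}|\le 2^{2^{i+1}}$, and the tail integration to pass from a high-probability bound to an expectation bound) is precisely the proof one finds there.
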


The Generic Chaining also applies to Bernoulli processes, but, for our purposes, it is enough that we can compare related Gaussian and Bernoulli processes.
\begin{lemma}[\cite{talagrand2014upper}]\label{lem: Gaussion to Bernoulli}
  Let $A\in\R^{m\times n}$ be any matrix and let $G$ and $B$ be $n$-dimensional vectors with independent  coordinates distributed as $N(0,1)$ and $\Rademacher$, respectively.
  Then the Gaussian process $X = AG$ and Bernoulli process $Y=AB$ satisfy
  \iftoggle{shortversion}{
 $\E  \sup_{t\in T} Y_t \le  \sqrt{\frac{\pi}{2}}\E  \sup_{t\in T} X_t.$
  }{%shortversion
\begin{equation*}
 \E  \sup_{t\in T} Y_t \le  \sqrt{\frac{\pi}{2}}\E  \sup_{t\in T} X_t.
\end{equation*}
}%\shortversion
\end{lemma}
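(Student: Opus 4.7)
The plan is to use a symmetrization-style coupling that writes each standard Gaussian $G_j$ as $\varepsilon_j |G_j|$, where $\varepsilon_1,\ldots,\varepsilon_n$ are independent Rademachers (also independent of $G$), and then apply Jensen's inequality in the direction that pulls the expectation over $|G_j|$ inside the supremum.

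Concretely, first I would observe that, for each coordinate $j$, the random variable $\varepsilon_j |G_j|$ has the same distribution as $G_j$, and the $n$ pairs are independent across $j$. Hence the Gaussian process $X_t = (AG)_t$ has the same distribution as the process $\widetilde X_t = \sum_j A_{tj}\varepsilon_j |G_j|$, so
\[
\E\sup_{t\in T} X_t \;=\; \E\sup_{t\in T}\sum_j A_{tj}\varepsilon_j |G_j|
\;=\; \E_G\,\E_\varepsilon \sup_{t\in T}\sum_j A_{tj}\varepsilon_j |G_j|.
\]

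Next I would condition on $\varepsilon$ and use Jensen's inequality in the other order: since $\sup$ is convex and $\E|G_j|=\sqrt{2/\pi}$, Jensen applied inside the expectation over $G$ (with $\varepsilon$ fixed) gives
\[
\E_G \sup_{t\in T}\sum_j A_{tj}\varepsilon_j |G_j|
\;\geq\; \sup_{t\in T}\sum_j A_{tj}\varepsilon_j\,\E|G_j|
\;=\;\sqrt{\tfrac{2}{\pi}}\,\sup_{t\in T}\sum_j A_{tj}\varepsilon_j.
\]
Taking expectation over $\varepsilon$ and combining with the previous identity yields
\[
\E\sup_{t\in T} X_t \;\geq\; \sqrt{\tfrac{2}{\pi}}\;\E\sup_{t\in T} Y_t,
\]
which rearranges to the claimed bound $\E\sup_{t\in T} Y_t \leq \sqrt{\pi/2}\,\E\sup_{t\in T} X_t$.

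The only subtle point, and the place I would be most careful, is the direction of Jensen's inequality: we need the step that moves $\E_G$ \emph{inside} the supremum, which requires the map $(x_j)_j \mapsto \sup_t \sum_j A_{tj}\varepsilon_j x_j$ to be convex in $(x_j)$ — which it is, being a supremum of linear functions — so that $\E_G \sup_t(\ldots|G_j|\ldots) \geq \sup_t(\ldots \E|G_j|\ldots)$. The only other mild hygiene issue is making sure that $T$ is finite (it is, by assumption) so no measurability issues arise when swapping $\sup$ and $\E$. Given those, the argument is a clean three-line derivation, and the constant $\sqrt{\pi/2}$ comes directly from $1/\E|G_1|$.
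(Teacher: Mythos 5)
Your proof is correct, and it is exactly the standard argument for this comparison inequality (it is essentially the proof given in the cited source, Talagrand's book): decompose $G_j = \varepsilon_j|G_j|$, condition on $\varepsilon$, apply Jensen to the convex map $x\mapsto\sup_t\sum_j A_{tj}\varepsilon_j x_j$ to pull $\E_G$ inside, and use $\E|G_j|=\sqrt{2/\pi}$. The paper itself cites the lemma rather than proving it, so there is no alternative in-paper argument to compare against; your derivation, including the check that $T$ finite avoids measurability issues, is complete and the constant $\sqrt{\pi/2}=1/\E|N(0,1)|$ is right.
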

%Now we can prove the Chaining Inequality.
\begin{theorem}[Chaining Inequality]\label{thm: chaining inequality}
  Let $Z_1,\ldots,Z_n\ldots\iidsim\calN(0,1)$ and let $(f^{(t)})_{t\in[m]}$ be the sequence of frequency vectors of an insertion-only stream.
  There exists a universal constant $C'>0$ such that if $X_t = \sum_{j=1}^n Z_jf_{j}^{(t)}$, for $t\in[m]$, then
 \iftoggle{shortversion}{$\E \sup_i|X_i| \leq C'\sqrt{\Var(X_m)} = C'\|f^{(m)}\|_2$.
 }{%shortversion
\begin{equation}\label{eq: gaussian chaining}
\E \sup_i|X_i| \leq C'\sqrt{\Var(X_m)} = C'\|f^{(m)}\|_2.
\end{equation}
}%shortversion
  If $\bar Z_1,\ldots,\bar Z_n\ldots\iidsim\Rademacher $ and 
   $Y_t = \sum_{j=1}^n \bar Z_jf_{j}^{(t)}$, for $t\in[m]$, then
 \iftoggle{shortversion}{$\E \sup_i|Y_i| \leq C'\sqrt{\Var(Y_m)}= C'\|f^{(m)}\|_2$.
 }{%shortversion
 \begin{equation}\label{eq: bernoulli chaining}
\E \sup_i|Y_i| \leq C'\sqrt{\Var(Y_m)}= C'\|f^{(m)}\|_2.
\end{equation}
}%shortversion
\end{theorem}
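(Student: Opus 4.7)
The plan is to apply the generic chaining machinery of Lemma~\ref{lem: chain of sets} to the Gaussian process $(X_t)_{t \in \{0,1,\ldots,m\}}$, where I set $X_0 := 0$. The key insertion-only observation is that the frequency vectors $f^{(t)}$ are coordinatewise nondecreasing in $t$, so for $s \leq t$ and each $j$ one has $(f_j^{(t)} - f_j^{(s)})^2 \leq (f_j^{(t)})^2 - (f_j^{(s)})^2$; summing over $j$ yields
\[
d(s,t)^2 \;=\; \|f^{(t)} - f^{(s)}\|_2^2 \;\leq\; a_t - a_s, \qquad \text{where } a_t := \|f^{(t)}\|_2^2.
\]
Hence the intrinsic chaining metric on the $m+1$ time indices is dominated by the differences of the single nondecreasing scalar sequence $0 = a_0 \leq \cdots \leq a_m = F_2$, and the geometry collapses to that of an interval of length $F_2$ on the real line.

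Given this, I would construct the admissible sequence as follows. Set $T_0 = \{0\}$, and for $i \geq 1$ partition $[0, F_2]$ into $2^{2^i}$ equal sub-intervals of length $F_2 \cdot 2^{-2^i}$; from each nonempty sub-interval pick one time whose $a$-value lies there, doing so by a refining rule so that $T_{i-1} \subseteq T_i$ (at worst a constant factor blowup in $|T_i|$, harmlessly absorbed into the final constant). Then $|T_i| \leq 2^{2^i}$ and for every $t$ some $s \in T_i$ satisfies $|a_t - a_s| \leq F_2 \cdot 2^{-2^i}$, hence $d(t, T_i) \leq \sqrt{F_2}\, 2^{-2^{i-1}}$.

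Plugging into Lemma~\ref{lem: chain of sets} I would then obtain
\[
\E \sup_t X_t \;\leq\; O(1) \sup_t \sum_{i \geq 0} 2^{i/2}\, d(t, T_i) \;\leq\; O(\sqrt{F_2}) \sum_{i \geq 0} 2^{i/2 - 2^{i-1}},
\]
and the (doubly exponentially decaying) series converges to an absolute constant. Since $X_0 = 0$ is in the index set and $X$ is centered Gaussian, a symmetry argument gives $\E \sup_t |X_t| \leq \E \sup_t X_t + \E \sup_t (-X_t) = 2\,\E \sup_t X_t \leq C' \|f^{(m)}\|_2$, which is \eqref{eq: gaussian chaining}. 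The Bernoulli bound \eqref{eq: bernoulli chaining} then follows immediately from Lemma~\ref{lem: Gaussion to Bernoulli} applied to the linear map $z \mapsto (\langle z, f^{(t)}\rangle)_t$ (and to its negation, for the absolute value), losing only a factor of $\sqrt{\pi/2}$.

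The conceptual crux, and the only nontrivial step, is the insertion-only inequality $d(s,t)^2 \leq a_t - a_s$: it reduces a process indexed by $\poly(n)$ points in $\R^n$ to essentially one-dimensional chaining on $[0, F_2]$, which is exactly what lets the doubly-exponential size bounds in Lemma~\ref{lem: chain of sets} absorb all the time indices with no $\sqrt{\log m}$ overhead. Without this monotonicity, a naive union bound over the $m=\poly(n)$ times would force a $\sqrt{\log m}$ factor that the algorithm cannot afford.
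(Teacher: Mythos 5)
Your proposal is correct and follows essentially the same route as the paper: the central observation that insertion-only monotonicity gives $d(s,t)^2 \leq \Var(X_t) - \Var(X_s)$ (you derive it coordinatewise, the paper via nonnegativity of the increment covariance — same inequality), followed by the dyadic-in-variance admissible sequence with $|T_i| \leq 2^{2^i}$ and the doubly-exponentially decaying chaining sum. Your small tweak of adjoining $t=0$ to the index set cleanly handles $\sup_t|X_t|$ without the paper's separate $\E|X_1|$ term, but the argument is otherwise the same.
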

\begin{proof}
Let $T=[m]$.
Define $T_0 = \{t_0\}$, where $t_0$ is the index such that $\Var(X_{t_0}) < 0.5\Var(X_{m}) \le \Var(X_{t_0 + 1})$ and $T_i = \{1, t_{i,1}, t_{i,2},\dots\}$ where for each index $t_{i,j}\in T_i$ $\Var(X_{t_{i,j}}) < \frac{j}{2^{2^i}}\Var(X_{m}) \le \Var(X_{t_{i,j + 1}})$.
This is well-defined because $\Var(X_{t})=\|f^{(t)}\|_2^2$ is the second moment of an insertion-only stream, which must be monotonically increasing.
By construction $|T_i| \le 2^{2^i}$ and, for each $t\in T$, there exist $t_{i,j}\in T_j$ such that $d(t, T_i) = \min(d(t, t_{i,j}), d(t, t_{i,j+1})) \le d(t_{i,j}, t_{i,j+1}) = (\E(X_{t_{i,j+1}}-X_{t_{i,j}})^2)^{\frac{1}{2}}$, where the last inequality holds because $E(X_t^2)$ monotonically increasing with $t$.

Notice that every pair of increments has nonnegative covariance because the stream is insertion-only.
Thus, the following is true:
\iftoggle{shortversion}{
\begin{align*}
d(t,t_{i,j+1})^2 &\le \E(X_{t_{i,j+1}}-X_{t_{i,j}})^2 \leq \E(X_{t_{i,j+1}}-X_{t_{i,j}})^2 + 2 \E X_{t_{i,j}}(X_{t_{i,j+1}}-X_{t_{i,j}})\\
& = \E X_{t_{i,j+1}}^2 -\E X_{t_{i,j}}^2 \le\frac{j+1}{2^{2^i}}\E X_{m}^2 - \frac{j-1}{2^{2^i}}\E X_m^2 = \frac{2}{2^{2^i}}\E X_m^2.
\end{align*}
}{%shortversion
\begin{align*}
d(t,t_{i,j+1})^2 &\le \E(X_{t_{i,j+1}}-X_{t_{i,j}})^2\\
& \leq \E(X_{t_{i,j+1}}-X_{t_{i,j}})^2 + 2 \E X_{t_{i,j}}(X_{t_{i,j+1}}-X_{t_{i,j}})\\
& = \E X_{t_{i,j+1}}^2 -\E X_{t_{i,j}}^2 \\
&\le\frac{j+1}{2^{2^i}}\E X_{m}^2 - \frac{j-1}{2^{2^i}}\E X_m^2 = \frac{2}{2^{2^i}}\E X_m^2.
\end{align*}
}%shortversion
Then we can conclude that
\iftoggle{shortversion}{
$\sum_{i\ge 0}2^{i/2}d(t, T_i) \le \sum_{i\ge 0}2^{i/2}\frac{2}{2^{2^{i}}}\sqrt{\E X_m^2} = O(1)\sqrt{\Var(X_m)}$.
}{%shortversion
\[\sum_{i\ge 0}2^{i/2}d(t, T_i) \le \sum_{i\ge 0}2^{i/2}\frac{2}{2^{2^{i}}}\sqrt{\E X_m^2} = O(1)\sqrt{\Var(X_m)}.\]
}%shortversion
Applying \iftoggle{shortversion}{Lemma~\ref{lem: chain of sets}}{ineqality \eqref{chain}} we obtain
$\E  \sup_{t\in T} X_t \le O(1) \sqrt{\Var(X_m)}$. 

In order to bound the absolute value, observe
\begin{equation}\label{eq: sup vs absolute sup}
\sup_t|X_t|\leq |X_1|+\sup |X_t-X_1|\leq|X_1|+\sup_{s,t}(X_t-X_s)=|X_1| + \sup_tX_t + \sup_s(-X_s).
\end{equation}
Therefore, $\E \sup_t|X_t|\leq \E |X_1| + 2\E \sup X_t\leq O(1)\sqrt{\Var(X_m)}$, because $-X_t$ is also Gaussian process with the same distribution as $X_t$ and $\E |X_1|=O(\sqrt{\Var(X_m)})$ because $f^{(1)} = 1$.
This establishes \iftoggle{shortversion}{the Gaussian inequality and the Bernoulli inequality}{\eqref{eq: gaussian chaining} and \eqref{eq: bernoulli chaining}} follows immediately by an application of Lemma~\ref{lem: Gaussion to Bernoulli}.
\end{proof}

Theorem~\ref{thm: chaining inequality} would obviously \emph{not} be true for a stream with deletions, since we may have $\Var(X_m)=0$.
One may wonder if the theorem would be true for streams with deletions upon replacing $\Var(X_m)$ by $\max_t \Var(X_t)$.
This is not true, and a counter example is the stream $(e_1,-e_1,e_2,\ldots,e_n,-e_n)$ which yields $\max_t\Var(X_t) = 1$, but $\E\sup_t|X_t| =\Theta(\sqrt{\log{n}})$.

Theorem~\ref{thm: chaining inequality} does not apply to the process ouput by JLBP, but the covariance structures of the two processes are very similar because $T$ is an embedding.
\iftoggle{shortversion}{
In the full version, we prove basically the same inequality for the JLBP process by mimicking the stategy in~\cite{meka2012ptas}.
}{%shortversion
We can achieve basically the same inequality for the JLBP process by applying Slepian's Lemma, mimicking the stategy in~\cite{meka2012ptas}.
\begin{lemma}[Slepian's Lemma~\cite{ledoux1991probability}]\label{lem: slepian's}
Let $X_t$ and $Y_t$, for $t\in T$, be Gaussian processes such that $\E  (X_s-X_t)^2\leq \E (Y_s-Y_t)^2$, for all $s,t\in T$.
Then, $\E \sup_{t\in T} X_t\leq \E \sup_{t\in T} Y_t$.
\end{lemma}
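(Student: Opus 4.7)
The plan is a Gaussian interpolation argument, going back to Slepian and refined by Fernique. I would first reduce to finite $T$ by monotone convergence in $|T|$, and realize $X$ and $Y$ as independent Gaussian vectors $X,Y\in\R^n$ on a common probability space. Consider the smooth approximation
\[F_\beta(x_1,\ldots,x_n) = \frac{1}{\beta}\log\sum_{i=1}^n e^{\beta x_i},\]
which satisfies $\max_i x_i \leq F_\beta(x) \leq \max_i x_i + \beta^{-1}\log n$. Thus it suffices to prove $\E F_\beta(X)\leq \E F_\beta(Y)$ for every $\beta>0$ and then send $\beta\to\infty$. Define the interpolation $Z(s) = \sqrt{1-s}\,X + \sqrt{s}\,Y$ for $s\in[0,1]$; each $Z(s)$ is a centered Gaussian vector with covariance $(1-s)K^X + sK^Y$, where $K^X,K^Y$ are the covariance matrices of $X,Y$. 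In particular $Z(0)\stackrel{d}{=}X$ and $Z(1)\stackrel{d}{=}Y$.

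Set $\Phi(s) = \E F_\beta(Z(s))$; the goal reduces to showing $\Phi'(s)\geq 0$. Differentiating under the expectation and applying Gaussian integration by parts (Stein's identity $\E[W_i h(W)] = \sum_j \mathrm{Cov}(W_i,W_j)\,\E[\partial_j h(W)]$ for a centered Gaussian $W$), one obtains
\[\Phi'(s) = \frac{1}{2}\sum_{i,j}(K^Y_{ij}-K^X_{ij})\,\E\bigl[\partial_i\partial_j F_\beta(Z(s))\bigr].\]
A direct calculation gives $\partial_i\partial_j F_\beta(x) = \beta(p_i\delta_{ij} - p_ip_j)$, where $p_i := e^{\beta x_i}/\sum_k e^{\beta x_k}$ defines a probability vector on $[n]$ that becomes random when $x=Z(s)$.

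The key algebraic step is to use $\sum_i p_i = 1$ together with the symmetry $K^{\bullet}_{ij} = K^{\bullet}_{ji}$ to rewrite, with $D_{ij}:=K^Y_{ij}-K^X_{ij}$,
\[\sum_{i,j}D_{ij}(p_i\delta_{ij}-p_ip_j) \;=\; \frac{1}{2}\sum_{i,j}p_ip_j\Bigl[\bigl(K^Y_{ii}-2K^Y_{ij}+K^Y_{jj}\bigr) - \bigl(K^X_{ii}-2K^X_{ij}+K^X_{jj}\bigr)\Bigr].\]
The bracketed quantity is exactly $\E(Y_i-Y_j)^2 - \E(X_i-X_j)^2$, which is nonnegative by hypothesis, and $p_ip_j\geq 0$, so the whole expression is pointwise nonnegative. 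Hence $\Phi'(s)\geq 0$, so $\Phi(0)\leq \Phi(1)$, i.e., $\E F_\beta(X)\leq \E F_\beta(Y)$; letting $\beta\to\infty$ yields the claim.

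The main obstacle is the algebraic identity in the last display: it is the one point where the hypothesis on \emph{increments} (rather than on covariances or variances individually) enters the argument, and it is what lets the lemma avoid any equal-variance assumption (this stronger form is sometimes called the Sudakov--Fernique inequality). The remaining technical ingredients---the smoothed max $F_\beta$, the $\sqrt{1-s},\sqrt{s}$ interpolation, and Gaussian integration by parts---are standard once one has seen them, so I would spend the most care on verifying that the inner sum collapses correctly to a nonnegative combination of increment gaps.
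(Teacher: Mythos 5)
The paper cites this lemma directly from Ledoux--Talagrand without giving its own proof, so there is nothing to compare against; what you have written is a self-contained, correct proof. Your argument is the standard Gaussian interpolation proof of the Sudakov--Fernique inequality (which you correctly observe is the form being stated here, since the hypothesis is on increments and no equal-variance assumption is made, and the conclusion is about expected suprema rather than stochastic domination as in classical Slepian). All the steps check out: the derivative formula $\Phi'(s)=\tfrac12\sum_{i,j}(K^Y_{ij}-K^X_{ij})\,\E[\partial_i\partial_j F_\beta(Z(s))]$ follows from the chain rule and two applications of Stein's identity (the $\sqrt{1-s},\sqrt{s}$ scalings cancel exactly the $\pm\tfrac12(1-s)^{-1/2},\tfrac12 s^{-1/2}$ from the chain rule); the Hessian $\partial_i\partial_j F_\beta=\beta(p_i\delta_{ij}-p_ip_j)$ is right; and the symmetrization $\sum_{i,j}D_{ij}(p_i\delta_{ij}-p_ip_j)=\tfrac12\sum_{i,j}p_ip_j(D_{ii}+D_{jj}-2D_{ij})$ (using $\sum_j p_j=1$ and $D_{ij}=D_{ji}$) correctly isolates the increment differences $\E(Y_i-Y_j)^2-\E(X_i-X_j)^2\geq0$, giving $\Phi'\geq0$ pointwise. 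The $\beta\to\infty$ limit is clean via $\max_i x_i\leq F_\beta(x)\leq\max_i x_i+\beta^{-1}\log n$. One could add a one-line remark that the interchange of $d/ds$ and $\E$ is justified by dominated convergence (the derivatives of $F_\beta$ are bounded), but this is routine.
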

\begin{corollary}[Chaining Inequality for the transformed stream]\label{cor: JL chaining}
  Let $T$ be a $(1\pm\gamma)$-embedding of $(f^{(t)})_{t\in [m]}$ and let $Z_1,\ldots,Z_k\ldots\iidsim\calN(0,1)$. 
    There exists a universal constant $C'>0$ such that if $X_t = \langle Z, Tf^{(t)}\rangle $, for $t\in[m]$, then $\E \sup_i|X_i| \leq C'\|f^{(m)}\|_2$.
 If $\bar Z_1,\ldots,\bar Z_k\iidsim\Rademacher $ and $Y_t = \langle \bar{Z},Tf^{(t)}\rangle $, for $t\in[m]$, then $\E \sup_i|Y_i| \leq C'\|f^{(m)}\|_2$.
\end{corollary}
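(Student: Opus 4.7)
The plan is to deduce the corollary from Theorem~\ref{thm: chaining inequality} by comparing the transformed Gaussian process against a suitably scaled copy of the original one and invoking Slepian's Lemma. Concretely, view $X_t = \langle Z, Tf^{(t)}\rangle$, with $Z_1,\ldots,Z_k\iidsim \calN(0,1)$, as a centered Gaussian process on $t\in[m]$ whose increment variance is $\E(X_s-X_t)^2 = \|T(f^{(s)}-f^{(t)})\|_2^2$. Because the $(1\pm\gamma)$-embedding hypothesis covers all pairs in $X\cup\{0\}$, and in particular the differences $f^{(s)}-f^{(t)}$ fall under this bound (take $y=0$ after translating, or note that the definition as written applies to any two members), we get $\|T(f^{(s)}-f^{(t)})\|_2^2 \leq (1+\gamma)^2\|f^{(s)}-f^{(t)}\|_2^2$.

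Next, introduce the comparison process $Y_t = (1+\gamma)\sum_{j=1}^n G_jf_j^{(t)}$ with $G_1,\ldots,G_n\iidsim \calN(0,1)$. Its increments satisfy $\E(Y_s-Y_t)^2 = (1+\gamma)^2\|f^{(s)}-f^{(t)}\|_2^2 \geq \E(X_s-X_t)^2$, so Lemma~\ref{lem: slepian's} yields $\E\sup_t X_t \leq \E\sup_t Y_t$. Applying Theorem~\ref{thm: chaining inequality} to the Gaussian process $Y_t/(1+\gamma) = \sum_j G_jf_j^{(t)}$ gives $\E\sup_t Y_t \leq (1+\gamma)C'\|f^{(m)}\|_2$.

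To pass from $\sup_t X_t$ to $\sup_t|X_t|$, I reuse the manipulation from~\eqref{eq: sup vs absolute sup}: $\sup_t|X_t|\leq |X_1| + \sup_t X_t + \sup_s(-X_s)$. The process $-X_t$ is a centered Gaussian process with the same covariance structure as $X_t$, so the same Slepian comparison bounds $\E\sup_s(-X_s)$. For the $|X_1|$ term, $\Var(X_1) = \|Tf^{(1)}\|_2^2 \leq (1+\gamma)^2\|f^{(1)}\|_2^2 = (1+\gamma)^2$, and since $\|f^{(m)}\|_2\geq 1$ (the stream is insertion-only and nonempty), we get $\E|X_1| = O(\|f^{(m)}\|_2)$. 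Absorbing constants into $C'$ finishes the Gaussian statement. For the Bernoulli case, write $Y_t = \langle \bar Z, Tf^{(t)}\rangle$ as a linear image of the Rademacher vector $\bar Z\in\R^k$; this is a Bernoulli process of the form $AB$ (take $A\in\R^{m\times k}$ with rows $Tf^{(t)}$), so Lemma~\ref{lem: Gaussion to Bernoulli} gives $\E\sup_t Y_t \leq \sqrt{\pi/2}\,\E\sup_t X_t$, and the same $\sup\mapsto\sup|\cdot|$ reduction carries through.

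The only subtle point is making sure the JL property is applied to the right collection of vectors. The embedding is for the set $\{f^{(t)}\}_{t\in[m]}\cup\{0\}$, and the quantities we bound are $\|T(f^{(s)}-f^{(t)})\|_2 = \|Tf^{(s)}-Tf^{(t)}\|_2$, which are exactly pairwise distances in this set, so no auxiliary embedding set must be introduced. I expect the main obstacle to be keeping track of the $\gamma$ factors cleanly; otherwise the argument is a routine Slepian-plus-chaining transfer.
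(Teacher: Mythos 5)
Your proof is correct and follows essentially the same route as the paper's: compare the increment variance of the transformed Gaussian process to a $(1+\gamma)$-scaled copy of the untransformed process via the JL property, apply Slepian's Lemma, invoke Theorem~\ref{thm: chaining inequality}, pass from $\sup$ to $\sup|\cdot|$ through the decomposition in~\eqref{eq: sup vs absolute sup}, and finally transfer to the Bernoulli case by Lemma~\ref{lem: Gaussion to Bernoulli}. The only (harmless) notational clash is reusing $Y_t$ for the Gaussian comparison process when the corollary statement already uses $Y_t$ for the Rademacher process.
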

\begin{proof}
  Let $W_t$ be the Gaussian process from Theorem~\ref{thm: chaining inequality}.
  Since $T$ is a JL transformation 
\[\E (X_t-X_s)^2 = \|Tf^{(s:t)}\|_2^2 \leq (1+\gamma)^2\|f^{(s:t)}\|_2^2 = (1+\gamma)^2 \E(W_t-W_s)^2.\]
The first claim of the corollary follows from Slepian's Lemma, Equation~\eqref{eq: sup vs absolute sup}, and Theorem~\ref{thm: chaining inequality}.
The second inequality follows from the first and Lemma~\ref{lem: Gaussion to Bernoulli}.
\end{proof}
}%\iftoggle{shortversion}

\iftoggle{shortversion}{}{
\section{Reduced randomness}\label{sec:randomness}

This section describes how CountSieve can be implemented with only $O(\log n\log\log n)$ random bits.
There are two main barriers to reducing the number of random bits.
We have already partially overcome the first barrier, which is to reduce the number of bits needed by a Bernoulli process from $n$, as in the algorithm BP, to $O(\log n)$ by introducing JLBP.
JLBP runs $d=O(1)$ independent Bernoulli processes in dimension $k=O(\log n)$ for a total of $dk=O(\log n)$ random bits.
This section proves the correctness of that algorithm.

The second barrier is to find a surrogate for the $R=O(\log{n})$ independent vectors of pairwise independent Bernoulli random variables that are used during the rounds of Sieve.
We must store their values so that Selector can retroactively identify a heavy hitter, but, na\"{i}vely, they require $\Omega(\log^2 n)$ random bits.
We will show that one can use Nisan's pseudorandom generator (PRG) with a seed length of $O(\log n \log\log n)$ bits to generate these vectors.
A priori, it is not obvious that this is possible.
The main sticking point is that the streaming algorithm that we want to derandomize must store the random bits it uses, which means that these count against the seed length for Nisan's PRG.
Specifically, Nisan's PRG reduces the number of random bits needed by a space $S$ algorithm using $R$ random bits to $O(S\log R)$.
Because CountSieve must pay to store the $R$ random bits, the storage used is $S\geq R = \Omega(\log^2 n)$, so Nisan's PRG appears even to increase the storage used by the algorithm!
We can overcome this by introducing an auxiliary (non-streaming) algorithm that has the same output as Sieve and Selector, but manages without storing all of the random bits.
This method is similar in spirit to Indyk's derandomization of linear sketches using Nisan's PRG~\cite{i06}.
It is not a black-box reduction to the auxiliary algorithm and it is only possible because we can exploit the structure of Sieve and Selector.

We remark here that we are not aware of any black-box derandomization of the Bernoulli processes that suits our needs.
This is for two reasons.
First, we cannot reorder the stream for the purpose of the proof because the order of the computation is important.
Reordering the stream is needed for Indyk's argument~\cite{i06} for applying Nisan's PRG.
Second, the seed length of available generators is too large, typically in our setting we would require a seed of length at least $\log^{1+\delta}n$, for some $\delta>0$. 

\subsection{The Bernoulli process with $O(\log{n})$ random bits}

%This idea for reducing the number of random bits needed for one Bernoulli process was suggested to us by Raghu Meka and has been used in his paper~\cite{meka2012ptas}.
The main observation that leads to reducing the number of random bits needed by the algorithm is that the distribution of the corresponding Gaussian process depends only on the second moments of the increments.
These moments are just the square of the Euclidean norm of the change in the frequency vector, so applying a Johnson-Lindenstrauss transformation to the frequency vector nearly preserves the distribution of the process and allows us to get away with $O(\log n)$ random bits.
One trouble with this approach is that the heavy hitter $H$ could be ``lost'', whereby we mean that although $\|Te_H\|\approx 1$ it may be that $\langle Z,Te_H\rangle\approx 0$, for the Rademacher random vector $Z$, whereupon $H$'s contribution to the sum $\langle Z, Tf^{(t)}\rangle$ is lost among the noise.
To avoid this possibility we keep $d=O(1)$ independent Bernoulli processes in parallel.

First, we state the correctness of the Johnson-Lindenstrauss transformation that we use and the storage needed for it.
\begin{theorem}[Kane, Meka, \& Nelson~\cite{kane2011almost}]\label{thm: meka JL}
Let $V=\{v_1,\ldots,v_n\}\subseteq \R^n$.
For any constant $\delta>0$ there exists a $k=O(\gamma^{-2}\log (n/\delta)$ and generator $G:\{0,1\}^{O(\log n)}\times[k]\times[n]\to\R$ such that, with probability at least $1-\delta$, the $k\times n$ matrix $T$ with entries $T_{ij}=G(R,i,j)$ is a $(1\pm\gamma)$-embedding of $V$, where $R\in\{0,1\}^{O(\log{n})}$ is a uniformly random string.
The value of $G(R,i,j)$ can be computed with $O(\log n\log\log n)$ bits of storage.
\end{theorem}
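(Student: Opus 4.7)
The plan is to invoke the construction of Kane, Meka, and Nelson essentially as a black box, but here is the shape of the argument I would give. Their construction produces a sparse matrix $T$ in which each column has a small number $s$ of nonzero entries, with both the support pattern and the signs drawn from bounded-independence hash families. Concretely, I would consider $T_{ij} = \sigma(j)\cdot\mathbf{1}[h(j)=i]/\sqrt{s}$ (or a block variant with $s$ independent hashes per column), where $h:[n]\to[k]$ and $\sigma:[n]\to\{\pm 1\}$ come from $O(\log(1/\delta))$-wise independent families with a seed length of $O(\log n)$ bits.

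The first analytical step is to verify the JL property for a single fixed vector $v$. Writing $\|Tv\|_2^2$ as a quadratic form $v^\top M v$ in the Rademacher-like signs, one expands $\mathbb{E}[(\|Tv\|_2^2-\|v\|_2^2)^q]$ as a sum over pairings and uses the Hanson-Wright style moment method, cut off at moment $q = O(\log(n/\delta))$, which is supported by the limited independence. Choosing $k = O(\gamma^{-2}\log(n/\delta))$ and a suitable sparsity $s$ makes the $q$-th moment small enough that Markov's inequality gives failure probability better than $1/n^{10}$. A union bound over the $n$ pairs of vectors in $V\cup\{0\}$ then upgrades this to the uniform $(1\pm\gamma)$-embedding property stated in the theorem.

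The second step is the space bound for evaluating $G(R,i,j)$. Both $h$ and $\sigma$ are evaluations of degree-$O(\log(1/\delta))$ polynomials over a field $\mathbb{F}_p$ of size $\operatorname{poly}(n)$. Each field element fits in $O(\log n)$ bits, so storing the seed and one intermediate accumulator uses $O(\log n)$ bits. The only delicate point is that naive multiplication of two $O(\log n)$-bit numbers in school-book fashion costs $O(\log n)$ bits of working space beyond the operands, but since we must multiply $O(\log(1/\delta)) = O(\log\log n)$ times by $j$ while accumulating, a careful implementation (reusing the scratch space across the Horner-style loop) gives the claimed $O(\log n\log\log n)$ bits total.

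The main obstacle, and the reason this is not a one-line reduction, is calibrating the sparsity $s$ and the independence parameter $q$ simultaneously so that (i) the moment bound closes, (ii) the seed length stays $O(\log n)$, and (iii) single-entry evaluation stays in $O(\log n\log\log n)$ bits rather than $O(\log^2 n)$. All three constraints are interlocking, which is precisely the technical contribution of~\cite{kane2011almost}, so in practice I would just cite their theorem and verify that the parameter regime we need ($V$ of polynomial size, constant $\gamma$, constant $\delta$) lies inside their stated guarantees.
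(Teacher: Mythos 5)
This statement is not something the paper proves at all: it is stated as an imported result of Kane, Meka, and Nelson and is used purely as a black box, with no proof appearing anywhere in the paper. Your proposal ends by saying that in practice you would just cite their theorem and check that the needed parameter regime (polynomially many vectors, constant $\gamma$ and $\delta$) falls under their guarantees — that is exactly what the paper does, so in terms of what is actually written down the two are in agreement.

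One remark on the sketch you give before the citation: the construction you describe — a single sparse sign matrix $T_{ij}=\sigma(j)\mathbf{1}[h(j)=i]/\sqrt{s}$ with the JL property established by a Hanson--Wright-type moment calculation under bounded independence — is the \emph{sparse JL} line of Kane and Nelson, not the construction of the cited paper~\cite{kane2011almost}. That paper's contribution is a derandomization that shrinks the seed length to $O(\log n)$ (for constant $\gamma,\delta$); a straightforward $q$-wise independent hash with $q=\Theta(\log n)$, as you seem to invoke to push the per-vector failure probability down to $1/\poly(n)$, would already cost $\Omega(\log^2 n)$ bits of seed, which is precisely what the theorem is avoiding. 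They instead combine a bounded-independence reduction with a pseudorandom-generator step, and the $O(\log n\log\log n)$ bound on evaluating a single entry comes from that machinery rather than from the scratch-space accounting in a Horner loop. None of this causes a problem for the statement as used here, because (like the paper) you defer to the citation at the end; but the internal argument you outline would, if carried out on its own, give a weaker seed length than the theorem claims.
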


\newtheorem*{lemmaJLBP}{Lemma~\ref{lem: JLBP is tame}}
\begin{lemmaJLBP}[JLBP Correctness]
\JLBPLemma
\end{lemmaJLBP}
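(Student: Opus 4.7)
The proof splits cleanly into two parts, one for each claim. For the tameness claim, the strategy is a direct dimension-lifted analogue of the proof of Lemma~\ref{lem: BP is tame}. First I would apply the JL-adapted Chaining Inequality (Corollary~\ref{cor: JL chaining}) to each row of $Z$ separately: writing $Z_1,\ldots,Z_d$ for the rows, the $i$-th coordinate of $y_t$ is $\frac{1}{\sqrt{d}}\langle Z_i, Tf^{(t)}\rangle$, and since $Z_i$ is an i.i.d.\ Rademacher vector and $T$ is a $(1\pm\gamma)$-embedding of the frequency vectors, Corollary~\ref{cor: JL chaining} gives $\E\sup_t|\langle Z_i,Tf^{(t)}\rangle|\le C'\|f\|_2$ for a universal constant $C'$.

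Setting $M_i:=\sup_t|\langle Z_i,Tf^{(t)}\rangle|$, the pointwise inequalities
\[\sup_t\|y_t\|_2 = \sup_t\sqrt{\tfrac{1}{d}\sum_{i=1}^d\langle Z_i,Tf^{(t)}\rangle^2}\le \sqrt{\tfrac{1}{d}\sum_{i=1}^d M_i^2}\le \tfrac{1}{\sqrt{d}}\sum_{i=1}^d M_i\]
reduce the problem to controlling $\sum_i M_i$, whose expectation is at most $dC'\|f\|_2$ by linearity and the row-wise chaining bound. Markov's Inequality applied to this sum, exactly as in Lemma~\ref{lem: BP is tame}, gives $\sup_t\|y_t\|_2\le (\sqrt{d}C'/\delta)\|f\|_2$ with probability at least $1-\delta$, which establishes tameness for any $C$ at least this large, matching the value prescribed in Table~\ref{tab: notation}.

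For the second claim, I would first arrange that $T$ also embeds each $e_j$, so that $\|Te_H\|_2\in[1-\gamma,1+\gamma]$. This is a free upgrade because the Kane--Meka--Nelson generator (Theorem~\ref{thm: meka JL}) embeds any polynomially large set chosen up front, so one simply includes the standard basis alongside the frequency vectors. Writing $\|ZTe_H\|_2^2 = \sum_{i=1}^d\langle Z_i,Te_H\rangle^2$ as a sum of $d$ independent squared Rademacher sums, each with expectation $\|Te_H\|_2^2\approx 1$, and using the fact that $\langle Z_i,Te_H\rangle$ is sub-Gaussian with parameter $\|Te_H\|_2$ (so its square is sub-exponential with $O(1)$ parameters), Bernstein's Inequality implies that $d=O(\log\delta^{-1})$ samples suffice to concentrate $\tfrac{1}{d}\|ZTe_H\|_2^2$ in $[1/4,9/4]$ with probability at least $1-\delta$. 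Taking square roots then yields $\|d^{-1/2}ZTe_H\|_2\in[1/2,3/2]$.

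The main subtlety is the bookkeeping for which vectors $T$ must embed in each use, together with the sub-exponential concentration in the second part. The Chaining step itself is already encapsulated in Corollary~\ref{cor: JL chaining} (via Slepian's Lemma comparing the JL-transformed process to the untransformed one), so no new chaining argument is required, and the proof reduces in essence to a cleaner packaging of Lemma~\ref{lem: BP is tame} plus a standard Rademacher chaos concentration bound.
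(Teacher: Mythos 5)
Your proof is correct and follows essentially the same route as the paper's: apply the JL-transformed Chaining Inequality (Corollary~\ref{cor: JL chaining}) to each of the $d$ rows of $Z$, reduce $\sup_t\|y_t\|_2$ to the row-wise suprema, invoke Markov, and finish the second claim with a concentration bound for $\|d^{-1/2}ZTe_H\|_2$. Two small deviations, both in your favor: you apply Markov once to $\sum_i M_i$ rather than per row followed by a union bound, which saves a factor of $d$ and gives $C\gtrsim\sqrt{d}\,C'/\delta$ instead of the paper's $d^{3/2}C'/\delta$ (either is covered by Table~\ref{tab: notation}); and you state explicitly that $T$ must also embed $e_H$, a hypothesis the lemma as written omits and the paper only handles implicitly at the point of application (Theorem~\ref{thm: countsieve works}). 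For the second claim the paper simply cites the JL property of $d^{-1/2}Z$ from Achlioptas where you re-derive it via sub-exponential/Bernstein concentration; the two arguments are interchangeable.
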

\begin{proof}
Let $X_{i,t} =  \sum_{j=1}^kZ_{ij}(Tf^{(t)})_j$ and
\[X_t =  \|\frac{1}{\sqrt{d}}ZTf^{(t)}\|_2^2 = \frac{1}{d}\sum_{i=1}^dX^2_{i,t},\]
for $t=1,\ldots,m$.
Each process $X_{i,t}$ is a Bernoulli process with $\Var(X_{i,t}) = \|Tf^{(t)}\|^2_2\leq (1+\gamma)^2\|f^{(t)}\|^2_2$ and, for $s<t$, $\E (X_{i,t}-X_{i,s})^2 = \|Tf^{(s:t)}\|^2_2\leq (1+\gamma)^2\|f^{(s:t)}\|_2^2$.

Notice that for all $i$ Gaussian processes $(X_{i,t})_{t\in[m]}$ are from same distribution. Let $X_{t}'$ be a Gaussian process that is identical to $X_{i,t}$, except that the Rademacher random variables are replaced by standard Gaussians.
$X_t'$ and $X_{i,t}$ have the same means, variances, and covariances.
Therefore $\E \sup_t|X_{i,t}|\leq\sqrt{\frac{\pi}{2}}\E \sup_t|X'_t|$, by Lemma~\ref{lem: Gaussion to Bernoulli}.

Let $N_1,\ldots,N_n\iidsim N(0,1)$. We will compare $X'_{t}$ against the Gaussian process $X''_t = (1+\gamma)\frac{1}{\sqrt{d}}\langle N,f^{(t)}\rangle$.
By the Chaining Inequality, there exists $C'$ such that $\E \sup |X''_t|\leq C' \sqrt{\Var(X''_m)}=\frac{C'(1 + \gamma)}{\sqrt{d}}\|f^{(m)}\|_2$.
We have $\E (X''_t-X''_s)^2 =\frac{1}{d}(1+\gamma)^2\|f^{(s:t)}\|_2^2$, so by Slepian's Lemma applied to $X'_t$ and $X''_t$ and by \eqref{eq: sup vs absolute sup} we have
\[\E \sup_t|X_{i,t}|\leq \sqrt{\frac{\pi}{2}}\E \sup |X'_t|\leq \sqrt{\frac{\pi}{2}} \sqrt{d}\E\sup_t|X''_t|\leq \sqrt{\frac{\pi}{2}} (1 + \gamma) C'\|f^{(m)}\|_2.\]
Now we apply Markov's Inequality to get $\Pr (\sup_t|X_{i,t}|\geq \frac{C}{\sqrt{d}}\|f^{(m)}\|_2)\leq \frac{\delta}{d}$, by taking $C\geq \sqrt{\frac{\pi}{2}}(1 + \gamma) C'd^{3/2}/\delta$.
From a union bound we find $\Pr (\sup_{i,t}|X_{i,t}|\geq \frac{C}{\sqrt{d}}\|f^{(m)}\|_2)\leq \delta$, and that event implies $\sup_t|X_t|\leq C\|f^{(m)}\|_2$, which is \eqref{eq: tame definition} and proves that the process is tame.

For the second claim, we note that the matrix $\frac{1}{\sqrt{d}}Z$ is itself a type of Johnson-Lindenstrauss transformation (see \cite{achlioptas2003database}), hence $\frac{1}{2}\leq \|d^{-1/2}ZTe_H\|\leq \frac{3}{2}$, with probability at least $1-2^{-d}\geq (1-\delta)$.
The last inequality follows by our choice of $d$.
\end{proof}

\subsection{Sieve and Selector}

In the description of the algorithm, the Sieve and Selector use $O(\log{n})$ many pairwise independent hash functions that are themselves independent.
Nominally, this needs $O(\log^2n)$ bits.
However, as we show in this section, it is sufficient to use Nisan's pseudorandom generator~\cite{n92} to generate the hash functions.
This reduces the random seed length from $O(\log^2n)$ to $O(\log n\log\log n)$.
Recall the definition of a pseudorandom generator.
\begin{definition}
A function $G : \{0,1\}^m \rightarrow \{0, 1\}^n$ is called a \emph{pseudorandom generator (PRG) for space($S$) with parameter $\epsilon$} if for every randomized
space($S$) algorithm $A$ and every input to it we have that
$$\|\calD_{y}(A(y)) - \calD_x(A(G(x))\|_1 < \epsilon,$$
where $y$ is chosen uniformly at random in $\{0, 1\}^n$, $x$ uniformly in $\{0, 1\}^m$, and $\calD(\cdot)$ is the distribution of $\cdot$ as a vector of probabilities.
\end{definition}
Nisan's PRG~\cite{n92} is a pseudorandom generator for space $S$ with parameter $2^{-S}$ that takes a seed of length $O(S\log{R})$ bits to $R$ bits.
The total space used by Sieve and Selector is $O(\log n)$ bits for the algorithm workspace and $O(\log^2n)$ bits to store the hash functions.

We will be able to apply Nisan's PRG because Sieve only accesses the randomness in $O(\log n)$ bit chunks, where the $r$th chunk generates the 4-wise independent random variables needed for the $r$th round, namely $B_{r1},\ldots,B_{rn}$ and the bits for two instances of the AMS sketch.
We can discard the AMS sketches at the end of each round, but in order to compute its output after reading the entire stream, Selector needs access to the bit sequence $b_1,b_2,\ldots,b_R$ as well as $B_{ri}$, for $r\in[R]$ and $i\in[n]$.
Storing the $B$ random variables, by their seeds, requires $O(\log^2n)$ bits.
This poses a problem for derandomization with Nisan's PRG because it means that Sieve and Selector are effectively a $O(\log^2n)$ space algorithm, even though most of the space is only used to store random bits.

We will overcome this difficulty by derandomizing an auxiliary algorithm.
The auxiliary algorithm computes a piece of the information necessary for the outcome, specifically for a given item $j\in[n]$ in the stream the auxiliary item will compute $N_j:=\#\{r\mid b_r = B_{rj}\}$ the number of times $j$ is on the ``winning side'' and compare that value to $3R/4$.
Recall that the Selector outputs as the heavy hitter a $j$ that maximizes $N_j$.
By Lemma~\ref{lem: pair correctness} for the AMS case, $\E  N_j$ is no larger than $(\frac{1}{2}+3\delta)R$, if $j$ is not the heavy element, and $\E  N_H$ is at least $(1-3\delta)R$ if $H$ is the heavy element.
When the Sieve is implemented with fully independent rounds, Chernoff's Inequality implies that $N_H> 3R/4$ or $N_j\leq 3R/4$ with high probability.
When we replace the random bits for the independent rounds with bits generated by Nisan's PRG we find that for each $j$ with high probability $N_j$ remains on the same side of $3R/4$.

Here is a formal description of the auxiliary algorithm.
The auxiliary algorithm takes as input the sequence $q_0,q_1,\ldots,q_R$ (which is independent of the bits we want to replace with Nisan's PRG), the stream $S$, and an item label $j$, and it outputs whether $N_j>3R/4$.
It initializes $N_i=0$, and then for each round $r=1,\ldots,R$ it draws $O(\log n)$ random bits and computes the output~$b_r$ of the round.
If $b_r=B_{rj}$ then $N_i$ is incremented, and otherwise it remains unchanged during the round.
The random bits used by each round are discarded at its end.
At the end of the stream the algorithm outputs 1 if $N_j>3R/4$.

\begin{lemma}\label{lem: auxiliary for Nisan}
Let $X\in\{0,1\}$ be the bit output by the auxiliary algorithm, and let $\tX\in\{0,1\}$ be the bit output by the auxiliary algorithm when the random bits it uses are generated by Nisan's PRG with seed length $O(\log n\log\log n)$.  Then
\[ |\Pr (X=1)-\Pr (\tX=1)|\leq \frac{1}{n^2}.\]
\end{lemma}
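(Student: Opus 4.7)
The plan is to verify that the auxiliary algorithm meets the hypotheses of Nisan's PRG with space $S=O(\log n)$ and $R=O(\log^2 n)$ random bits read in a strictly sequential fashion, after which the theorem of Nisan immediately yields the claim. The whole point of introducing this auxiliary algorithm is precisely to reduce the working space below the $\Omega(\log^2 n)$ barrier that Sieve and Selector together suffer, so the crux is the space accounting.

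First, I would bound the working space of the auxiliary algorithm at any instant. Its state consists of: (i) a pointer into the stream and into the input sequence $q_0,\ldots,q_R$ (neither is random and both are read on-line), (ii) the current round index $r\in\{1,\ldots,R\}$, taking $O(\log\log n)$ bits, (iii) the counter $N_j\in\{0,\ldots,R\}$, taking $O(\log\log n)$ bits, (iv) the fixed bit $B_{r,j}$ together with the running output $b_r$ of the current round, (v) the internal state of the two AMS sketches and the hash $B_{r,\cdot}$ active during the current round, which by construction fits in $O(\log n)$ bits. The essential feature is that the $O(\log n)$ random bits that seed the round's AMS sketches and the $B_{r,\cdot}$ hash are thrown away at the round's end, since $j$ is fixed and only the indicator $\mathbf{1}[b_r=B_{r,j}]$ is needed to update $N_j$. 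The total state is therefore $O(\log n)$ bits.

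Second, I would observe that the random bits are consumed in a read-once streaming order: for $r=1,2,\ldots,R$ the round reads its block of $O(\log n)$ bits, uses them to produce $b_r$ and to evaluate $B_{r,j}$, updates $N_j$, and then discards them before the next block is read. Thus the auxiliary algorithm is exactly a read-once, space-$O(\log n)$ randomized algorithm consuming $R=O(\log^2 n)$ random bits, which is the model for which Nisan's PRG is designed.

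Finally, I would invoke Nisan's theorem~\cite{n92}: a pseudorandom generator with seed length $O(S\log R)$ fools any space-$S$ read-once algorithm on $R$ random bits to within total variation distance $2^{-S}$. Choosing the constant in $S=\Theta(\log n)$ large enough that $2^{-S}\le n^{-2}$ gives a seed of length $O(\log n\log\log n)$, and the guarantee $|\Pr(X=1)-\Pr(\tX=1)|\le 2^{-S}\le n^{-2}$ follows, because applying the generator alters the joint distribution of the output bit by at most $2^{-S}$. The only delicate point — and the reason we analyze this auxiliary algorithm rather than Sieve and Selector directly — is isolating the dependence on $j$ so that the hash functions $B_{r,\cdot}$ need not be retained across rounds; once that is done, the space bound fits comfortably into Nisan's framework with room to spare.
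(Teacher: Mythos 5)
Your proof is correct and takes essentially the same approach as the paper's: identify that the auxiliary algorithm runs in space $O(\log n)$, consumes its $O(\log^2 n)$ random bits in a read-once streaming fashion, and then invoke Nisan's PRG with error parameter $1/n^2$ and seed length $O(\log n\log\log n)$. The paper's proof is a two-line version of this; your more careful state accounting and the explicit observation that each round's randomness is discarded (keeping only the indicator $\mathbf{1}[b_r=B_{r,j}]$) simply make explicit what the paper leaves implicit.
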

\begin{proof}
The algorithm uses $O(\log{n})$ bits of storage and $O(\log^2n)$ bits of randomness.
The claim follows by applying Nisan's PRG~\cite{n92} with $\epsilon=1/n^2$ and seed length $O(\log n\log\log n)$.
\end{proof}

\begin{theorem}\label{thm: sieve space}
Sieve and Selector can be implemented with $O(\log(n)\log\log{n})$ random bits.
\end{theorem}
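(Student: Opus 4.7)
The plan is to generate the random bits used by the $R=O(\log n)$ rounds of Sieve — both the pairwise independent hashes $B_{r,\cdot}$ and the AMS sketch randomness $R_{r,\cdot}$ — from Nisan's PRG with seed length $O(\log n \log\log n)$, and to argue that the Selector's output is unchanged (with constant probability) under this substitution. Correctness will be proved via a union bound built on top of Lemma~\ref{lem: auxiliary for Nisan}, which has already done the real derandomization work.

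First, I would establish that under fully independent rounds Selector identifies $H$ with very high probability. The Sieve/Selector lemma (Lemma~\ref{lem: Selector outputs HH}) already shows correctness with probability $1-\delta$; boosting to probability $1-n^{-8}$ only requires enlarging the constant in $R=O(\log n)$, since the $R$ rounds are independent. Concretely, Lemma~\ref{lem: pair correctness} (AMS case) gives $\mathbb{E}N_H\geq (1-3\delta)R$ and $\mathbb{E}N_j\leq (\tfrac{1}{2}+3\delta)R$ for $j\neq H$, and Chernoff's inequality over the $R$ independent rounds yields $\Pr(N_H\leq 3R/4)\leq n^{-10}$ and, for each fixed $j\neq H$, $\Pr(N_j>3R/4)\leq n^{-10}$. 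A union bound over items makes the event ``$N_H>3R/4$ and $N_j\leq 3R/4$ for all $j\neq H$'' hold with probability at least $1-n^{-8}$ under true randomness, and on this event Selector returns $H$.

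Next, I would transport this event to the PRG-generated run by invoking Lemma~\ref{lem: auxiliary for Nisan} separately for each $j\in[n]$: the indicator $\mathbf{1}\{N_j>3R/4\}$ is the output of an $O(\log n)$-space auxiliary algorithm that consumes $O(\log^2 n)$ random bits, so Nisan's PRG with seed $O(\log n\log\log n)$ perturbs its output distribution by at most $n^{-2}$. Union bounding over the $n$ items, the joint event above holds under Nisan's PRG with probability at least $(1-n^{-8})-n\cdot n^{-2}\geq 1-O(1/n)$. In particular the Selector, which returns any $j^*\in\operatorname{argmax}_j\#\{r:B_{r,j}=b_r\}$, still returns $H$ with high probability. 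The stored seed is $O(\log n\log\log n)$ bits, matching the claim.

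The main obstacle, and the reason Lemma~\ref{lem: auxiliary for Nisan} is needed at all, is that Sieve+Selector is \emph{not} a low-space algorithm from the streaming viewpoint: it must remember all $O(\log^2 n)$ random bits until the end of the stream so that Selector can recompute each $B_{r,j}$. Applying Nisan's PRG directly to the real algorithm would yield a seed length at least $\Omega(\log n \cdot \log\log(\log^2 n))\cdot\log n$, which is useless for us. The resolution is to fix a single candidate $j$ \emph{before} running the algorithm and only track the per-round bit $B_{r,j}$ together with the running count $N_j$; this reduces working memory to $O(\log n)$ and makes Nisan's PRG effective. The structural fact enabling this reduction is that Selector's decision decomposes into $n$ independent per-item threshold tests, so per-item guarantees from the PRG can be stitched together by a union bound to give a correctness statement for the original algorithm.
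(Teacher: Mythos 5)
Your proof is essentially the paper's argument. Both define the per-item indicator $\mathbf{1}\{N_j>3R/4\}$, invoke Lemma~\ref{lem: auxiliary for Nisan} to bound the change in $\Pr(N_j>3R/4)$ by $1/n^2$ under Nisan's PRG, and union-bound over all $n$ items to conclude the Selector's success probability shifts by only $O(1/n)$; you simply make the Chernoff boosting of the base (truly random) success probability and the structural justification for the auxiliary algorithm more explicit than the paper's terser statement, but there is no substantive difference in the route taken.
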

\begin{proof}
Let $N_j$ be the number of rounds $r$ for which $b_r = B_{rj}$ when the algorithm is implemented with independent rounds, and let $\tN_j$ be that number of rounds when the algorithm is implemented with Nisan's PRG.
Applying Lemma~\ref{lem: auxiliary for Nisan} we have for every item $j$ that $|\Pr (\tN_j>3R/4)-P(N_j>3R/4)|\leq 1/n^2$.
Thus, by a union bound, the probability that the heavy hitter $H$ is correctly identified changes by no more than $n/n^2=1/n$.
The random seed requires $O(\log n\log\log n)$ bits of storage, and aside from the random seeds the algorithms use $O(\log n)$ bits of storage.
Hence the total storage is $O(\log n\log\log n)$ bits.
\end{proof}

\section{$F_2$ at all points}\label{sec:f2all}

One approach to tracking $F_2$ at all times is to use the median of $O(\log n)$ independent copies of an $F_2$ estimator like the AMS algorithm~\cite{ams99}.
A Chernoff bound drives the error probability to $1/\poly(n)$, which is small enough for a union bound over all times, but it requires $O(\log^2n)$ bits of storage to maintain all of the estimators. 
The Chaining Inequality allows us to get a handle on the error during an interval of times.
Our approach to tracking $F_2$ at all times is to take the median of $O(\log\frac{1}{\epsilon}+\log\log{n})$ Bernoulli processes.
In any short enough interval---where $F_2$ changes by only a $(1+\Omega(\epsilon^2))$ factor---each of the processes will maintain an accurate estimate of $F_2$ for the entire interval, with constant probability.
Since there are only $O(\epsilon^{-2}\log^2(n))$ intervals we can apply Chernoff's Inequality to guarantee the tracking on every interval, which gives us the tracking at all times.
This is a direct improvement over the $F_2$ tracking algorithm of~\cite{huang2014tracking} which for constant $\eps$ requires $O(\log n(\log n + \log\log m))$ bits.

The algorithm has the same structure as the AMS algorithm, except we replace their sketches with instances of \textsc{JLBP}.

\begin{algorithm}
  \begin{algorithmic}
    \Procedure{F2Always}{Stream $S$}
    \State $N\gets O(\frac{1}{\epsilon^2})$, $R\gets O(\log(\frac{1}{\epsilon^2}\log n))$
    \State $X_{i,r}^{(t)}\gets\textsc{JLBP}(S)$ for $i\in[N]$ and $r\in[R]$.\Comment{Use a $(1\pm\frac{\epsilon}{3})$-embedding $T$ in this step.}
    \State $Y^{(t)}_r = \frac{1}{N}\sum_{i=1}^N \|X_{i,r}^{(t)}\|_2^2$
    \State \Return {$\hat{F}_2^{(t)}=\text{median}_{r\in R}\{Y_r^{(t)}\}$ at each time $t$}
    \EndProcedure
  \end{algorithmic}
  \caption{An algorithm for approximating $F_2$ at all points in the stream.}\label{algo: F2Always}
\end{algorithm}

\begin{lemma}\label{lem: F2always one piece}
Let $N=O(\frac{1}{\delta\epsilon^2})$ and let $X_{i}^{(t)}$, for $i=1,\ldots,N$, be independent copies of the output of $\textsc{JLBP}(S)$ using a fixed $(1\pm\frac{\epsilon}{8})$-embedding $T$ on an insertion only stream $S$. 
Let $Y_t = \frac{1}{N}\sum_{i=1}^N\|X_i^{(t)}\|_2^2$.
Suppose that for two given times $1\leq u<v\leq m$ the stream satisfies $256C^2F_2^{(u:v)}\leq \epsilon^2 F_2^{(u)}$, where $F_2^{(u:v)}=\sum_{i=1}^n(f_i^{(u:v)})^2$ is the second moment of the change in the stream.
Then
\[\Pr\left(|Y_t-F_2^{(t)}| \leq \epsilon F_2^{(t)}\text{, for all }u\leq t\leq v\right)\geq 1-2\delta.\] 
\end{lemma}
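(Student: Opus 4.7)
The plan is to decompose
\[Y_t - F_2^{(t)} = (Y_t - Y_u) + \bigl(Y_u - \|Tf^{(u)}\|_2^2\bigr) + \bigl(\|Tf^{(u)}\|_2^2 - F_2^{(u)}\bigr) + \bigl(F_2^{(u)} - F_2^{(t)}\bigr),\]
and bound each summand by roughly $(\epsilon/4)F_2^{(u)}$. The last two terms are deterministic. The $(1\pm\epsilon/8)$-embedding property of $T$ immediately handles the third term, while expanding $F_2^{(t)}-F_2^{(u)}=2\langle f^{(u)},f^{(u:t)}\rangle+F_2^{(u:t)}$ and applying Cauchy--Schwarz with the hypothesis $F_2^{(u:v)}\leq\epsilon^2 F_2^{(u)}/(256C^2)$ shows that $F_2^{(t)}\in(1\pm\epsilon/8)F_2^{(u)}$ for all $t\in[u,v]$, so the fourth summand is negligible and we may freely translate the target bound between $F_2^{(u)}$ and $F_2^{(t)}$.

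For the Chebyshev piece at time $u$, each $\|X_i^{(u)}\|_2^2$ is unbiased for $\|Tf^{(u)}\|_2^2$, and a direct second-moment calculation for a Rademacher quadratic form gives $\Var(\|X_i^{(u)}\|_2^2)=O(\|Tf^{(u)}\|_2^4/d)$. Averaging the $N=\Theta(1/(\delta\epsilon^2))$ independent copies, Chebyshev's inequality yields $|Y_u-\|Tf^{(u)}\|_2^2|\leq(\epsilon/4)F_2^{(u)}$ with probability at least $1-\delta$. On this good event we also have $Y_u\leq 2F_2^{(u)}$, which we reuse below.

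The hard part is controlling $|Y_t-Y_u|$ \emph{uniformly} in $t\in[u,v]$. Set $W_i^{(t)}:=X_i^{(t)}-X_i^{(u)}$. Expanding $\|X_i^{(t)}\|_2^2-\|X_i^{(u)}\|_2^2 = 2\langle X_i^{(u)},W_i^{(t)}\rangle+\|W_i^{(t)}\|_2^2$, averaging over $i$, and applying Cauchy--Schwarz yields
\[|Y_t-Y_u|\;\leq\;2\sqrt{Y_u\cdot Q_t}\;+\;Q_t,\qquad Q_t:=\frac{1}{N}\sum_{i=1}^N\|W_i^{(t)}\|_2^2.\]
Applying Lemma~\ref{lem: JLBP is tame} to each $W_i$ individually would require failure probability $\delta/N$ and blow up the constant $C$. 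The key trick is to stack the $N$ independent Rademacher matrices $Z^{(1)},\ldots,Z^{(N)}$ into a single $(Nd)\times k$ matrix $Z'$, so that $\tilde W^{(t)}:=(Nd)^{-1/2}Z'Tf^{(u:t)}$ is precisely the output of a single $(Nd)$-dimensional JLBP on the substream $S^{u:v}$, and $\|\tilde W^{(t)}\|_2^2=Q_t$. Since the tame constant in Lemma~\ref{lem: JLBP is tame} is universal in the dimension, the lemma gives $\sup_{t\in[u,v]}Q_t\leq C^2 F_2^{(u:v)}\leq(\epsilon^2/256)F_2^{(u)}$ with probability at least $1-\delta$. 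Combined with $Y_u\leq 2F_2^{(u)}$, this yields $|Y_t-Y_u|\leq(\epsilon/4)F_2^{(u)}$ for all $t\in[u,v]$. A union bound over the two failure events and the triangle inequality applied to the decomposition then give $|Y_t-F_2^{(t)}|\leq\epsilon F_2^{(t)}$ for all $t\in[u,v]$ with probability at least $1-2\delta$.
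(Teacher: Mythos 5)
Your proof is correct and follows essentially the same route as the paper's: triangle-inequality decomposition around time $u$, an AMS/Chebyshev bound on $Y_u$, a deterministic bound on $F_2^{(t)}-F_2^{(u)}$ via the small-interval hypothesis, and, crucially, the same stacking trick of viewing the concatenation of the $N$ independent $d$-dimensional JLBP outputs as a single $(Nd)$-dimensional JLBP whose increment process is tame on $[u,v]$, giving $\sup_t Q_t \leq C^2 F_2^{(u:v)}$ with a single application of Lemma~\ref{lem: JLBP is tame}. The only cosmetic difference is that you split the embedding error $\|Tf^{(u)}\|_2^2-F_2^{(u)}$ out as a fourth term, whereas the paper folds it into its AMS step.
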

\begin{proof}
We first write $|Y_t-F_{2}^{(t)}| \leq |Y_t - Y_u| + |Y_u-F_{2}^{(u)}| + |F_2^{(t)}-F_2^{(u)}|$.
It follows from the arguments of AMS and the fact that $T$ is a $(1\pm\epsilon/8)$-embedding that, with an appropriate choice for $N=O(\frac{1}{\delta\epsilon^2})$, we arrive at \begin{equation}\label{eq: AMS for F2always}
\Pr(|Y_u-F_2^{(u)}|\leq \frac{\epsilon}{4} F_2^{(u)})\geq 1-\delta.
\end{equation}

For the third term we have $F_2^{(t)}\geq F_2^{(u)}$ because $t\geq u$ and the stream is insertion only.
We can bound the difference with
\begin{align*}
F_2^{(t)} 
= \| f^{(u)} + f^{(u:t)}\|_2^2 
\leq \|f^{(u)}\|_2^2\left(1 + \frac{\|f^{(u:t)}\|_2}{\|f^{u}\|_2}\right)^2 
\leq F_2^{(u)}(1+\frac{\epsilon}{4}),
\end{align*}
where the last inequality follows because $C\geq 2$ and $\epsilon\leq 1/2$.

For the first term, since $X_i^{(t)}$, $i\in[n]$, are independent $d$-dimensional Bernoulli processs, it follows that
\[X^{(t)} = \frac{1}{\sqrt{N}}((X_{1}^{(t)})^T,(X_{2}^{(t)})^T,\ldots,(X_{N}^{(t)})^T)^T\] 
is an $Nd$-dimensional Bernoulli process.
By Lemma~\ref{lem: JLBP is tame} and due to the fact that $X^{(t)}$ can be represented as an output of JLBP procedure, the process $X^{(u:t)} = X^{(t)}-X^{(u)}$, is a tame process, so with probability at least $1-\delta$, for all $u\leq t\leq v$ we have
\[\|X^{(u:t)}\|_2^2 \leq C^2 \sum_{j=1}^n (f_{j}^{(u:v)})^2.\]
Therefore, assuming the inequality inside~\eqref{eq: AMS for F2always},
\begin{align*}
Y_t = \|X^{(u)} + X^{(u:t)}\|_2^2
 &\leq Y_u\left(1 + \frac{\|X^{(u:t)}\|_2}{\|X^{(u)}\|_2}\right)^2
 \leq Y_u\left(1 + \frac{\sqrt{1+\eps}}{\sqrt{1-\eps}}\frac{\|F^{(u:t)}\|_2}{\|F^{(u)}\|_2}\right)^2\\
&\leq Y_u\left(1+\frac{2\epsilon}{16C}\right)^2
\leq F_2^{(u)}(1+\epsilon/4),
\end{align*}
where the last inequality follows because $C\geq 2$ and $\epsilon\leq 1/2$.
The reverse bound $Y_t\geq F_2^{(u)}(1-\epsilon/4)$ follows similarly upon applying the reverse triangle inequality in place of the triangle inequality.

With probability at least $1-2\delta$,
\[|Y_t-F_{2}^{(t)}| \leq |Y_t - Y_u| + |Y_u-F_{2}^{(u)}| + |F_2^{(t)}-F_2^{(u)}|\leq \epsilon F_2^{(u)}\leq \epsilon F_2^{(t)}.\]
\end{proof}

\begin{theorem}
Let $S$ be an insertion only stream and, for $t=1,2,\ldots,m$, let $F_2^{(t)} = \sum_{i=1}^n (f_{i}^{(t)})^2$ and let $\hat{F}_2^{(t)}$ be the value that is output by Algorithm~\ref{algo: F2Always}.
Then
\[P(|\hat{F}_2^{(t)}-F_2^{(t)}|\leq \epsilon F_2^{(t)}\text{, for all }t\in[m])\geq 2/3.\]
The algorithm uses $O\left(\frac{1}{\epsilon^2}\log n \left(\log\log n + \log \frac{1}{\eps}\right)\right)$ bits of space.
\end{theorem}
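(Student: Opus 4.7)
The idea is to invoke Lemma~\ref{lem: F2always one piece} on a partition of the stream into intervals where $F_2$ grows slowly, and then take medians to boost the per-interval success probability enough to union bound over all intervals.

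First, I would partition times $\{1,2,\ldots,m\}$ into consecutive intervals $I_1,I_2,\ldots,I_K$ using the following scheme. Set $u_0=1$, and let $u_{k+1}$ be the largest time such that $F_2^{(u_{k+1})}\leq (1+\alpha)F_2^{(u_k)}$, where $\alpha=\epsilon^2/(256C^2)$. Because the stream is insertion-only, $F_2^{(t)}$ is nondecreasing and $F_2^{(m)}\leq m^2=\poly(n)$, so the number of such breakpoints is $K=O(\alpha^{-1}\log n)=O(\epsilon^{-2}\log n)$. Crucially, the insertion-only property also yields
\[F_2^{(u_k:u_{k+1})}=\sum_i(f_i^{(u_k:u_{k+1})})^2\leq F_2^{(u_{k+1})}-F_2^{(u_k)}\leq \alpha F_2^{(u_k)},\]
since the cross term $2\langle f^{(u_k)},f^{(u_k:u_{k+1})}\rangle$ is nonnegative. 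Therefore the hypothesis $256C^2 F_2^{(u:v)}\leq \epsilon^2 F_2^{(u)}$ of Lemma~\ref{lem: F2always one piece} is satisfied on every interval $I_k=[u_k,u_{k+1}]$.

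Next, fix one index $r\in[R]$. Lemma~\ref{lem: F2always one piece} (applied with a small constant $\delta$, absorbed into the $O(1/\epsilon^2)$ choice of $N$) implies that a single estimator $Y_r^{(t)}$ is a $(1\pm\epsilon)$-approximation to $F_2^{(t)}$ uniformly on $I_k$ with probability at least $1-2\delta$, say at least $3/4$. The $R$ copies of JLBP used by the algorithm are independent, so by a Chernoff bound the median $\hat{F}_2^{(t)}=\mathrm{median}_r\{Y_r^{(t)}\}$ is a $(1\pm\epsilon)$-approximation uniformly on $I_k$ except with probability $e^{-\Omega(R)}$. Choosing $R=\Theta(\log K)=\Theta(\log\log n+\log\epsilon^{-1})$ with a large enough constant makes this at most $1/(3K)$. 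A union bound over all $K$ intervals shows that $\hat{F}_2^{(t)}$ is simultaneously a $(1\pm\epsilon)$-approximation at every time $t\in[m]$ with probability at least $2/3$.

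For space, the Kane--Meka--Nelson embedding $T$ is shared across all $NR$ instances of JLBP and uses $O(\log n\log\log n)$ bits (Theorem~\ref{thm: meka JL}). Each JLBP maintains a $d=O(1)$ dimensional state vector of $O(\log n)$ bit counters, so the $NR$ instances together use $O(NR\log n)=O(\epsilon^{-2}(\log\log n+\log\epsilon^{-1})\log n)$ bits. Derandomizing the Rademacher vectors $Z$ used inside each JLBP proceeds as in Section~\ref{sec:randomness} and does not increase the asymptotic storage. Summing, the total is $O(\frac{1}{\epsilon^2}\log n(\log\log n+\log\frac{1}{\epsilon}))$ bits, as claimed. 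The most delicate step is verifying that the insertion-only property converts the multiplicative growth condition into the additive $F_2^{(u:v)}$ bound required by the single-interval lemma; once this is in hand, the proof is a routine Chernoff-plus-union-bound argument.
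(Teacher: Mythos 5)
Your proposal is correct and follows essentially the same route as the paper: partition time into $K=O(\epsilon^{-2}\log n)$ intervals of $(1+\Theta(\epsilon^2))$-multiplicative $F_2$ growth, convert the multiplicative bound into the additive hypothesis of the single-interval lemma via the insertion-only cross-term (the paper phrases this as $(a-b)^2\le a^2-b^2$ for $0\le b\le a$; your nonnegative-inner-product argument is equivalent), boost by a Chernoff bound on the median of $R=\Theta(\log K)$ independent estimators, and union bound over intervals. The only detail the paper makes explicit that you leave implicit is that the shared Kane--Meka--Nelson matrix must itself be a good embedding, which it handles by conditioning on a $0.99$-probability event and folding the failure into the final union bound; your space accounting and the rest match the paper.
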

\begin{proof}
By Lemma~\ref{thm: meka JL}, the (single) matrix used by all instances of JLBP is a $(1\pm\epsilon/3)$-embedding with probability at least $0.99$, henceforth assume it is so.
Let $q_0=0$ and 
\[q_i = \max_t\left\{t\;|F_2^{(t)} \leq (1+\frac{\eps^2}{256C^2})^i\right\},\]
until $q_K = m$ for some $K$.
Notice that $K=O(\frac{1}{\epsilon^2}\log n)$.
Here, $C$ is the constant from Definition~\ref{def: tame}.

By definition of $q_i$ and using the fact that $(a-b)^2 \leq a^2 - b^2$ for real numbers $0\le b\le a$ we have $F_2^{(q_i:q_{i+1})}\leq (F_2^{(q_{i+1})}-F_2^{(q_i)})\leq \frac{\eps^2}{256C^2}F_2^{(q_i)}$.

Applying Lemma~\ref{lem: F2always one piece} with $\delta=1/10$, we have, for every $r\in[R]$ and $i\geq 0$ that
\[P(|Y_r^{(t)} - F_2^{(t)}| \leq \epsilon F_2^{(t)},\text{ for all }q_i\leq t\leq q_{i+1})\geq 0.8.\]
Thus, by Chernoff bound, the median satisfies
\[P(|\hat{F}_2^{(t)} - F_2^{(t)}| \leq \epsilon F_2^{(t)},\text{ for all }q_i\leq t\leq q_{i+1})\geq 1- e^{-R/12} \geq 1-\frac{1}{4K},\]
by our choice of $R = 12\log{4K} = O(\log(\epsilon^{-2}\log{n}))$.
Thus, by a union bound over all of the intervals and the embedding $T$ we get
\[P(|\hat{F}_2^{(t)} - F_2^{(t)}| \leq \epsilon F_2^{(t)},\text{ for all }t\in[m])\geq \frac{2}{3},\]
which completes the proof of correctness.

The algorithm requires, for the matrix $T$, the JL transform of Kane, Meka, and Nelson~\cite{kane2011almost} with a seed length of $O(\log(n)\log(\frac{1}{\epsilon}\log{n}))$ bits, and it takes only $O(\log(n/\epsilon))$ bits of space to compute any entry of $T$.
The algorithm maintains $NR=O(\epsilon^{-2}\log(\frac{1}{\epsilon}\log{n}))$ instances of \textsc{JLBP} which each requires $O(\log n)$ bits of storage for the sketch and random bits.
Therefore, the total storage used by the algorithm is $O(\epsilon^{-2}\log(n)\log(\frac{1}{\epsilon}\log{n}))$.
\end{proof}
This immediately implies Theorem~\ref{thm: f2always}.

%\section{\texorpdfstring{$F_2$}{F2} additive error estimate of \texorpdfstring{$\ell_{\infty}$}{l-infinity}}\label{sec:linf}
%\input{Sec-Linf}
}
\section*{Acknowledgements}
The authors would like to thank Raghu Meka for suggesting the JL strategy to reduce the number of random bits needed for the Bernoulli processes.
We would also like to thank several anonymous reviewers for their careful reading of an earlier version of this paper and the valuable suggestions that they made.

%\bibliography{Bibliography}
\bibliographystyle{plain}
\bibliography{main}

\end{document}